\newcommand{\dimMatCom}[0]{K}
\newcommand{\dimMatF}[0]{n}
\newcommand{\dimMatC}[0]{m} 
\newcommand{\matF}[0]{\ensuremath{\mathbf{A}}}
\newcommand{\matC}[0]{\ensuremath{\wt{\matF}}}
\newcommand{\matLinCoarse}[0]{\ensuremath{\mathbf{W}}}
\newcommand{\vecLinCoarse}[0]{\ensuremath{\mathbf{w}}}
\newcommand{\graphG}[0]{\ensuremath{\mathcal{G}}}
\newcommand{\graphF}[0]{\ensuremath{\mathcal{G}_\mathrm{fine}}}
\newcommand{\graphC}[0]{\ensuremath{\mathcal{G}_\mathrm{coarse}}}
\newcommand{\matCom}[0]{\ensuremath{\mathbf{Q}}}
\newcommand{\comFine}[0]{\ensuremath{\mathbf{P}}}
\newcommand{\syncComAs}[0]{\ensuremath{\mathbf{\Phi}}}
\newcommand{\protoCovSize}[0]{\ensuremath{r}}
\newcommand{\stateVecF}[0]{\ensuremath{\mathbf{x}}}
\newcommand{\stateVecC}[0]{\ensuremath{\wt{\stateVecF}}}
\newcommand{\controlVec}[0]{\ensuremath{\mathbf{u}}}
\newcommand{\CtrlMatF}[0]{\ensuremath{\mathbf{B}}}
\newcommand{\genMat}[0]{\ensuremath{\mathbf{Z}}}
\newcommand{\genVec}[0]{\ensuremath{\mathbf{z}}}
\newcommand{\genMatOth}[0]{\ensuremath{\mathbf{Y}}}
\newcommand{\genVecOth}[0]{\ensuremath{\mathbf{y}}}
\newcommand{\scalingSBM}[0]{\ensuremath{\rho_{\dimMatF}}}
\newcommand{\specRadius}[0]{\ensuremath{\rho}}
\newcommand{\Oscale}[0]{\ensuremath{\mathcal{O}}}
\newcommand{\constFine}[0]{\ensuremath{\epsilon}}
\newcommand{\constCoarse}[0]{\ensuremath{\constFine}} 
\newcommand{\diagComSize}[0]{\ensuremath{\mathbf{D}}}
\newcommand{\comSizes}[0]{\ensuremath{\mathbf{d}}}
\newcommand{\matFnom}[0]{\ensuremath{\matF_\mathrm{nom}}}
\newcommand{\matFBarNom}[0]{\ensuremath{\bar{\matF}_\mathrm{nom}}}
\newcommand{\matCnom}[0]{\ensuremath{\matC_\mathrm{nom}}}
\newcommand{\matCBarNom}[0]{\ensuremath{\bar{\matC}_\mathrm{nom}}}
\newcommand{\overalConstTemp}[0]{\ensuremath{\iota}}
\newcommand{\overalConstTempCoarse}[0]{\ensuremath{\tilde{\overalConstTemp}}}
\newcommand{\dirichletPar}[0]{\ensuremath{{\beta}}}
\newcommand{\indexRandomSet}[0]{\ensuremath{\mathcal{I}}}
\newcommand{\hoeffProbConst}[0]{\ensuremath{\delta}}
\newcommand{\hoeffCoeffConst}[0]{\ensuremath{\eta}}
\newcommand{\constUnifSampl}[0]{\ensuremath{\eta_\circ}}
\newcommand{\probUnifSampl}[0]{\ensuremath{\delta_\circ}}
\newcommand{\maxFunctionInn}[0]{\ensuremath{{f}}}
\newcommand{\errorMat}[0]{\ensuremath{\mbf{L}}}
\newcommand{\inCtrlMat}[0]{\ensuremath{\mbf{\Upsilon}}}
\DeclareMathSymbol{\shortminus}{\mathbin}{AMSa}{"39}
\newcommand{\genDim}{\ensuremath{\dimMatF}}
\newcommand{\transpose}{\mathsf{T}} 
\newcommand{\bs}{\boldsymbol}
\newcommand{\wt}{\widetilde}
\newcommand{\mbf}{\mathbf}
\newcommand{\trace}{\mathrm{tr}}
\newcommand{\Cgram}{\bs{\mathcal{C}}}
\newtheorem{thm}{Theorem}
\newtheorem{prop}[thm]{Proposition}
\newtheorem{exmp}{Example}
\newtheorem{assump}{Assumption}
\newtheorem{remark}{Remark}
\newtheorem{lemma}{Lemma}
\newtheorem{definition}{Definition}
\title{Controllability of Coarsely Measured Networked Linear Dynamical Systems (Extended Version)\thanks{This paper is based upon
    work supported by by the Natural Sciences and Engineering Research Council (NSERC) of Canada, including through a Discovery Research Grant; the National Science Foundation under awards OAC-1934766,  CCF-2029044, and CCF-2048223; and the National Institutes of Health under the award 1R01GM140468-01.}
}
\author[$\P, \ast$]{Nafiseh Ghoroghchian}
\author[$\dagger$]{Rajasekhar Anguluri}
\author[$\dagger$]{Gautam Dasarathy}
\author[$\P$]{Stark C. Draper}
\affil[$\P$]{ Edward S. Rogers Sr. Dept. of Electrical and Computer Engineering, University of Toronto, Toronto, Canada}
\affil[$\dagger$]{School of Electrical, Computer and Energy Engineering, Arizona State University, Tempe, AZ, USA}
\affil[$\ast$]{Vector Institute for Artificial Intelligence, Toronto, Canada}
\date{\today}
\begin{document}
\maketitle
\begin{abstract}
  We consider the controllability of large-scale linear networked dynamical systems when complete knowledge of network structure is unavailable and knowledge is limited to coarse summaries. 
  We provide conditions under which average controllability 
  of the fine-scale system can be well approximated by average controllability of the (synthesized, reduced-order) coarse-scale system. To this end, we require knowledge of some inherent parametric structure of the fine-scale network that makes this type of approximation possible. Therefore, we assume that the underlying fine-scale network is generated by the stochastic block model (SBM)---often studied in community detection. We then provide an algorithm that directly estimates the average controllability of the fine-scale system using a coarse summary of SBM. Our analysis indicates the necessity of underlying structure (e.g., in-built communities) to be able to quantify accurately the controllability from coarsely characterized networked dynamics. We also compare our method to that of the reduced-order method and highlight the regimes where both can outperform each other. Finally, we provide simulations to confirm our theoretical results for different scalings of network size and density, and the parameter that captures how much community-structure is retained in the coarse summary. 
  \end{abstract}

	\section{Introduction}\label{sec:intro}

	 In this paper we study controllability of networks with discrete-time linear dynamics (hereafter, LTI systems) when our knowledge of system
        structure is limited to coarse summaries. We are motivated by
        myriad real-world settings, ranging from power and water systems to brain networks, where system identification must be
        performed based upon data collected by low-resolution
        instruments unable to probe fine-scale structure.  Our
        motivating real-world example is the human brain.  While efforts are
        under way to produce a map of a canonical human brain, our
        knowledge of the brain as an interconnected, network system is
        not yet to the level of the individual neurons of the entire brain
        \cite{betzel2017multi}.  And yet, modern
        medical technologies aim to control the brain using coarse MRI based network connectivity { \cite{gu2015controllability,karrer2020practical}}.  For example, novel brain implants
        designed for epilepsy patients \emph{steer} the brain away
        from states that correspond to seizures
        \cite{heck2014two}. 
        The inaccessibility of fine-scale network information may result in wrong control decisions that diverge greatly from those derived from the fine-scale network.
        
       Our goal is to quantify the controllability of a fine-scale networked LTI system using its coarse summaries and a few of the fine-scale network properties such as community structure (see below). In the presence of full knowledge of the fine-scale system, existing works address this goal by first obtaining a lower-dimensional (coarse-scale) system using various methods under the umbrella term of model order reduction (MOR). Then, they relate the controllability of this coarse-scale system to that of the original fine-scale system \cite{benner15}. But this approach is inapplicable in settings 
       wherein one does not have access to the fine-scale network. Nonetheless, we may fit a dynamical system for coarse network data and study the controllability of this system as a proxy for the controllability of the fine-scale system. However, it is not clear when such an approximation is sufficiently accurate. In Section \ref{sec: group-vs-coarse}, we study this approximation in detail. 
       The second contrast between classical MOR and our setting is best understood in terms of \textit{active} versus \textit{passive} MOR.
       Classical MOR is active in that it determines how to coarsen the system to yield the best reduction.  But for us,
        our knowledge is restricted by the precision of our
        observations (e.g., owing to the resolution of measurement devices or to anatomical considerations), thereby limiting our ability to devise best reduction schemes. So, passively collected network data is our starting point.
        
        To realize our goal, we consider a suitable controllability Gramian based metric. It is well known that the Gramian based metrics help compare the control effort exerted either by two individual nodes or groups of nodes \cite{pasqualetti2014controllability,yuan2013exact, baggio2022,lindmark2020}. We focus on the latter to address the following question. For certain groups of nodes that we can control globally (i.e., not actuating each node individually) suppose that we 
        have access only to their \emph{coarse summary}; that is, certain combinations of weights of network edges in and across the groups of nodes. Then, we ask which groups influence (or require minimum energy) in controlling the fine-scale system? This joint measuring and actuating is required in epilepsy applications to determine the device (instrument) position to collapse the unstable brain-state oscillations that can lead to seizures {\cite{o2018nurip,kassiri2017closed}}. 
    
To leverage fine-scale network properties (e.g., community structure, sparsity, and node degree) along with coarse summaries, we assume that the fine-scale network is generated by a stochastic block model (SBM). These models find applications in studying communities in social networks and recently a few groups have considered SBMs for network dynamics \cite{schaub2020blind, xing2020community}. { It is known that SBMs naturally allow for generating networks with different community structures (e.g., assortative and dis-assortative) \cite{abbe2017community,betzel2018diversity}. Interestingly, we show that they also naturally allow to study the the relationship between controllability and the \emph{synchronization}}---a measure that quantifies how well the coarse network data are synchronized (or overlapped) with communities. 
The higher the synchronization, the less information the coarse summary contains regarding the fine-scale network (see Section \ref{sec:preliminary}).


To quantify the controllability of the fine-scale LTI system, we consider the \emph{average controllability} metric given by the trace of the controllability Gramian of a discrete-time system with non-negative system matrices \cite{farina2000positive}. Specifically, we characterize the average controllability vector of a set of systems. Each entry of the vector
 corresponds to the average controllability of a system with specific group of input (control) nodes as described above. We then devise two competing schemes (see below) to estimate this average controllability vector and  compare the true and estimated vectors.
Our main contributions are as follows.

		\begin{enumerate}
			\item In Section \ref{sec: group-vs-coarse}, we consider an auxiliary, fictitious, linear dynamical system based on the coarse data, which we term the  \textit{passively reduced-order model (PROM)}.\footnote{The prefix "passive" in  PROM emphasizes that the auxiliary system is not a reduction of the fine-scale model. Rather the auxiliary system is a low-dimensional fictitious model that we fit to the coarse data.} Each  node in the fictitious network corresponding to PROM is associated with a group of  nodes in the fine-scale network that map to the same coarse measurement. We use the average controllability vector of the PROM to approximate that of the fine-scale system. Notice that this implicitly couples measurement and actuation. 
We derive a tight upper bound on the \textit{approximation-error} which goes to zero as the coarse and fine network sizes grow; the network becomes dense; the coverage of the fine network by coarse measurements expands; and the synchronization between the coarse summary data and the underlying community structure increases.  If synchronization is not sufficiently high, the error may not approach zero even as the network size increases.
			\item In Section \ref{sec:est_solution}, we learn the average controllability vector of the fine-scale system directly from the coarse data without considering an auxiliary system. We develop a novel learning-based algorithm with its roots in a mixed-membership method proposed by \cite{mao2020estimating} for unsupervised learning of the parameters and the community structure in an SBM. We derive a tight upper bound on the estimation error and characterize its convergence in terms of network parameters. Using this bound, we show that the estimation error approaches zero as long as the community structure parameter estimates are accurate and the coverage of the fine network by coarse measurements is sufficiently rich. Although the learning-based technique's estimate quality is implicitly dependent on synchronization, it has more resilience than the PROM method in low-synchronized scenarios.
			
			\item In Section \ref{sec:simulations}, we present multiple numerical simulations to show that the estimation errors and their qualitative behavior (in terms of the various network and coarsening parameters) are comparable to the tight error bounds and their qualitative nature we derived in Sections \ref{sec: group-vs-coarse} and \ref{sec:est_solution}. Importantly, consistent with the theory, simulations show that the synchronization has less impact on the estimation error of the learning-based method. Finally, our empirical investigation reveals that both the approaches studied can outperform each other in terms of estimation error. Specifically, the learning-based approach outperforms its counterpart in regions for which the network density and the coarse network size are not too small, and the coarse measurements are not fully synchronized with communities.
		\end{enumerate}

 
\subsection{Related work}
\textit{Controllability of Complex Networks}: Controllability of dynamical systems has a rich, long history since its inception in Kalman's work \cite{kalman1960b}. Several articles and textbook entries were written about its characterizations and manifestations in several applications \cite{klamka2018controllability}. In studies concerning network controllability, the two areas that received great attention are determining (a) the minimal set of control nodes to steer a network to a desired state \cite{pequito2016minimum, olshevsky2014minimal, tzoumas2015minimal, dilip2019, chanekar2021}; and (b) the mimimum control energy associated with a set of (driver) nodes to steer a network to a desired state \cite{vosughi2019target, lindmark2020, baggio2022, pasqualetti2014controllability, klickstein2018control}. The control energy is often quantified using certain scalar metrics of the spectrum of controllability Gramian matrix. Our work falls into the second category and we focus on the average controllability metric.
Capitalizing on the sub-modularity property of several control energy metrics, including average energy, a few studies
developed greedy algorithms to design network topology that has minimum control energy (see \cite{cortesi2014submodularity,  srighakollapu2021optimizing}). Further, in \cite{cortesi14submodular, gu2015controllability}, the authors studied which group of nodes, when actuated as
inputs, can be used to steer the network to an arbitrary target state, and at what cost. Different from the aforementioned works, which focus only on a metric for a group or full set of nodes, we consider a vector of such scalar metrics (specifically, average energy) to study the
comparative influence of different groups of nodes. Ours is the first work to characterize this type of error bounds for the controllability of coarse networks, and we contribute to the nascent field of analyzing controllability/obseravability metrics for random linear systems \cite{preciado2016controllability, bopardikar2021randomized}. 

\textit{MOR for Complex Networks}: There is a matured theory for MOR methods for dynamical systems \cite{benner15, antoulas2005, moore1981principal}. However, MOR methods for network dynamics are limited. We briefly review some of the recent developments and recommend \cite{cheng2021model} for further reading. Methods based on reducing the dimension of individual subsystems in a network were described in \cite{chengrobust2018, MONSHIZADEH20131}, where as methods based on reducing the number of nodes in a network using clustering and projection were described in \cite{jongsmamodel2018, Benner2021, chengreduction2017, Monshizadeh2014}. 
Specifically, the latter methods typically rely on Petrov-Galerkin approximation, where projection matrices are constructed to retain certain structural aspects (e.g., clusters) in the reduced network. Although our coarsening operation relies on a given projection matrix, our goal is not to obtain a reduced-order system that well-approximate, or retains some spatial structure of, the high-dimensional system. Rather we quantify the average controllability of the fine-scale network using a coarse network only. Finally, in the literature, the model reduction error is quantified using the system $H_2$ and $H_\infty$ norms \cite{jongsmamodel2018}. Instead, we use the $\ell_1$-norm to compare the average controllability of a group of nodes in fine-scale and their counterpart in the coarse network. 

\textit{Coarsened SBM as a generative process}:  The
	SBM and its variants provide a powerful modeling
	framework to facilitate fundamental understanding of graph community
	organization and have found several applications, including social and
	power networks (see \cite{dulac2020mixed,funke2019stochastic,abbe2017community} and references therein). Several studies assume the knowledge of fine-scale network, which is a reasonable assumption for networks with a few thousand nodes. Nonetheless, there are several networks (eg., brain networks) with millions of nodes, and not all the network connections are available for a  analyst to do inference (e.g., control or estimation). However, analyst can still have access to summaries of connections (for eg., restrictions in brain signal acquisition from macro-contacts, and imaging data \cite{ghoroghchian2018hierarchical,ghoroghchian2020node}).
While the study of recovering graph structures from coarsened measurements has received some attention in the past \cite{gilbert2004compressing, ahn2012graph, dasarathy2015sketching}, the study of extracting community structure from coarse summaries is
recent. In \cite{ghoroghchian2021graph}, the authors extended the
SBM for community detection \cite{abbe2017community}, to lay out a framework for a coarsened, weighted variant of the SBM. We build off those results
in this paper.

	
  	\noindent \textbf{Notation}: We denote vectors and matrices using bold faced small and upper case letters. For $\genMat=[\genMat_{uv}]\in \mathbb{R}^{\genDim\times \genDim}$, define $\|\genMat\|_\infty=\max_{1\leq u\leq n}\sum_{v=1}^\genDim|\genMat_{uv}|$;  $\|\genMat\|_\mathrm{max}=\max_{u,v}|\genMat_{uv}|$; $\|\genMat\|_{1,1}=\sum_{u=1}^\genDim \sum_{v=1}^\genDim |\genMat_{uv}|$; and $\|\genMat\|_2=\sqrt{\lambda_\mathrm{max}(\genMat^\transpose\genMat)}$, where $\lambda_\mathrm{max}$ is the maximum eigenvalue. Denote the spectral radius of $\mbf{Z}$ by $\rho(\genMat)=\max_i\{|\lambda_i(\genMat)|\}$, where $\lambda_i$ is the $i$-th eigenvalue; $\mathrm{diag}(\genMat)=[\genMat_{11}, \ldots, \genMat_{\genDim\genDim}]^\transpose\in \mathbb{R}^{\genDim}$; and $\mathrm{Diag}(\genMat)$ sets the off-diagonal entries of $\genMat$ to zero. 
  	 The matrix inequality $\genMat_1\leq \genMat_2$ implies element wise inequality. 
  	 Denote $\genDim$ dimensional all ones and all zero vectors by $\mbf{1}_\genDim$ and $\mbf{0}_\genDim$. 
  	   For $\genVec=[\genVec_u]\in\mathbb{R}^\genDim$, define $\mathrm{supp}(\genVec)$ as the set of indices $i$ for which $\genVec_i\ne 0$;  $\|\genVec\|_1=\sum_{u=1}^\genDim |\genVec_u|$.
  	   For a positive integer $\genDim$, we denote $[\genDim]\triangleq\{1,\ldots,\genDim\}$. The
  	 cardinality of a set $\mathcal{V}$ is denoted by $|\mathcal{V}|$.  Vector $\mbf{e}_i$ is a standard basis vector containing all $0$'s except for its $i$-th element that is $1$.
  	   The Hadamard (element-wise) product is denoted by $\odot$. 
  	   We write $f(\genDim)=\Oscale(h(\genDim))$ iff there exist positive reals $c_0$ and $\genDim_0$ such that $|f(\genDim)|\leq c_0 h(\genDim)$ for all $\genDim\geq \genDim_0$. Conversely, $f(\genDim)=\Omega(h(\genDim))$ iff there exist $c_0,\dimMatF_0>0$ such that $\vert f(\genDim)\vert \geq c_0 h(\genDim)$ for all $\genDim\geq \genDim_0$.
  	   $[a,b]$ shows the interval (set of real numbers) between scalar values $a$ and $b$.

	\section{Preliminaries and Problem Statement}\label{sec:preliminary}
	 In this section, we introduce a class of stable LTI systems that evolve on networks with randomly generated edge weights. We then describe how we use the SBM to generate random \emph{fine} networks with community structure, and
how we model the coarse measurements of these networks. We then formulate our problem statement.
	\subsection{Dynamical systems and controllability}\label{sec: linear systems}
 A network is an un-directed graph $\graphG\triangleq(\mathcal{V}, \mathcal{E})$, with node set $\mathcal{V}\triangleq \{1,\ldots,n\}$ and edge set $\mathcal{E}\subseteq \mathcal{V}\times \mathcal{V}$. 
	Each edge $(u,v)\in \mathcal{E}$ has a weight $\mathbf{A}_{uv}=\mathbf{A}_{vu}\in \mathbb{R}$. The \textit{weighted symmetric adjacency matrix} of $\graphG$ is
	$\mathbf{A}\triangleq[\mathbf{A}_{uv}]$, where $\mathbf{A}_{uv}=\mathbf{A}_{uv}=0$ whenever $\mathbf{A}_{uv}\notin \mathcal{E}$. A random network is an un-directed graph where the $[\mathbf{A}_{uv}]$ are random variables. 
	For a network $\graphG$ with adjacency matrix $\matF$, associate state $x_i[t] \in \mathbb{R}$ to the $i$-th node at time $t$. The states evolve according to LTI dynamics as 
	\begin{align}\label{generalLTI}
\left.
\stateVecF[t+1]  = \matF_\mathrm{nom} \stateVecF[t] + {\mathbf{B}} \controlVec[t]\right., \quad \quad \forall\,\,\, t\in \mathbb{N}. 
	\end{align}
	The state $\stateVecF[t]=[x_1[t],\ldots,x_n[t]]^\transpose$ is steered by the input $\mbf{u}[t]\in \mathbb{R}^{n}$. The normalized state adjacency matrix $\matF_\mathrm{nom}\triangleq \matF/{(\constFine +\specRadius(\matF))}$ is such that 
	the spectral radius $\rho(\matF_\mathrm{nom})<1, \forall \constFine>0$; see \cite{karrer2020practical}.
The input matrix $\mbf{B}=\mathrm{Diag}(\mbf{b})$ with $\mbf{b} \in  \{0,1\}^{\dimMatF}$, where we call $\mathcal{K}=\mathrm{supp}(\mbf{b})\subset [\dimMatF]$ the \textit{control node set}, determines which nodes receive control input.
 Note that $\mbf{B}\mbf{u}[t]$ is equal to ${\genMat}\mbf{u}_{\mathcal{K}}[t]$, where ${\genMat}$ is a rectangular matrix composed of columns of $\mbf{B}$ indexed by $\mathcal{K}$, and $\mbf{u}_{\mathcal{K}}$ is a ${\mathcal{K}}$-length vector. We stick with the former notation in \eqref{generalLTI} to enhance readability of notations in the next sections.
For instance, for $\mbf{B}=\mathrm{Diag}(\mbf{1}_{n_1},\mbf{0}_{n-n_1})$, the input enters the network through control nodes set $\mathcal{K} =\{1,\ldots,n_1\}$. 
    
  \emph{Average controllability} is a widely studied metric of controllability {that quantifies the difficulty of controlling a network \cite{muller1972analysis,pasqualetti2014controllability,gu2015controllability}
    defined as }$\trace\left[\Cgram_T(\matF_\mathrm{nom}, \mbf{B})\right]$, where
	\begin{align}\label{gramianFine_def}
		 \Cgram_T(\matF_\mathrm{nom}, \mbf{B}) = \sum_{\ell=0}^{T-1} (\matF_\mathrm{nom})^\ell \CtrlMatF\CtrlMatF^{\transpose} (\matF_\mathrm{nom})^\ell
	\end{align}
	is the $T$-step controllability Gramian of \eqref{generalLTI}. By construction $\Cgram_T(\matF_\mathrm{nom}, \mbf{B})\succeq 0$, and hence, $\trace\left[\Cgram_T(\matF_\mathrm{nom}, \mbf{B})\right]\geq 0$.	When the context is clear, we drop the arguments $\matF_\mathrm{nom}$ and $\mbf{B}$ from $\Cgram_T(\matF_\mathrm{nom}, \mbf{B})$ and write $\Cgram_T$. For ease of exposition, we work with the average controllability for the infinite time horizon Gramian $\Cgram= \lim_{T \to \infty} \Cgram_T$,
	which exists when $\rho(\matF_\mathrm{nom})<1$.
	
	The inverse of the average controllability is a lower bound on
	the minimum energy (in expectation) required to drive the system in \eqref{generalLTI} from the origin to any target state on the unit sphere. This energy would be $\int_{S}\stateVecF^\transpose\Cgram^{-1} \stateVecF\,d\stateVecF/\int_{S} \,d\stateVecF=n^{-1}\trace(\Cgram^{-1})$, where $S=\{\stateVecF:\|\stateVecF\|_2=1\}$ \cite{muller1972analysis,baggio2022}. It follows that $\trace(\Cgram^{-1}) \geq 1/\trace(\Cgram)$. The tightness of the bound is numerically explored in \cite{yan2012controlling, karrer2020practical}. The higher $\trace(\Cgram)$ is for a given set of control nodes (indexed by the non-zero columns in $\mbf{B}$), the smaller the average energy, and the lower the influence that set has on the network. In this light, $\trace(\Cgram)$ can be interpreted as the overall network controllability proxy in all directions in the state space \cite{summers2015submodularity}. Owing to the above discussion and given the ill-posedness of $\Cgram$ for high-dimensional networks, the average controllability provides an easily computable proxy for the average energy \cite{ikeda2018sparsity, pirani2020controllability, srighakollapu2021optimizing}. Furthermore, because $\trace(\Cgram)$ equals the energy of the impulse response (or equivalently, the $H_2$-norm of the network system), the average controllability quantifies the capability of grouped input nodes to drive the network to easy-to-reach state with minimum effort \cite{mcgowan2021controllability, fang2022effects, deng2020controllability}. 
	
	We can interpret the average controllability using the nuclear norm which is the sum of singular values, i.e., the trace of a positive semi-definite matrix \cite{recht2010guaranteed}. The nuclear norm is the convex envelope of the rank function (that is, the number of rank-one matrices needed to obtain a given matrix \cite[page 71]{chi2017convex}). The higher the nuclear norm, meaning the large average controllability, the harder will be to approximate the Gramian or metrics of the Gramian by those of its reduced-order model. This is one reason why our learning-based approach (see Section \ref{sec:est_solution}) that directly estimates the average controllability can perform better.

	\begin{figure*}
		\centering
		\includegraphics[width=0.9\textwidth]{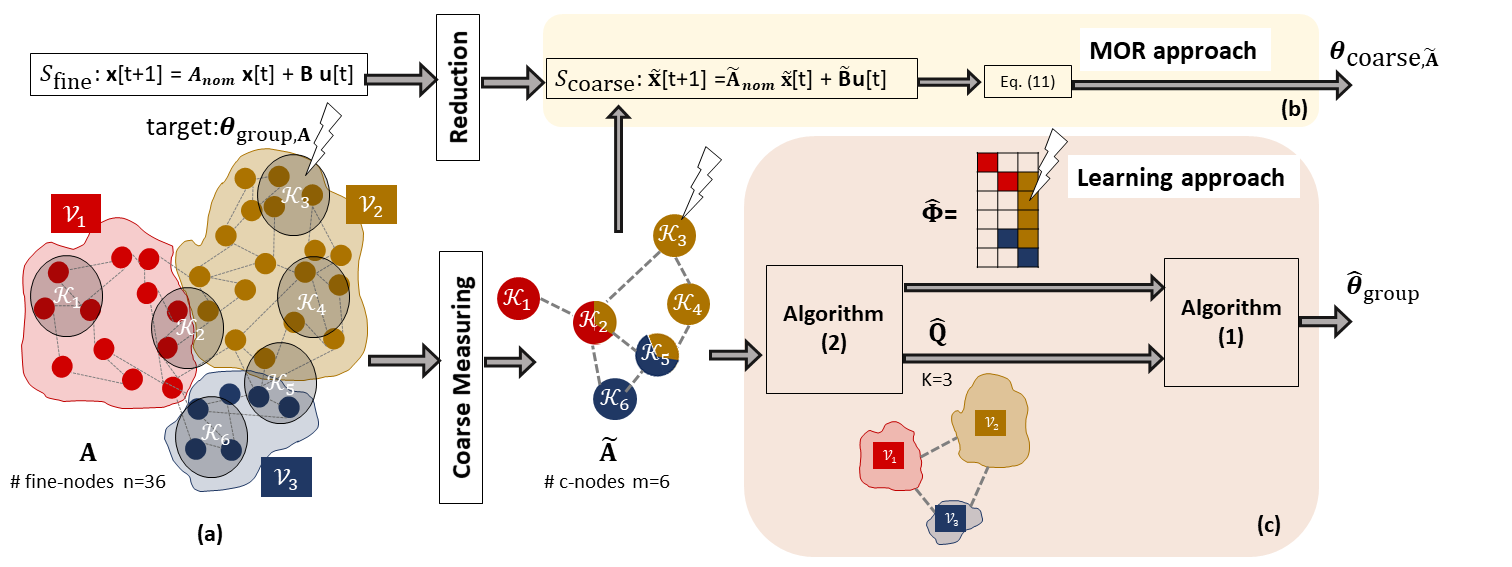}
		\caption{\small{Schematic of PROM- and learning-based approaches for estimating $\bs{\theta}_{\mathrm{group},\matF}$. (a) For $\graphF$ consisting of $\dimMatF=36$ nodes, we have $\dimMatCom=3$ communities ($\mathcal{V}_1$, $\mathcal{V}_2$, $\mathcal{V}_3$) with $\dimMatC=6$ coarse (c)-nodes ($\mathcal{K}_1$,\ldots,$\mathcal{K}_6$), each having coverage size $\protoCovSize=3$. The control node set is $\mathcal{K}_3$. The lightening bolts show an example of where external input can be exerted. {(b) In the PROM-based approach, we infer $\bs{\theta}_{\mathrm{group},\matF}$ via the reduced-order dynamical system $\mathcal{S}_\mathrm{coarse}$. (c) The learning-based approach capitalizes on the mixed-membership (MM) Algorithm \ref{alg:MMcommunityMao} to estimate $\bs{\theta}_{\mathrm{group},\matF}$ directly from $\matC$. This  avoids the need to consider $\mathcal{S}_\mathrm{coarse}$.}}}
		\label{fig:systemModel}	
	\end{figure*} 
	
\subsection{The stochastic block model}\label{subsec:SBMmodel}
	The stochastic block model (SBMs) is a  probabilistic model that produces random graphs with communities.
	We use this model, as hinted in the introduction, to reveal the role of community structure in the quality of controllability estimation from coarse measured graphs. 
	Formally, let $\graphG\triangleq(\mathcal{V},\mathcal{E})$ be the un-directed graph 
	(i.e., the fine graph) 
	with $\dimMatF$ nodes and random edge weights generated according to the general $\mathrm{GSBM}(\dimMatF,\matCom, \comSizes)$: 
	
	
\begin{definition} (\textbf{General SBM}) \label{def:generalSBM}
Given a natural number $\dimMatCom$, vector $\comSizes\in [0,1]^{\dimMatCom}$, and a matrix $\matCom \in [0,1]^{\dimMatCom\times \dimMatCom}$, the $\mathrm{GSBM}(\dimMatF,\matCom, \comSizes)$ is a distribution on a graph $\graphG$ of size $\dimMatF$
such that 1) the $\graphG$'s node set, denoted by $\mathcal{V}$ where $\lvert \mathcal{V} \rvert = \dimMatF$, is
	 partitioned into $\dimMatCom$ disjoint communities $\mathcal{V}=\displaystyle\cup_{k=1}^{\dimMatCom}\mathcal{V}_k$. The communities have relative sizes $\comSizes=[\comSizes_1,\ldots$  $\comSizes_\dimMatCom]$ where $\lvert \mathcal{V}_k\rvert = \dimMatF \comSizes_k$; and 2) 
	two nodes $u\in \mathcal{V}_k$ and $v\in \mathcal{V}_{k'}$ are joined by an edge with weight $\matF_{uv}\in\{0,1\}$ drawn with probability $\matCom_{kk'}$ independently from other edges, for all $k,k' \in [K]$. \qed 
\end{definition}

	The $\mathrm{GSBM}(\dimMatF,\matCom, \comSizes)$ generates a random un-directed graph with $\dimMatCom$ communities with non-identical intra- and cross-community connection probabilities specified by $\matCom\in [0,1]^{\dimMatCom\times\dimMatCom}$. An operational interpretation  of the General SBM is provided by the \emph{community membership matrix} $\comFine \in\mathbb{R}^{\dimMatCom\times \dimMatF}$: 
	\begin{align}\label{def_scaled_comMemberMat}
		\comFine_{kv} = \left\{\begin{array}{lr}
			1 & \mathrm{if }  v\in\mathcal{V}_k \\
			0 & \mathrm{otherwise}.  
		\end{array}\right.
	\end{align}	
For any $ k,k'\in[\dimMatCom]$, it follows that 
	\begin{align}\label{eq: SBM}
		\matF_{uv}\sim
		\mathrm{Bernoulli}(\matCom_{k,k'}), \quad \mathrm{if} \quad  \comFine_{ku}=\comFine_{k'v}=1. 
	\end{align} 
It is immediate that $\comFine{\comFine}^\transpose=\mathrm{Diag}\left(\vert \mathcal{V}_1\vert ,\ldots,\vert \mathcal{V}_\dimMatCom \vert \right)$, where $\lvert\mathcal{V}_k\rvert$ is the size of the $\dimMatCom$-th community. Let $\diagComSize\triangleq\frac{1}{\dimMatF}\comFine\comFine^\transpose=\mathrm{Diag}\left(\comSizes\right)$ be the diagonal matrix of \textit{relative} community sizes.

To facilitate analysis and to harness the power of SBM generated networks, we work with high-dimensional networks. We make some assumptions on the scaling of the parameters of the underlying SBM as $\dimMatF\to \infty$.
\begin{assump}\label{assump: graph sparsity}{(\textbf{SBM scaling} \cite{abbe2017community})} 
	There exists a sequence $\scalingSBM \in (0,1)$ and a constant $K \times K$ matrix $\matCom_{\circ}\geq0$ such that 
	\begin{align}\label{eq: scalingSBM}
		\matCom = \scalingSBM \matCom_{\circ}, 
	\end{align} 
	where $0\leq [\matCom_{\circ}]_{kk'}\leq 1$, and $\scalingSBM\sqrt{\dimMatF}\to\infty$ as $\dimMatF\rightarrow \infty$.
\end{assump}
The parameter $\scalingSBM$ in \eqref{eq: scalingSBM} allows us to regulate the density (or sparsity) of connections in and across communities. For several real-world networks, $\scalingSBM$ typically decreases with $\dimMatF$ \cite{abbe2017community}. 

\begin{remark}
The relative community sizes of the GSBM in Def. \ref{def:generalSBM} are set to $\comSizes$. Another common model is to choose community memberships in an i.i.d. manner from some distribution $\mbf{p}$. In this model, relative community sizes become random with  $\mathbb{E}[\comSizes]=\mbf{p}$. The analysis of this paper will hold for such a model except that, when evoking the expectation of the fine matrix $\matF$, an outer expectation should be added to account for the randomness in community sizes. 
\end{remark} 

\vspace{-3.0mm}
\subsection{Fine Graph and Coarse Measurements}\label{subsec: coarsening}
In this part, we describe the modeling choice and notations for the coarse measuring process.
We assume that the fine graph 
$\graphF$ with adjacency matrix $\matF$ is drawn from the General SBM distribution in Definition \ref{def:generalSBM}. Moreover, \eqref{generalLTI} is the \textit{fine-scale} system dynamics, and we refer to it as $\mathcal{S}_\mathrm{fine}$, i.e.
\begin{align}\label{fine_LTI}
\mathcal{S}_\mathrm{fine}:\,\,&\left.
\stateVecF[t+1]  = \matF_\mathrm{nom} \stateVecF[t] + {\mathbf{B}} \controlVec[t]\right., \quad \quad \forall\,\,\, t\in \mathbb{N}. 
\end{align}
We define the coarse-scale summary of $\matF$ as
\begin{align}\label{linear_coarsening_model}
	\matC \triangleq \matLinCoarse {\matF} \matLinCoarse^\transpose \quad \in \mathbb{R}^{m \times m},
\end{align}
where $\matLinCoarse\in\mathbb{R}^{\dimMatC\times\dimMatF}$ is the \emph{coarse-measurement matrix}  whose rows indicate which parts of the graph are measured. In this paper we assume (a) The $i$-th row of the coarse-measurement matrix $\matLinCoarse$ has $\protoCovSize_i$ equal non-zero terms which sum to $1$. These non-zero indices are the fine nodes that are measured as a group; and (b) $\matLinCoarse\matLinCoarse^\transpose=\mathrm{Diag}([1/\protoCovSize_1,\cdots,1/\protoCovSize_\dimMatC])$, meaning the sets of fine nodes corresponding to each coarse measurement are disjoint. 
The coarsened matrix $\matC$ is the starting point of our analysis. 
	
	The matrix $\matC$ can be interpreted as the adjacency matrix of an un-directed \textit{coarse} graph $\mathcal{G}_\mathrm{coarse}$ with $\dimMatC$ \textit{fictitious} nodes---referred to as the (coarse or) \textit{c-nodes}. This interpretation is helpful as we discuss the network dynamics of $\matC$ in Section \ref{sec: group-vs-coarse}. The regime where the \textit{total coverage size} $\sum_{i\in[\dimMatC]}\protoCovSize_i\ll \dimMatF$ indicates that c-nodes cover the fine graph only sparsely. In other words, there may exist (several) fine nodes that do not contribute to $\matC$ (see Fig.~\ref{fig:systemModel}). Similar coarsening models have been used in the literature (see \cite{ghoroghchian2021graph,dasarathy2015sketching}).

\subsection{Problem Statement}\label{subsec:probstatement}
	Our objective is to quantify the average controllability of $\graphF$ using only $\mathcal{G}_\mathrm{coarse}$ (see \eqref{linear_coarsening_model}), and the knowledge of $\graphF$'s community structure in Definition~\ref{def:generalSBM}. Furthermore, 
	we do not have access to the way the coarse graph is
    acquired at the time of decision making; that is,  the coarse-measurement matrix $\matLinCoarse$ is unknown, meaning we do not know which group of fine nodes are mapped to each c-node. 
	

Let $\matF_\mathrm{nom}=\matF/{(\constFine +\specRadius(\matF))}$ in \eqref{fine_LTI} be such that 
$\matF$ is drawn from General SBM given by \eqref{eq: SBM}. The input matrix $\mathbf{B}_{\mathcal{K}}\triangleq \mathbf{B}$ is diagonal by definition.
Let
$\mathcal{K}_i=\mathrm{supp}(\vecLinCoarse_i)$, for $i \in [\dimMatC]$, be the set (hereafter, group) of nodes coarsened by the $i$-th row $\vecLinCoarse_i$ of $\matLinCoarse$ in \eqref{linear_coarsening_model}.  
As a result, the control node groups $\{\mathcal{K}_1,\ldots,\mathcal{K}_m\}$ match the measured groups.

	Recall that $\vecLinCoarse_i$ is the coarse measuring vector mapping to the $i$-th c-node. For the $\mathcal{K}_i$-th group with $\mathbf{B}_{\mathcal{K}_i}=\mathrm{Diag}(\vecLinCoarse_i^\transpose)$, $i \in [\dimMatC]$, consider the average controllabity metric $\bs{\theta}^{(i)}_{\mathrm{group},\matF}\triangleq\trace[\Cgram(\matF_\mathrm{nom}, \mathbf{B}_{\mathcal{K}_i})]$. Define the $m$-dimensional group average controllability vector as 
	\begin{align}\label{eq: group avg-controllability}
		\bs{\theta}_{\mathrm{group},\matF} &\triangleq\left[\bs{\theta}^{(1)}_{\mathrm{group},\matF},\ldots,\bs{\theta}^{(m)}_{\mathrm{group},\matF}\right]^\transpose.
	\end{align}
	Note that $\bs{\theta}^{(i)}_{\mathrm{group},\matF}\geq 0$, due to the positive semi-definiteness of the Gramian $\Cgram(\cdot)$. This vector summarizes the average controllability metrics 
	for all control nodes sets $\{\mathcal{K}_1,\ldots,\mathcal{K}_m\}$. Thus, $\bs{\theta}_{\mathrm{group},\matF}$ helps us deduce several quantitative properties that are key to being able to control \eqref{fine_LTI} via $\mathbf{B}_{\mathcal{K}_i}$. For example, $\bs{\theta}_{\mathrm{group}}$ tells us which $\mathcal{K}_is$ can drive the network to the desired target state with least control effort. 

	Using only the knowledge of $\matC$ in \eqref{linear_coarsening_model}, our goal is to 
	estimate the vector $\bs{\theta}_{\mathrm{group},\matF}$ in \eqref{eq: group avg-controllability}. We will develop two approaches. The first, \textit{Passively Reduced-Order Model (PROM)}, is based on the traditional \textit{model order reduction} (MOR) approach. Here we rely upon a passively reduced-order auxiliary system  $\mathcal{S}_\mathrm{coarse}$ in \eqref{eq: coarse system}, which is governed by $\matC$, to infer $\bs{\theta}_{\mathrm{group},\matF}$. Our second approach is learning-based. We directly estimate $\bs{\theta}_{\mathrm{group},\matF}$ using a clustering-based mixed-membership community-learning algorithm (see Section \ref{sec:est_solution}). In the latter case, we bypass the intermediate step of dealing with a reduced-order system. Fig.~\ref{fig:systemModel} illustrates the proposed approaches. We use the $\ell_1$-norm difference between the true- and the estimated controllability vectors to compare the performance of these competing methods (see \eqref{eq: error metric MOR}). Our main results in Sections \ref{sec: group-vs-coarse} and \ref{sec:est_solution} are probabilistic because $\mathbf{A}$ is a random matrix.

	We introduce the expected quantities $\bar{\matF}\triangleq \mathbb{E}[\matF]$ and $\bar{\matC}\triangleq\mathbb{E}[\matC]$. We later use these to place some mild regulations on $\mathcal{G}_\mathrm{coarse}$. From \eqref{eq: SBM} and \eqref{linear_coarsening_model}, the expected quantities evaluate to 
	\begin{align}\label{def_coarseMembershipMat}
		\bar{\matF}&= \comFine^\transpose \matCom \comFine \quad \mathrm{and} \quad
		\bar{\matC}= \syncComAs \matCom \syncComAs^\transpose,
	\end{align}
	where $\syncComAs \triangleq\matLinCoarse \comFine^\transpose \in \mathbb{R}^{\dimMatC\times \dimMatCom}$ is the \textit{coarse community membership matrix}, and $\syncComAs_{ik}$ captures how much the $i$-th c-node overlaps with the $\dimMatCom$-th community. 
	To make the analysis less cumbersome, in what follows,  we let the coarsening matrix has fixed row support ($r$).
	The example below illustrates the definition of the matrices $\comFine$, $\diagComSize$, $\syncComAs$, and $\matLinCoarse$. 
	\begin{figure}[htp]
		\centering
		\includegraphics[width=0.25\textwidth]{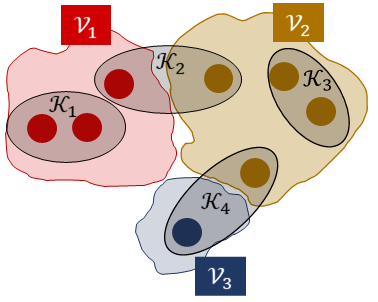}
		\caption{\small{{A simple case of a community-structured network $\graphF$ and its coarse measurements for Example~\ref{example:graph_coarse}. This graph contains $\dimMatF=8$ nodes, $\dimMatCom=3$ communities ($\mathcal{V}_1$, $\mathcal{V}_2$, $\mathcal{V}_3$) and $\dimMatC=4$ coarse (c)-nodes ($\mathcal{K}_1$,\ldots,$\mathcal{K}_4$), each having coverage size $\protoCovSize=2$. Note that c-nodes $1,3$ are perfectly synchronized with one community, while c-nodes $2,4$  overlap with two communities.}}}
		\label{fig.exampleSys}	
	\end{figure} 
	\begin{exmp}\label{example:graph_coarse}
		For the network shown in Fig.~\ref{fig.exampleSys}-(a), the fine community membership matrix is
		\begin{align*}
		    \comFine=\left[\begin{array}{cccccccc}
			    1 & 1 & 1 & 0 & 0 & 0 & 0 & 0\\
			    0 & 0 & 0 & 1 & 1 & 1 & 1 & 0\\
			    0 & 0 & 0 & 0 & 0 & 0 & 0 & 1
			\end{array}\right] \in \lbrace 0,1\rbrace^{3\times 8}.
		\end{align*}
		The relative fine-scale community sizes are
		\begin{align*}
		    \comSizes = [\frac{3}{8}, \frac{1}{2}, \frac{1}{8}], \quad \diagComSize = \mathrm{Diag}(\comSizes)=\left[\begin{array}{ccc}
			    \frac{3}{8} & 0 & 0\\
			    0 & \frac{1}{2} & 0\\
			    0 & 0 & \frac{1}{8}
			\end{array}\right] \in \mathbb{R}^{3\times 3}.
		\end{align*}
		The coarse-measurement matrix where $\protoCovSize=2$ is
		\begin{align*}
		    \matLinCoarse = \left[\begin{array}{cccccccc}
			    \frac{1}{2} & \frac{1}{2} & 0 & 0 & 0 & 0 & 0 & 0\\
			    0 & 0 & \frac{1}{2} & \frac{1}{2} & 0 & 0 & 0 & 0\\
			    0 & 0 & 0 & 0 & \frac{1}{2} & \frac{1}{2} & 0 & 0 \\
			    0 & 0 & 0 & 0 & 0 & 0 & \frac{1}{2} & \frac{1}{2} 
			\end{array}\right] \in \mathbb{R}^{4\times 8}.
		\end{align*}
		The coarse community membership matrix tells us what fractions of each coarse node's group of fine nodes belong to each underlying community,
		\begin{align*}
		    \syncComAs =\matLinCoarse\comFine^\transpose= \left[\begin{array}{ccc}
			    1 & 0 & 0\\
			    \frac{1}{2} & \frac{1}{2} & 0\\
			    0 & 1 & 0 \\
			    0 & \frac{1}{2} & \frac{1}{2}
			\end{array}\right] \in \mathbb{R}^{4\times 3}.
		\end{align*}
		Finally, we note that the rows of $\syncComAs^\transpose$ determine that each community contains what portion of the coarse nodes' groups of fine nodes. Hence,
		\begin{align*}
		    \frac{\syncComAs^\transpose\syncComAs}{\dimMatC}= \left[\begin{array}{ccc}
			    \frac{5}{16} & \frac{1}{16} & 0\\
			    \frac{1}{16} & \frac{6}{16} & \frac{1}{16}\\
			    0 & \frac{1}{16} & \frac{1}{16} 
			\end{array}\right] \in \mathbb{R}^{3\times 3}
		\end{align*} 
		is a measure of the correlation of (and cross-correlation between) the (coarse membership) representations of communities (i.e., the rows of $\syncComAs^\transpose$). Matrix  $\frac{\syncComAs^\transpose\syncComAs}{\dimMatC}$ plays a role in the error bound of Theorem~\ref{thm: group-coarse-error2} and is discussed in Section \ref{sec: group-vs-coarse}. 
Each c-node in $\{1,3\}$ covers one community; each c-nodes in $\{2,4\}$ overlap with two. 
\qed 
\end{exmp}

	\begin{remark}\label{rmk: sync}
		\textbf{(Synchronization of community and coarsening)}: The coarsening operation in Section \ref{subsec: coarsening} is oblivious to the community structure of $\graphF$. Thus, the $i$-th c-node may contain information about multiple communities if $\lvert\mathrm{supp}(\syncComAs_{i}) \rvert>1$ (the subscript denotes the $i$-th row). \textbf{Perfect synchronization} is the situation in which $\lvert\mathrm{supp}(\syncComAs_{i}) \rvert=1$. This occurs when each c-node covers nodes from only one community.
	\end{remark}

	\section{Approximating Group Average controllability: PROM-based Approach}\label{sec: group-vs-coarse} 
	In this section, we define PROM (Passively Reduced-Order Model) as an auxiliary or fictitious network linear dynamics associated with $\graphC$. We then analyze how well the average controllability vector of this reduced system approximates $\bs{\theta}_{\mathrm{group},{\matF}}$ in \eqref{eq: group avg-controllability}. Formally, consider the coarse system $\mathcal{S}_\mathrm{coarse}$ defined by the dynamics
	\begin{align}\label{eq: coarse system}
		\mathcal{S}_\mathrm{coarse}:\,\,&
		\left.
		\wt{\stateVecF}[t+1]  = \wt{\matF}_\mathrm{nom} \wt{\stateVecF}[t] + \wt{\mathbf{B}}\wt{\controlVec}[t]\right. , 
	\end{align} 
	where $\wt{\matF}_\mathrm{nom}=\wt{\matF}/{(\constCoarse+\specRadius(\matC))}$, $\matC$ is given by \eqref{linear_coarsening_model}.
	Note that $\stateVecC[t]\in \mathbb{R}^{m}$ is not a compressed version of the state $\mbf{x}[t]$ in $\mathcal{S}_\mathrm{fine}$. Rather, $\stateVecC[t]$ is a fictitious state controlled by the dynamics of the (scaled) matrix $\wt{\matF}$. This fictitious state is controlled by the input $\wt{\mathbf{B}}\wt{\controlVec}[t] \in \mathbb{R}^{\dimMatC}$. We refer to this  coarse-scale system modeling from the fine-scale system as PROM because the dimension of the coarse system can be much less than that of the fine system, i.e. $ \dimMatC \ll \dimMatF$.
	
	Since each group of nodes that receive control input maps to one c-node,
	$\wt{\mbf{B}}_i=\mbf{e}_i$
	is the coarsened input matrix.
	Also let $\bs{\theta}^{(i)}_{\mathrm{coarse},\matC}\!\triangleq\!\trace[\Cgram(\matC_\mathrm{nom}, \wt{\mbf{B}}_i)]\!\geq\!1$ be the average controllability for the $i$-th node in $\graphC$, which is, by definition, lower-bound by $1$. Define the coarse average controllability vector:
	\begin{align}\label{eq: coarse avg-controllability}
		\bs{\theta}_{\mathrm{coarse},\matC} &\triangleq\left[\bs{\theta}^{(1)}_{\mathrm{coarse},\matC},\ldots,\bs{\theta}^{(m)}_{\mathrm{coarse},\matC}\right]^\transpose\in \mathbb{R}^{m}. 
	\end{align}

	We show in Theorem \ref{thm: group-coarse-error2} that the $\bs{\theta}_{\mathrm{coarse},\matC}$ well approximates $\bs{\theta}_{\mathrm{group},{\matF}}$ under some conditions. To quantify the approximation, we define the $\ell_1$-error metric: 
	\begin{align}\label{eq: error metric MOR}
	\hspace{-2.0mm}	{\Delta(\matF,\matC)}\!\triangleq\! 
	\left\|\frac{\protoCovSize\bs{\theta}_{\mathrm{group},{\matF}}-\mbf{1}_\dimMatC}{\|\protoCovSize\bs{\theta}_{\mathrm{group},{\matF}}-\mbf{1}_\dimMatC\|_1}\!-\!\frac{\bs{\theta}_{\mathrm{coarse},\matC}-\mbf{1}_\dimMatC}{\|\bs{\theta}_{\mathrm{coarse},\matC}-\mbf{1}_\dimMatC\|_1}\right\|_1.
	\end{align}
	We prefer ${\Delta(\matF,\matC)}$ to the usual metric $\|\bs{\theta}_{\mathrm{group},\matF}-\bs{\theta}_{\mathrm{coarse},\matC}\|_1$ for the following reasons. First, the factor of $\protoCovSize$ in the first term of ${\Delta(\matF,\matC)}$ in \eqref{eq: error metric MOR} accounts for the additional scaling by $1/\protoCovSize$ in the fine system input matrix $\mathbf{B}_{\mathcal{K}_i}=\mathrm{Diag}(\vecLinCoarse_i^\transpose)$ (see Lemma~\ref{lemma:groupTheta} in the appendix). Second, the shift by $-1$ discounts the inherent $+1$ shift in the average controllability definition (see \eqref{gramian_to_diag} in the appendix). Third, the denominators in \eqref{eq: error metric MOR} ensure that the error is {scale-invariant}, i.e. the error is unaffected if all elements of either of the controllability vectors are multiplied by a common scalar factor. Moreover we chose $\|.\|_1$ over other vector norms (like $\|.\|_\mathrm{max}$) as it aggregates the errors when choosing a singular or any combinations of c-nodes to stimulate sequentially.

The following matrix will appear in Theorems~\ref{thm: group-coarse-error2} and contributes to upper bounding the controllability estimation errors
\begin{align}\label{upsilon_def}
	\inCtrlMat\triangleq 
	  \diagComSize^{-\frac{1}{2}} \left(\left[\mbf{I}-(\overalConstTemp\diagComSize^{\frac{1}{2}} \matCom_{\circ}\diagComSize^{\frac{1}{2}})^2\right]^{-1} -\mbf{I}\right) \diagComSize^{-\frac{1}{2}} \in \mathbb{R}^{\dimMatCom\times\dimMatCom},
\end{align}
where 
$\overalConstTemp=1/(\frac{\constFine}{\scalingSBM\dimMatF} +\specRadius(\matCom_{\circ} \diagComSize))$ is the normalization factor $1/{(\constFine +\specRadius(\bar{\matF}))}$ in \eqref{fine_LTI} after substituting $\bar{\matF}$ with $\matF$, with $\specRadius(\cdot)$ is the spectral radius, $\scalingSBM$ and $\matCom_{\circ}$ are in Assumption \ref{assump: graph sparsity}.
{
We set the normalized matrix $\genMat=\overalConstTemp\diagComSize^{\frac{1}{2}} \matCom_{\circ}\diagComSize^{\frac{1}{2}}$. Hence, the term $\left[\mbf{I}-(\overalConstTemp\diagComSize^{\frac{1}{2}} \matCom_{\circ}\diagComSize^{\frac{1}{2}})^2\right]^{-1}$ in \eqref{upsilon_def} becomes the resolvent of $\genMat$, i.e. $(z\mbf{I}-\genMat^2)^{-1}$, evaluated at $z=1$. Also, note that matrix $\diagComSize^{\frac{1}{2}} \matCom_{\circ}\diagComSize^{\frac{1}{2}}$ is the intra- and cross-community probability matrix that is re-weighted according to the relative community sizes. Recall that the resolvent captures the direct and indirect effects (via the sum of the powers) of $\diagComSize^{\frac{1}{2}} \matCom_{\circ}\diagComSize^{\frac{1}{2}}$. 
As a result, matrix $\inCtrlMat$ in \eqref{upsilon_def} encapsulates the size-weighted community-connection probability matrix's direct and indirect impacts. For more information on the significance and the properties of resolvents see \cite{dunford1988linear}.}
Later, in Lemma~\ref{lemma:approximate_gAC}, we see how $[\inCtrlMat]_{kk}$ moves in tandem (i.e. increase and decrease together) with both $\bs{\theta}^{(i)}_{\mathrm{group},\bar{\matF}}$ and $\bs{\theta}^{(i)}_{\mathrm{coarse},\bar{\matC}}$ for $i$-th c-nodes that are perfectly synchronized with community $\dimMatCom$.
The aformentioned definitions coupled with Lemma~\ref{lemma:approximate_gAC} in Section~\ref{sec:est_solution}, contribute to the proof of the next Theorem.

	\begin{thm}\label{thm: group-coarse-error2}(\textbf{$\ell_1$ bound on} $\bs{\theta}_{\mathrm{group},\matF}-\bs{\theta}_{\mathrm{coarse},\matC}$): 
		Let ${\Delta(\matF,\matC)}$ be defined as in \eqref{eq: error metric MOR} for $\matF\in \lbrace0,1\rbrace^{\dimMatF\times\dimMatF}$ and $\matC\in \mathbb{R}^{\dimMatC\times\dimMatC}$, and $0<\hoeffProbConst<1$ be a constant. Then, under Assumption~\ref{assump: graph sparsity}
		$\exists \dimMatF_0,\dimMatC_0\in \mathbb{N}$ such that for $\dimMatF>\dimMatF_0,\dimMatC>\dimMatC_0$
		\begin{align*}
		         {\Delta\!(\matF,\!\matC\!)}
		         \!=\! \Oscale \! \! \left( \!\! \frac{\hoeffCoeffConst\!+\!\tilde{\hoeffCoeffConst}}{\scalingSBM}\!\!  +\! \sqrt{\!\frac{\dimMatF}{\dimMatC\protoCovSize}\!}\sqrt{\!\frac{\log(\!\frac{1}{\hoeffProbConst}\!)}{2}} 
		         \!+\! \frac{\|\!\errorMat_{\syncComAs\!,\!\inCtrlMat}\!\|_{\!1}}{\dimMatC} \!+\!\left\|\!\diagComSize\!\shortminus\!\frac{\syncComAs^{\!\transpose}\! \syncComAs}{\dimMatC}\!\right\|_{\!\mathrm{max}\!}\!\!\right)
		\end{align*}
		with probability at least $1-3\hoeffProbConst$. Here 
		$
		    \hoeffCoeffConst = \frac{\sqrt{{\log(\frac{\dimMatF^2}{\hoeffProbConst})}/{(2\dimMatF)}}}{\left\|\matCom \diagComSize \matCom\right\|_{\mathrm{min}}}$; $\tilde{\hoeffCoeffConst} = \frac{\sqrt{{\log(\frac{\dimMatC^2}{\hoeffProbConst})}/{(2\dimMatC)}}}{ \|\matCom\frac{\syncComAs^{\!\transpose}\! \syncComAs}{\dimMatC} \matCom\|_{\mathrm{min}}}$;  $\errorMat_{\syncComAs,\inCtrlMat}=\syncComAs {\mathrm{diag}(\inCtrlMat)}-\mathrm{diag}(\syncComAs\inCtrlMat\syncComAs^\transpose)$; $\diagComSize=(1/\dimMatF)\comFine\comFine^\transpose$; 
		$\inCtrlMat$ is defined in \eqref{upsilon_def}; $\syncComAs$ is defined in \eqref{def_coarseMembershipMat} and $\matCom$ in Def.~\ref{def:generalSBM}. \qed
	\end{thm}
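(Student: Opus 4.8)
The plan is to route both controllability vectors through their expected-adjacency counterparts, which are governed \emph{exactly} by the resolvent matrix $\inCtrlMat$ of \eqref{upsilon_def}, and then to control the three resulting gaps. \textbf{Reduction to resolvent diagonals.} Since $\Cgram(M_\mathrm{nom},\mbf B)=\sum_{\ell\ge 0}M_\mathrm{nom}^\ell \mbf B\mbf B^\transpose M_\mathrm{nom}^\ell$ and $\mbf B=\mathrm{Diag}(\mbf b)$ in both systems, $\trace\,\Cgram=\sum_j(\mbf b\odot\mbf b)_j[(\mbf I-M_\mathrm{nom}^2)^{-1}]_{jj}$. With $\mbf B_{\mathcal K_i}=\mathrm{Diag}(\vecLinCoarse_i^\transpose)$ (so $\mbf b\odot\mbf b$ equals $1/\protoCovSize^2$ on $\mathcal K_i$ and $|\mathcal K_i|=\protoCovSize$) and $\wt{\mbf B}_i=\mbf e_i$, Lemma~\ref{lemma:groupTheta} together with \eqref{gramian_to_diag} yields $\protoCovSize\,\bs\theta^{(i)}_{\mathrm{group},\matF}-1=\frac1\protoCovSize\sum_{j\in\mathcal K_i}[(\mbf I-\matFnom^2)^{-1}-\mbf I]_{jj}$ and $\bs\theta^{(i)}_{\mathrm{coarse},\matC}-1=[(\mbf I-\matCnom^2)^{-1}-\mbf I]_{ii}$, and the same with $\matF,\matC$ replaced by $\bar\matF,\bar\matC$. \textbf{The expected identities.} Using $\bar\matF=\comFine^\transpose\matCom\comFine$, $\comFine\comFine^\transpose=\dimMatF\diagComSize$, and $\matCom=\scalingSBM\matCom_\circ$ gives $\matFBarNom=\frac{\overalConstTemp}{\dimMatF}\comFine^\transpose\matCom_\circ\comFine$; summing the geometric series and symmetrizing via $\diagComSize^{1/2}\matCom_\circ\diagComSize^{1/2}$ shows $[(\mbf I-\matFBarNom^2)^{-1}-\mbf I]_{jj}=\frac1\dimMatF[\inCtrlMat]_{kk}$ for $j\in\setNodeF_k$. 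Since $|\mathcal K_i\cap\setNodeF_k|=\protoCovSize\,\syncComAs_{ik}$, this produces the exact identity $\protoCovSize\,\bs\theta_{\mathrm{group},\bar\matF}-\mbf 1_\dimMatC=\frac1\dimMatF\syncComAs\,\mathrm{diag}(\inCtrlMat)$, which is (the substance of) Lemma~\ref{lemma:approximate_gAC}. The identical computation for $\bar\matC=\syncComAs\matCom\syncComAs^\transpose$, with $\syncComAs^\transpose\syncComAs$ in the role of $\dimMatF\diagComSize$, gives $\bs\theta_{\mathrm{coarse},\bar\matC}-\mbf 1_\dimMatC=\frac1\dimMatC\,\mathrm{diag}(\syncComAs\,\inCtrlMat^{\Sigma}\,\syncComAs^\transpose)$, where $\inCtrlMat^{\Sigma}$ is \eqref{upsilon_def} with $\diagComSize$ replaced by $\Sigma\triangleq\frac1\dimMatC\syncComAs^\transpose\syncComAs$ (and $\overalConstTemp$ by the coarse normalizer $\overalConstTempCoarse$, which under Assumption~\ref{assump: graph sparsity} and $\dimMatC\to\infty$ is negligibly close to $\overalConstTemp$).

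\textbf{Triangle inequality and the deterministic term.} I would split $\Delta(\matF,\matC)$ into (I) the $\ell_1$-renormalized fine vector against its expectation, (II) the $\ell_1$-renormalized expected fine vs.\ expected coarse vectors, and (III) the $\ell_1$-renormalized expected coarse vector against the coarse vector. Term (II) is deterministic: $\ell_1$-renormalization cancels the $1/\dimMatF$ and $1/\dimMatC$, so (II) compares $\syncComAs\,\mathrm{diag}(\inCtrlMat)$ with $\mathrm{diag}(\syncComAs\inCtrlMat\syncComAs^\transpose)$, whose difference is exactly $\errorMat_{\syncComAs,\inCtrlMat}$ (supported on the non-perfectly-synchronized c-nodes), while both renormalizing constants are $\Theta(\dimMatC)$ because $\inCtrlMat=\Theta(1)$. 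Since $\inCtrlMat$ is a fixed matrix-analytic function of $\diagComSize$ away from the singularity $\rho(\genMat^2)=1$ — a distance kept bounded away by Assumption~\ref{assump: graph sparsity} — a resolvent perturbation estimate gives $\|\inCtrlMat^{\Sigma}-\inCtrlMat\|_\mathrm{max}=\Oscale(\|\diagComSize-\Sigma\|_\mathrm{max})$; combining this with the elementary bound $\|\mbf a/\|\mbf a\|_1-\mbf b/\|\mbf b\|_1\|_1\le 2\|\mbf a-\mbf b\|_1/\|\mbf b\|_1$ gives (II) $=\Oscale(\|\errorMat_{\syncComAs,\inCtrlMat}\|_1/\dimMatC)+\Oscale(\|\diagComSize-\Sigma\|_\mathrm{max})$, the last two terms of the theorem.

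\textbf{Concentration (terms (I) and (III)).} For (III), $\matC=\matLinCoarse\matF\matLinCoarse^\transpose$ has entries that are averages of independent Bernoullis, so an entrywise Hoeffding bound with a union bound over the $\dimMatC^2$ entries controls $\|\matC-\bar\matC\|_\mathrm{max}$ and hence $\rho(\matC)$ vs.\ $\rho(\bar\matC)$; the map $M\mapsto\mathrm{diag}((\mbf I-(M/(\constFine+\rho(M)))^2)^{-1})$ is Lipschitz on a neighbourhood of $\bar\matC$ on which the normalized spectral radius stays $\le 1-c$ (guaranteed by Assumption~\ref{assump: graph sparsity} for large $\dimMatC$), transferring the error to $\bs\theta_{\mathrm{coarse},\matC}-\bs\theta_{\mathrm{coarse},\bar\matC}$; dividing by the $\ell_1$-renormalizer and by the signal strength $\|\matCom\Sigma\matCom\|_\mathrm{min}$ (a lower bound on the leading resolvent term) turns this into the \emph{relative} error $\tilde\hoeffCoeffConst/\scalingSBM$, together with the $\sqrt{\dimMatF/(\dimMatC\protoCovSize)}\sqrt{\log(1/\hoeffProbConst)/2}$ contribution coming from averaging over the $\protoCovSize$ fine nodes of each coarse group and over the $\dimMatC$ coarse nodes in the renormalization. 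Term (I) is handled identically with $(\matF,\bar\matF,\dimMatF)$ in place of $(\matC,\bar\matC,\dimMatC)$, producing $\hoeffCoeffConst/\scalingSBM$. A union bound over the three failure events (one per concentration step, each of probability at most $\hoeffProbConst$) gives probability at least $1-3\hoeffProbConst$.

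\textbf{Main obstacle.} The crux is the stability estimate in the last step: one must simultaneously keep $\rho(\matF),\rho(\matC)$ concentrated and a constant factor below the blow-up of the normalized system uniformly in $\dimMatF$, push an entrywise error through the infinite Neumann series $(\mbf I-M_\mathrm{nom}^2)^{-1}$ without losing more than constants, and retain a \emph{relative} rather than absolute error so that the bound is tight. The factor $\scalingSBM$ in the denominators of $\hoeffCoeffConst,\tilde\hoeffCoeffConst$ and the appearance of $\|\matCom\diagComSize\matCom\|_\mathrm{min}$ and $\|\matCom\Sigma\matCom\|_\mathrm{min}$ are precisely the bookkeeping of this relative-error step under the sparse-SBM scaling; verifying that the leading resolvent term is bounded below by these quantities, and that the geometric tail of the Neumann series is dominated by it, is the delicate part of the argument.
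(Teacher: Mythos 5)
Your proposal is correct and follows essentially the same route as the paper: the same three-way triangle-inequality split through the expected systems $\bar{\matF}$ and $\bar{\matC}$, the same closed-form resolvent identities for the expected dynamics (the paper's Lemma~\ref{lemma:approximate_gAC} and Eq.~\eqref{thetaCoarseAtildeBar}, with $\tilde{\inCtrlMat}$ playing the role of your $\inCtrlMat^{\Sigma}$), the same normalized-$\ell_1$ comparison bound (Proposition~\ref{prop:uppBound_relDiff}), the same resolvent-perturbation treatment of the deterministic bias term, and the same Hoeffding-plus-subsampling concentration yielding $\hoeffCoeffConst/\scalingSBM$, $\tilde{\hoeffCoeffConst}/\scalingSBM$, and the $\sqrt{\dimMatF/(\dimMatC\protoCovSize)}$ term. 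The only cosmetic difference is that the paper concentrates the entries of $\matF^2$ and $\matC^2$ directly rather than concentrating $\matC-\bar{\matC}$ and invoking Lipschitz continuity of the normalized resolvent map, but this does not change the substance of the argument.
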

	Theorem \ref{thm: group-coarse-error2} 
	indicates that $\Delta(\matF,\matC)$ approaches zero as: 1) the total coverage proportion, i.e. ${\dimMatC\protoCovSize}/{\dimMatF}$, increases; 2) $\hoeffCoeffConst$ and $\tilde{\hoeffCoeffConst}$,
    that guarantee the bound on $\Delta(\matF,\matC)$ exists with large probability, 
	become negligible; 3)  
	$\frac{1}{\dimMatC}\|\syncComAs {\mathrm{diag}(\inCtrlMat)}-\mathrm{diag}(\syncComAs\inCtrlMat\syncComAs^\transpose)\|_1$ and $\|\diagComSize-\frac{\syncComAs^\transpose \syncComAs}{\dimMatC}\|_{\mathrm{max}}$ decrease. The latter two terms have to do with \textit{synchronization (or overlap)}, i.e. the extent to which coarse measurements are synchronized (or overlapped) with communities (see Remark \ref{rmk: sync}). The first term $\frac{1}{\dimMatC}\|\syncComAs {\mathrm{diag}(\inCtrlMat)}-\mathrm{diag}(\syncComAs\inCtrlMat\syncComAs^\transpose)\|_1$ is equal to zero if perfect synchronization holds.
	The relationship between the second term, $\|\diagComSize-\frac{\syncComAs^\transpose \syncComAs}{\dimMatC}\|_{\mathrm{max}}$, and community overlap is less straightforward. To understand this, note that matrix $\frac{\syncComAs^\transpose \syncComAs}{\dimMatC}$ (see Example~\ref{example:graph_coarse}) quantifies a measure of correlation between (coarse membership) representations of communities. This matrix reveals two natural tradeoffs in the coarsening process. The first is \textit{overlap}: the off-diagonal elements, $[\frac{\syncComAs^\transpose \syncComAs}{\dimMatC}]_{kk'}= \sum_{i\in[\dimMatC]} \syncComAs_{ik}\syncComAs_{ik'} /\dimMatC$ for $k\neq k'$, show the extent of synchronization between coarse measurements and communities (c.f., Remark~\ref{rmk: sync} and Example~\ref{example:graph_coarse}). The second is \textit{balancedness}: the diagonal elements, $[\frac{\syncComAs^\transpose \syncComAs}{\dimMatC}]_{kk} = \sum_{i\in[\dimMatC]} \syncComAs_{ik}^2 /\dimMatC$ for all $k\in[\dimMatCom]$, reflect the relative community sizes \textit{after} coarse measuring. Separately considering diagonal and off-diagonal elements, helps demystify the role of the error term $\|\diagComSize-\frac{\syncComAs^\transpose \syncComAs}{\dimMatC}\|_{\mathrm{max}}$ in Theorem~\ref{thm: group-coarse-error2}:
	\begin{align}\label{DMinusPhiPhiTterms}
        \left\lvert [{\diagComSize} - \frac{\syncComAs^\transpose \syncComAs}{\dimMatC}]_{k,k'} \right\rvert  = \left \lbrace\begin{array}{ll}
             \sum_{i\in[\dimMatC]} \syncComAs_{ik}\syncComAs_{ik'} /\dimMatC & \mathrm{if}\quad k\neq k' \\
             \lvert \comSizes_k-\sum_{i\in[\dimMatC]} \syncComAs_{ik}^2 /\dimMatC\rvert & \mathrm{if}\quad  k=k'.
            \end{array}\right.
    \end{align}
	From \eqref{DMinusPhiPhiTterms} we infer that $\|\diagComSize-\frac{\syncComAs^\transpose \syncComAs}{\dimMatC}\|_{\mathrm{max}}$ approaches zero when c-nodes overlap with fewer communities, \textit{and} c-nodes are more evenly spread over communities. The latter means that larger communities get more coarse measurements, and vice versa. 
	
	Overall, we conclude that the PROM-based estimate $\bs{\theta}_{\mathrm{coarse},\matC}$ will well approximate $\bs{\theta}_{\mathrm{group},\matF}$ when the graph density $\scalingSBM$ and number of fine- and c-nodes are sufficiently large, when coarse measurements are sufficiently synchronized with the communities, and when coarse measurements are distributed across communities in a balanced way.
	In many settings such conditions will be (roughly) satisfied in practice, in particular when communities and coarse measurements are spatially (geographically) localized \cite{sporns2016modular,galhotra2018geometric}. This is because nearby nodes tend to be connected and form communities (e.g., social networks of friends and neighbours, regions with distinct functional and structural features in the brain). Moreover, coarse-measuring is likely local (e.g. parcellation of brain imaging data, electrophysiological recording of the brain, aggregation of geographically adjacent nodes). Also, coarse measurements tend to be uniformly distributed across a network and, hence, larger communities are typically measured more frequently.

	\section{Approximating Group Average Controllability: Learning-based  Approach}\label{sec:est_solution}
	We now present our learning-based approach to estimate $\bs{\theta}_{\mathrm{group},\matF}$.
	Unlike the PROM-based approach that relies on $\mathcal{S}_\mathrm{coarse}$, we directly estimate elements in $\bs{\theta}_{\mathrm{group},\matF}$ using a mixed-membership (MM) community-learning algorithm \cite{mao2020estimating,rossetti2019cdlib,aicher2015learning}. 
	Specifically, we work with the {mixed-membership} algorithm of \cite{mao2020estimating} which is not only numerically efficient but comes with strong theoretical guarantees. 

	As mentioned in Section~\ref{sec: group-vs-coarse}, the following lemma is instrumental in the proof of both of our primary results, Theorems \ref{thm: group-coarse-error2} and \ref{thm:full_est_cntrl_error}. This lemma also provides motivation for our candidate estimator in \eqref{eq: candidate estimator}. For these reasons, we delayed the presentation of Lemma~\ref{lemma:approximate_gAC}  until now. In essence, this lemma characterizes the group-average controllability vector associated with the expected matrix $\bar{\matF}$ in \eqref{def_coarseMembershipMat}. 

    \begin{lemma}\label{lemma:approximate_gAC} 
    \textbf{(Group Controllability of Expected Dynamics):}
    Define ${\bs{\theta}}_{\mathrm{group},\bar{\matF}}$ by replacing $\matF$ in \eqref{eq: group avg-controllability} with the expected matrix $\bar{\matF}$ given by \eqref{def_coarseMembershipMat}. Let $\mbf{\Phi}$ be as in \eqref{def_coarseMembershipMat}, $\comFine$ be in \eqref{def_scaled_comMemberMat}, and $\protoCovSize$ be the coverage size of $\matLinCoarse$ in \eqref{linear_coarsening_model}. Under Assumption \ref{assump: graph sparsity}, we have
		{
			\begin{align}\label{true_groupCntrl}
			    {\bs{\theta}}_{\mathrm{group},\bar{\matF}} = \frac{1}{\protoCovSize}\left(\mbf{1}_{\dimMatC} +  \frac{1}{\dimMatF}\syncComAs {\mathrm{diag}(\inCtrlMat)}\right), 
			\end{align}
			where $\dimMatF$ and $\dimMatC$ are the dimensions of $\matF$ and $\matC$; 
			moreover $\inCtrlMat$ is defined in \eqref{upsilon_def}.
		}\qed
	\end{lemma}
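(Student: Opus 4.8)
The plan is to reduce $\bs{\theta}^{(i)}_{\mathrm{group},\bar{\matF}}$ to a sum of diagonal entries of the resolvent $(\mbf{I}-\matFBarNom^{2})^{-1}$, to evaluate that resolvent in closed form using the factorization $\bar{\matF}=\comFine^{\transpose}\matCom\comFine$, and finally to translate the community overlaps into the entries of $\syncComAs$.

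\textbf{Step 1: trace to resolvent.} Both $\bar{\matF}$ and $\matFBarNom$ are symmetric, and $\CtrlMatF_{\mathcal{K}_i}=\mathrm{Diag}(\vecLinCoarse_i^{\transpose})$ is diagonal with entries $1/\protoCovSize$ on $\mathcal{K}_i$ and $0$ elsewhere, so $\CtrlMatF_{\mathcal{K}_i}\CtrlMatF_{\mathcal{K}_i}^{\transpose}=\protoCovSize^{-2}\sum_{j\in\mathcal{K}_i}\mbf{e}_j\mbf{e}_j^{\transpose}$. By cyclicity of the trace and $\matFBarNom^{\ell}\matFBarNom^{\ell}=\matFBarNom^{2\ell}$,
\[
\bs{\theta}^{(i)}_{\mathrm{group},\bar{\matF}}=\trace\big[\Cgram(\matFBarNom,\CtrlMatF_{\mathcal{K}_i})\big]=\frac{1}{\protoCovSize^{2}}\sum_{j\in\mathcal{K}_i}\sum_{\ell\geq 0}\big[\matFBarNom^{2\ell}\big]_{jj}=\frac{1}{\protoCovSize^{2}}\sum_{j\in\mathcal{K}_i}\big[(\mbf{I}-\matFBarNom^{2})^{-1}\big]_{jj},
\]
the Neumann series converging because $\specRadius(\matFBarNom)=\specRadius(\bar{\matF})/(\constFine+\specRadius(\bar{\matF}))<1$. (This is Lemma~\ref{lemma:groupTheta} specialized to $\bar{\matF}$.)

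\textbf{Step 2: closed form of the resolvent.} Since $\bar{\matF}=\comFine^{\transpose}\matCom\comFine$ and $\matCom\comFine\comFine^{\transpose}=\dimMatF\matCom\diagComSize=\dimMatF\scalingSBM\matCom_{\circ}\diagComSize$ have the same nonzero spectrum, $\specRadius(\bar{\matF})=\dimMatF\scalingSBM\,\specRadius(\matCom_{\circ}\diagComSize)$, hence $1/(\constFine+\specRadius(\bar{\matF}))=\overalConstTemp/(\scalingSBM\dimMatF)$ and $\matFBarNom=\frac{\overalConstTemp}{\dimMatF}\comFine^{\transpose}\matCom_{\circ}\comFine$. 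Applying $\comFine\comFine^{\transpose}=\dimMatF\diagComSize$ repeatedly, an induction on $\ell$ shows that for all $\ell\geq 1$,
\[
\matFBarNom^{2\ell}=\frac{1}{\dimMatF}\,\comFine^{\transpose}\diagComSize^{-1/2}\genMat^{2\ell}\diagComSize^{-1/2}\comFine,\qquad \genMat\triangleq\overalConstTemp\diagComSize^{1/2}\matCom_{\circ}\diagComSize^{1/2},
\]
with $\genMat$ as in \eqref{upsilon_def}. Summing over $\ell$ (legitimate since $\specRadius(\genMat)=\specRadius(\matFBarNom)<1$) and using $\sum_{\ell\geq 1}\genMat^{2\ell}=(\mbf{I}-\genMat^{2})^{-1}-\mbf{I}$ together with the definition of $\inCtrlMat$ in \eqref{upsilon_def},
\[
(\mbf{I}-\matFBarNom^{2})^{-1}=\mbf{I}+\frac{1}{\dimMatF}\,\comFine^{\transpose}\diagComSize^{-1/2}\big((\mbf{I}-\genMat^{2})^{-1}-\mbf{I}\big)\diagComSize^{-1/2}\comFine=\mbf{I}+\frac{1}{\dimMatF}\,\comFine^{\transpose}\inCtrlMat\,\comFine.
\]

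\textbf{Step 3: extract the diagonal and assemble.} Each fine node $j$ lies in a unique community $k(j)$, so in $\big[\comFine^{\transpose}\inCtrlMat\comFine\big]_{jj}=\sum_{k,k'}\comFine_{kj}[\inCtrlMat]_{kk'}\comFine_{k'j}$ only the $k=k'=k(j)$ term survives, giving $\big[\comFine^{\transpose}\inCtrlMat\comFine\big]_{jj}=[\mathrm{diag}(\inCtrlMat)]_{k(j)}$. Hence, using $|\mathcal{K}_i|=\protoCovSize$,
\[
\sum_{j\in\mathcal{K}_i}\big[(\mbf{I}-\matFBarNom^{2})^{-1}\big]_{jj}=\protoCovSize+\frac{1}{\dimMatF}\sum_{k=1}^{\dimMatCom}|\mathcal{K}_i\cap\setNodeF_k|\,[\mathrm{diag}(\inCtrlMat)]_{k}.
\]
From $\syncComAs=\matLinCoarse\comFine^{\transpose}$ and the structure of $\matLinCoarse$ (entries $1/\protoCovSize$ on $\mathcal{K}_i$) one has $\syncComAs_{ik}=|\mathcal{K}_i\cap\setNodeF_k|/\protoCovSize$, so the last sum equals $\protoCovSize\,[\syncComAs\,\mathrm{diag}(\inCtrlMat)]_i$. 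Substituting into Step~1,
\[
\bs{\theta}^{(i)}_{\mathrm{group},\bar{\matF}}=\frac{1}{\protoCovSize^{2}}\Big(\protoCovSize+\frac{\protoCovSize}{\dimMatF}[\syncComAs\,\mathrm{diag}(\inCtrlMat)]_i\Big)=\frac{1}{\protoCovSize}\Big(1+\frac{1}{\dimMatF}[\syncComAs\,\mathrm{diag}(\inCtrlMat)]_i\Big),
\]
and stacking over $i\in[\dimMatC]$ yields \eqref{true_groupCntrl}.

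\textbf{Anticipated main obstacle.} The substantive step is Step~2: pinning down the correct normalization $\overalConstTemp/(\scalingSBM\dimMatF)$ for the expected dynamics, and then reconciling the summed geometric series (which appears naturally in the variable $\matCom_{\circ}\diagComSize$) with the symmetrized resolvent-minus-identity form of $\inCtrlMat$, that is, carefully pushing the similarity $\matCom_{\circ}\diagComSize=\diagComSize^{-1/2}(\diagComSize^{1/2}\matCom_{\circ}\diagComSize^{1/2})\diagComSize^{1/2}$ through the power series. Steps~1 and~3 are routine once $\CtrlMatF_{\mathcal{K}_i}\CtrlMatF_{\mathcal{K}_i}^{\transpose}=\protoCovSize^{-2}\sum_{j\in\mathcal{K}_i}\mbf{e}_j\mbf{e}_j^{\transpose}$ and $\comFine\comFine^{\transpose}=\dimMatF\diagComSize$ are in hand.
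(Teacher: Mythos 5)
Your proof is correct and follows essentially the same route as the paper's: reduce to diagonal entries of the resolvent $(\mbf{I}-\matFBarNom^{2})^{-1}$ via Lemma~\ref{lemma:groupTheta}, collapse the powers of $\bar{\matF}=\comFine^{\transpose}\matCom\comFine$ using $\comFine\comFine^{\transpose}=\dimMatF\diagComSize$ and Lemma~\ref{lemma:spectralAbarEquivalence} to reveal $\inCtrlMat$, and then use the binary structure of $\comFine$ together with $\syncComAs=\matLinCoarse\comFine^{\transpose}$ to extract the diagonal. The only difference is presentational (you work entrywise over $\mathcal{K}_i$ and make the induction explicit, while the paper assembles $\bs{\theta}_{\mathrm{fine},\bar{\matF}}$ in vector form first), so nothing substantive is missing.
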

    Lemma \ref{lemma:approximate_gAC} gives us a formula to compute the group average controllability vector associated with the expected matrix $\bar{\matF}$.
    Recall that $\syncComAs_{i,k}=\lvert \lbrace v: v\in  \mathcal{V}_k\cap \mathrm{supp}(\vecLinCoarse_i) \rbrace\rvert /\lvert \mathrm{supp}(\vecLinCoarse_i)\rvert$ ($k \in [K]$, $i\in [\dimMatC]$) is the fraction of the $i$-th c-node's overlap with community $\dimMatCom$. From Lemma \ref{lemma:approximate_gAC}, it follows that $\bs{\theta}_{\mathrm{group},\bar{\matF}}^{(i)} \propto\sum_{k\in[\dimMatCom]} \syncComAs_{i,k}\inCtrlMat_{kk}$. In view of this observation, we infer that c-nodes that have the most overlap with communities of largest $\inCtrlMat_{kk}$ are the most controllable.
	Moreover, Theorem \ref{thm:est_cntrl_error} (see Appendix) shows that $\Delta(\matF, \bar{\matF})$ can be arbitrarily small with high probability. This fact coupled with Lemma~\ref{lemma:approximate_gAC}, leads us to propose our candidate estimator
	\begin{align}\label{eq: candidate estimator}
		\hat{\bs{\theta}}_{\mathrm{group}}\triangleq \mbf{1}_{\dimMatC} + \hat{\syncComAs} \mathrm{diag}(\hat{\inCtrlMat}), 
	\end{align}
	where 
	\begin{align}\label{hat_defs}
	       \nonumber  \hat{\inCtrlMat}&\triangleq
	    \hat{\diagComSize}^{-\frac{1}{2}} \left(\left[\mbf{I}-(\hat{\overalConstTemp}\hat{\diagComSize}^{\frac{1}{2}} \hat{\matCom}_{\circ}\hat{\diagComSize}^{\frac{1}{2}})^2\right]^{-1} -\mbf{I}\right) \hat{\diagComSize}^{-\frac{1}{2}},
	    \\
	    \hat{\overalConstTemp} & ={\constFine}/{\dimMatF} +\specRadius(\hat{\matCom} \hat{\diagComSize}).
	\end{align}
	Note that the candidate estimator in \eqref{eq: candidate estimator} is similar to the true group controllability vector in \eqref{true_groupCntrl}, but there are two important differences. First, the true terms $\syncComAs,\matCom,$ and $\diagComSize$ are replaced with  their estimated (hatted) counterparts. Second, the constant coefficients $1/\protoCovSize$ and $1/\dimMatF$ in \eqref{true_groupCntrl} are omitted since they affect neither the error metric (see \eqref{errorHat} wherein $1/\protoCovSize$ is compensated and $1/\dimMatF$ cancels out in both the numerator and denominator) nor the order of nodes when assessed based on their controllability value.
	The hatted quantities $\hat{\syncComAs}$ and $\hat{\matCom}$ are obtained from Algorithm \ref{alg:MMcommunityMao}, which takes as input $\matC$ and the number of communities $\dimMatCom$\footnote{Algorithm \ref{alg:MMcommunityMao}, a mixed-membership algorithm, is adapted from \cite{mao2020estimating}. It is a type of spectral clustering method that first performs eigen decomposition of $\matC$ to find the overlapping membership ($\bs{\Phi}_{ik}$) of the fine nodes. A pruning step is included (see steps 4 and 5 in Algorithm \ref{alg:MMcommunityMao}) that improves algorithm performance.}. We obtain $\hat{\diagComSize}$ from Algorithm \ref{alg:modular_coarse_ctrl}.  Importantly, $\hat{\bs{\theta}}_{\mathrm{group}}$ in \eqref{eq: candidate estimator} is obtained from coarsened matrix $\matC$ and not from the fine scale matrix $\matF$. 
	The following assumption is required by \cite{mao2020estimating} for the recovery of $\hat{\syncComAs}, \hat{\matCom}$ using their proposed Algorithm~\ref{alg:MMcommunityMao}.
	\begin{assump}\label{assump: pureNode}
    	\textbf{(Existence of a perfectly-synchronized c-node):}  For all $k\in[K]$, there exist at least one coarse node $i\in[\dimMatC]$ such that $ \syncComAs_{ik}=1$.
    \end{assump}
    This means that $\mathcal{G}_\mathrm{coarse}$ has at least one perfectly-synchronized {(or equivalently ``pure'')} c-node per community $k \in [K]$.
    { In real-world networks  communities and coarse measurements are often  spatially or geographically localized (see  Section~\ref{sec: group-vs-coarse} for a more detailed explanation). Hence, it is reasonable to expect that there is  at least one coarse measurement per community that does not overlap with multiple communities.}
	The next theorem bounds the error of our proposed learning-based estimation in \eqref{eq: candidate estimator} using the following metric:
	\begin{align}\label{errorHat}
	    \widehat{\Delta}(\matC) = 
	    \left\|\frac{\hat{\bs{\theta}}_{\mathrm{group}}-\mbf{1}_\dimMatC}{\|\hat{\bs{\theta}}_{\mathrm{group}}-\mbf{1}_{\dimMatC}\|_1}-\frac{\protoCovSize\bs{\theta}_{\mathrm{group},{\matF}}-\mbf{1}_\dimMatC}{\|\protoCovSize\bs{\theta}_{\mathrm{group},{\matF}}-\mbf{1}_{\dimMatC}\|_1} \right\|_1.
	\end{align}
	\begin{thm}\label{thm:full_est_cntrl_error}
		(\textbf{$\ell_1$ error bound on} $\hat{\bs{\theta}}_{\mathrm{group}}-\bs{\theta}_{\mathrm{group},\matF}$):  Let $\widehat{\Delta}(\matC)$ be defined as in \eqref{errorHat} for $\matF\in \lbrace0,1\rbrace^{\dimMatF\times\dimMatF}$ and $\matC\in \mathbb{R}^{\dimMatC\times\dimMatC}$, and $0<\hoeffProbConst<1$ be a constant. Then, under Assumptions~\ref{assump: graph sparsity},\ref{assump: pureNode},  $\exists \dimMatF_0\in \mathbb{N}$ such that for $\dimMatF>\dimMatF_0$
		{\small
		\begin{align*}
			& \widehat{\Delta}(\matC) 
			  \!=\! \Oscale \!  \left( \! \frac{\hoeffCoeffConst}{\scalingSBM}\!\! +\! \sqrt{\!\frac{\dimMatF}{\dimMatC\protoCovSize}\!}\sqrt{\!\frac{\log(\!\frac{1}{\hoeffProbConst}\!)}{2}\!} \! + \! \frac{\|\!\errorMat_{\syncComAs}\!\|_{1,1}}{\dimMatC}
			 \! + \!\| \!\errorMat_{\matCom}\!\|_{\mathrm{max}} \! +\! \|\! \errorMat_{\diagComSize}\!\|_{\mathrm{max}} \! \right) \! 
		\end{align*}
		}
		holds with probability at least $1-2\hoeffProbConst$, where $\errorMat_{\syncComAs} = \hat{\syncComAs}- \syncComAs $, $\errorMat_{\matCom}=\hat{\matCom}- \matCom$, and  $\errorMat_{\diagComSize}=\hat{\diagComSize}-\diagComSize$; $\hoeffCoeffConst = \frac{\sqrt{{\log(\frac{\dimMatF^2}{\hoeffProbConst})}/{(2\dimMatF)}}}{\left\|\matCom \diagComSize \matCom\right\|_{\mathrm{min}}}$; $\dimMatF$ and $\dimMatC$ are the dimensions of $\matF$ and $\matC$; $\syncComAs$ is defined in \eqref{def_coarseMembershipMat} and $\matCom$ in Def.~\ref{def:generalSBM}; and $\diagComSize=(1/\dimMatF)\comFine\comFine^\transpose$.	 \qed
	\end{thm}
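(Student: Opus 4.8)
The plan is to interpose the \emph{population} controllability vector $\bs{\theta}_{\mathrm{group},\bar{\matF}}$ --- for which Lemma~\ref{lemma:approximate_gAC} supplies the closed form $\protoCovSize\bs{\theta}_{\mathrm{group},\bar{\matF}}-\mbf{1}_\dimMatC=\tfrac{1}{\dimMatF}\syncComAs\,\mathrm{diag}(\inCtrlMat)$ --- between the estimator $\hat{\bs{\theta}}_{\mathrm{group}}$ and the random target $\bs{\theta}_{\mathrm{group},\matF}$. Applying the triangle inequality to \eqref{errorHat} gives
\begin{align*}
  \widehat{\Delta}(\matC)\le E_{\mathrm{alg}}+\Delta(\matF,\bar{\matF}),
\end{align*}
where $E_{\mathrm{alg}}$ is the $\ell_1$ distance between the $\ell_1$-normalized versions of $\hat{\bs{\theta}}_{\mathrm{group}}-\mbf{1}_\dimMatC$ and $\protoCovSize\bs{\theta}_{\mathrm{group},\bar{\matF}}-\mbf{1}_\dimMatC$, and $\Delta(\matF,\bar{\matF})$ is the metric of \eqref{eq: error metric MOR} with $\bs{\theta}_{\mathrm{coarse},\matC}$ replaced by $\protoCovSize\bs{\theta}_{\mathrm{group},\bar{\matF}}$. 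I would dispatch $\Delta(\matF,\bar{\matF})$ by invoking Theorem~\ref{thm:est_cntrl_error}: concentrating $\matF$ about $\bar{\matF}=\comFine^\transpose\matCom\comFine$ by a Hoeffding bound over the $\dimMatF^2$ independent Bernoulli edges (union-bounded, producing the $\log(\dimMatF^2/\hoeffProbConst)$ and the normalizer $\|\matCom\diagComSize\matCom\|_{\mathrm{min}}$) contributes $\hoeffCoeffConst/\scalingSBM$, while accounting for the covered fine nodes contributes $\sqrt{\dimMatF/(\dimMatC\protoCovSize)}\sqrt{\log(1/\hoeffProbConst)/2}$, each on an event of probability at least $1-\hoeffProbConst$; their union yields the $1-2\hoeffProbConst$ of the statement, the subsequent bound on $E_{\mathrm{alg}}$ being deterministic given the realized errors $\errorMat_{\syncComAs},\errorMat_{\matCom},\errorMat_{\diagComSize}$.

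Next I would bound $E_{\mathrm{alg}}$. By \eqref{eq: candidate estimator}, $\hat{\bs{\theta}}_{\mathrm{group}}-\mbf{1}_\dimMatC=\hat{\syncComAs}\,\mathrm{diag}(\hat{\inCtrlMat})$, and since both vectors in $E_{\mathrm{alg}}$ are $\ell_1$-normalized the prefactor $\tfrac{1}{\dimMatF}$ drops out, so with $a=\hat{\syncComAs}\,\mathrm{diag}(\hat{\inCtrlMat})$ and $b=\syncComAs\,\mathrm{diag}(\inCtrlMat)$ the inequality $\bigl\|\tfrac{a}{\|a\|_1}-\tfrac{b}{\|b\|_1}\bigr\|_1\le\tfrac{2}{\|a\|_1}\|a-b\|_1$ reduces matters to lower-bounding $\|a\|_1$ and upper-bounding $\|a-b\|_1$. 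For the lower bound, every row of $\syncComAs=\matLinCoarse\comFine^\transpose$ and of $\hat{\syncComAs}$ sums to $1$, and $\mathrm{diag}(\inCtrlMat)\ge0$ because $\inCtrlMat=\diagComSize^{-1/2}\bigl(\sum_{j\ge1}\genMat^{2j}\bigr)\diagComSize^{-1/2}\succeq0$ (with $\genMat=\overalConstTemp\diagComSize^{1/2}\matCom_{\circ}\diagComSize^{1/2}$ symmetric, as in \eqref{upsilon_def}); hence $\|b\|_1\ge\dimMatC\min_k[\inCtrlMat]_{kk}$, and Assumptions~\ref{assump: graph sparsity} and~\ref{assump: pureNode} make $\min_k[\inCtrlMat]_{kk}$, $\min_k[\hat{\inCtrlMat}]_{kk}$ bounded below, so $\|a\|_1,\|b\|_1=\Omega(\dimMatC\min_k[\inCtrlMat]_{kk})$ and the normalized estimator is well defined. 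For the numerator I split
\begin{align*}
  \|a-b\|_1\le\bigl\|\hat{\syncComAs}\,\mathrm{diag}(\hat{\inCtrlMat}-\inCtrlMat)\bigr\|_1+\bigl\|\errorMat_{\syncComAs}\,\mathrm{diag}(\inCtrlMat)\bigr\|_1 ;
\end{align*}
the second summand is $\le\|\inCtrlMat\|_{\mathrm{max}}\|\errorMat_{\syncComAs}\|_{1,1}$, which after dividing by the denominator gives the $\|\errorMat_{\syncComAs}\|_{1,1}/\dimMatC$ term, and the first summand is $\le\dimMatC\|\mathrm{diag}(\hat{\inCtrlMat}-\inCtrlMat)\|_\infty$ since the rows of $\hat{\syncComAs}$ sum to $1$, which leaves only a perturbation bound for $\inCtrlMat$.

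The perturbation bound for the map $(\matCom,\diagComSize)\mapsto\inCtrlMat$ is the main obstacle. I would telescope over the replacements $\diagComSize^{-1/2}\!\to\!\hat{\diagComSize}^{-1/2}$ in the outer factors and $(\matCom,\diagComSize)\!\to\!(\hat{\matCom},\hat{\diagComSize})$ inside $\genMat$ (the hatted objects $\hat{\inCtrlMat},\hat{\overalConstTemp}$ being those of \eqref{hat_defs}), using the resolvent identity $(\mbf{I}-\hat{\genMat}^2)^{-1}-(\mbf{I}-\genMat^2)^{-1}=(\mbf{I}-\hat{\genMat}^2)^{-1}(\hat{\genMat}^2-\genMat^2)(\mbf{I}-\genMat^2)^{-1}$, the identity $\|(\mbf{I}-\genMat^2)^{-1}\|_2=(1-\rho(\genMat)^2)^{-1}$, Lipschitz continuity of the spectral radius (for $\hat{\overalConstTemp}-\overalConstTemp$) and of $t\mapsto t^{-1/2}$ on $[\min_k\comSizes_k,1]$ (for $\hat{\diagComSize}^{-1/2}-\diagComSize^{-1/2}$), and submultiplicativity of the norms. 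The difficulty is that, under Assumption~\ref{assump: graph sparsity}, $\rho(\genMat)\uparrow1$ as $\dimMatF\to\infty$ (since $\overalConstTemp\to\rho(\matCom_{\circ}\diagComSize)^{-1}$), so $\mbf{I}-\genMat^2$ is \emph{nearly singular} and $\|(\mbf{I}-\genMat^2)^{-1}\|_2$, $\|\inCtrlMat\|_{\mathrm{max}}$ both grow with $\dimMatF$; one must show this growth is exactly offset by the $\dimMatF^{-1}$ prefactor in \eqref{true_groupCntrl} and the $\Omega(\dimMatC)$ lower bound on the $\ell_1$-normalizer, tracking carefully how $\hat{\overalConstTemp},\hat{\matCom},\hat{\diagComSize}$ enter $\genMat$ in \eqref{hat_defs}, so that the resulting contribution is controlled purely by $\|\errorMat_{\matCom}\|_{\mathrm{max}}+\|\errorMat_{\diagComSize}\|_{\mathrm{max}}$. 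Combining this with the earlier two contributions and $\Delta(\matF,\bar{\matF})$, together with the union bound, finishes the proof; everything other than this perturbation estimate is routine.
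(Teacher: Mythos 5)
Your route is the paper's route, step for step: the same triangle inequality interposing $\bs{\theta}_{\mathrm{group},\bar{\matF}}$ (the paper's \eqref{full_est_Err}), the same appeal to Theorem~\ref{thm:est_cntrl_error} to dispatch $\Delta(\matF,\bar{\matF})$ with probability $1-2\hoeffProbConst$ after identifying $\probUnifSampl=\hoeffProbConst$, the same use of the normalized-difference inequality (the paper's Proposition~\ref{prop:uppBound_relDiff}) with the denominator lower-bounded by $\dimMatC\|\mathrm{diag}(\inCtrlMat)\|_{\mathrm{min}}$ via the row-stochasticity of $\syncComAs$, and the same split of the numerator into a $\mathrm{diag}(\hat{\inCtrlMat}-\inCtrlMat)$ piece and an $\errorMat_{\syncComAs}$ piece (the paper's \eqref{thetaHatAbardiff2}; you attach the hats to the opposite factors, which is immaterial).

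The one step you explicitly leave open --- establishing $\|\hat{\inCtrlMat}-\inCtrlMat\|_{\mathrm{max}}=\Oscale(\|\errorMat_{\matCom}\|_{\mathrm{max}}+\|\errorMat_{\diagComSize}\|_{\mathrm{max}})$ --- is exactly what the paper does in \eqref{thetaHatAbardiff6}--\eqref{Ebarmax}: it expands $\hat{\matCom}\hat{\diagComSize}\hat{\matCom}$ and the resolvent to first order in the error matrices ($\mbf{\Gamma}_1,\ldots,\mbf{\Gamma}_5$), controls $\hat{\overalConstTemp}-\overalConstTemp$ by the spectral-radius perturbation bound \eqref{thetaHatAbardiff8}, and then declares $\overalConstTemp$, $\|(\mbf{I}-(\overalConstTemp\diagComSize\matCom_{\circ})^2)^{-1}\|_{\mathrm{max}}$, $\|\hat{\inCtrlMat}\|_{\mathrm{max}}$ and $\|\inCtrlMat\|_{\mathrm{min}}^{-1}$ all $\Oscale(1)$ or $\Omega(1)$. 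Your concern about near-singularity is well founded and is not actually resolved by the paper: under Assumption~\ref{assump: graph sparsity}, $\overalConstTemp\to\specRadius(\matCom_{\circ}\diagComSize)^{-1}$, so $1-\specRadius(\genMat)^2=\Theta(\constFine/(\scalingSBM\dimMatF))$ and the resolvent factors grow like $\scalingSBM\dimMatF$; the paper's $\Oscale(1)$ claims absorb this growth into the hidden constants rather than exhibiting the cancellation you ask for. So you have correctly located the crux, but to reproduce the paper's proof you need only carry out the $\mbf{\Gamma}$ expansion under the same boundedness conventions --- the additional cancellation argument you anticipate is not present in the paper either.
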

	Theorem \ref{thm:full_est_cntrl_error} 
	suggests that for sufficiently large $\dimMatF$ (smaller $\hoeffCoeffConst$ which yields high probabilistic guarantees for the bound on $\widehat{\Delta}(\matC)$) and large total coverage proportion $\frac{\dimMatC\protoCovSize}{\dimMatF}$, the estimate $\hat{\bs{\theta}}_{\mathrm{group}}$ approximates $\bs{\theta}_{\mathrm{group},{\matF}}$ to arbitrary precision { under the following conditions. First, the graph is dense enough (sufficiently large $\scalingSBM$). 
	Second, the coarse community membership matrix is well estimated (smaller $\| \errorMat_{\syncComAs} \|_{1,1}/\dimMatC$). 
	The third condition needed to yield small $\widehat{\Delta}(\matC)$ is that cross-community probability estimation does not suffer from high error (smaller $\| \errorMat_{\matCom}\|_{\mathrm{max}}$), and the relative community sizes estimated from the coarse graph are close to the ones in the fine graph (smaller $\| \errorMat_{\diagComSize}\|_{\mathrm{max}} $)}.
	%
	%
	%
	%
	%
	{ The authors of \cite{mao2020estimating} show that $\| \errorMat_{\syncComAs} \|_{1,1}/\dimMatC$ and $\| \errorMat_{\matCom}\|_{\mathrm{max}}$ in Theorem \ref{thm:full_est_cntrl_error} approach zero under some conditions as $\dimMatC\rightarrow \infty$. While it is unclear whether these theoretical conditions hold in our setting (given our coarsening process), our simulations show that $\widehat{\Delta}(\matC)$ decreases with an increase in $m$. The study and characterization of the behavior of $\| \errorMat_{\syncComAs} \|_{1,1}/\dimMatC$ and $\| \errorMat_{\matCom}\|_{\mathrm{max}}$ with respect to graph scaling (i.e. $\scalingSBM$) is left for future work. }
	
	The result of Theorem \ref{thm:full_est_cntrl_error} is important because one can directly infer the most influential control node groups, $\mathcal{K}_i$ (one with high average controllability) via the most influential $i$-th c-nodes, and vice versa. 
	{  In contrast to the PROM-based error characterization in Theorem~\ref{thm: group-coarse-error2}, the learning-based method's error does not suffer explicitly from non-synchronization of coarse measurements and communities, although the error is implicitly influenced by the estimation quality of $\syncComAs,\matCom,\diagComSize$. Hence, learning-based estimation of group controllability is likely to be more robust than that of the PROM-based when community overlap is present. This hypothesis is validated via simulation in Section~\ref{sec:simulations}.    
	}

	\begin{algorithm}[htp]
		\centering
		\caption{Direct Inference of the Group Average Controllability}\label{alg:modular_coarse_ctrl}
		\begin{algorithmic}[1]
			\REQUIRE estimates $\hat{\syncComAs}$ and $\hat{\matCom}$ from Algorithm ~\ref{alg:MMcommunityMao}, and the number of communities $\dimMatCom$
			\STATE compute $\hat{\diagComSize}=\mathrm{Diag}(\frac{\hat{\syncComAs}^\transpose \mbf{1}_{\dimMatC}}{\mbf{1}_{\dimMatCom}\hat{\syncComAs}^\transpose \mbf{1}_{\dimMatC}})$ and $\hat{\overalConstTemp}=\frac{\constFine}{\dimMatF} +\specRadius(\hat{\matCom} \hat{\diagComSize})$
			\RETURN $\hat{\bs{\theta}}_{\mathrm{group}}\!=\! \mbf{1}_{\dimMatC} \!+\! \hat{\syncComAs} \mathrm{diag}\!\left(\hat{\diagComSize}^{-\frac{1}{2}} \left(\left[\mbf{I}\!-\!(\hat{\overalConstTemp}\hat{\diagComSize}^{\frac{1}{2}} \hat{\matCom}_{\circ}\hat{\diagComSize}^{\frac{1}{2}})^2\right]^{-1} \!\!-\!\mbf{I}\right) \hat{\diagComSize}^{-\frac{1}{2}}) \right) $
		\end{algorithmic}
	\end{algorithm}
	
	\begin{algorithm}[htp]
		\centering
		\caption{Mixed-Membership Community Estimation Algorithm \cite{mao2020estimating}}\label{alg:MMcommunityMao}
		\begin{algorithmic}[1]
			\REQUIRE Coarse adjacency matrix $\matC$, number of communities $\dimMatCom$
			\STATE compute the highest $\dimMatCom$ spectral-decomposition of $\matC$ as $\hat{V}\hat{\Lambda}\hat{V}^\transpose$ and set $\mathcal{S}_{\mathrm{pruned}} = \mathrm{Prune}(\hat{V})$
			\STATE set $X = \hat{V}([\dimMatC]\backslash \mathcal{S}_{\mathrm{pruned}}, :)$ and compute $\mathcal{S}_{\mathrm{pure}} =$ Successive Projection Algorithm$(X^\transpose)$
			\STATE  set $X_{\mathrm{pure}} = X(\mathcal{S}_{\mathrm{pure}}, :)$ and compute un-normalized $\hat{\syncComAs}^{\mathrm{un-nom}} = \hat{V} X_{\mathrm{pure}}^{-1}$
			\STATE $\hat{\syncComAs}^{\mathrm{un-nom}}_{ik}\leftarrow 0\quad \mathrm{if }\hat{\syncComAs}^{\mathrm{un-nom}}_{ik}<e^{-12}, {\small\forall i\in[\dimMatC], k\in [\dimMatCom]}$
			\RETURN $\hat{\syncComAs}=\textrm{Diag}^{-1}(\hat{\syncComAs}^{\mathrm{un-nom}}\mathbf{1}_\dimMatCom)\hat{\syncComAs}^{\mathrm{un-nom}}$ and $\hat{\matCom} = X_{\mathrm{pure}} \hat{\Lambda} X_{\mathrm{pure}}^\transpose$
		\end{algorithmic}
	\end{algorithm} 

\section{Simulations}\label{sec:simulations}

We validate our theoretical results by plotting
the errors $\Delta(\matF,\matC)$ and $\widehat{\Delta}(\matC)$ and show that these errors are comparable to the bounds we obtained in Theorems \ref{thm: group-coarse-error2} and \ref{thm:full_est_cntrl_error} \footnote{All the results presented in this paper are reproducible. The 
theoretical findings are annotated and step-by-step elaborated in the appendix. The data generation process and the parameter values used for numerical simulations are fully explained. In addition, the Python code from which the simulation figures are generated is available upon request.
The code is also on Github and the repository will go public upon submission acceptance.}. We generate  $\matF\!\sim\!\mathrm{GSBM}(\dimMatF,\matCom, \comSizes)$ and then determine $\matC=\matLinCoarse\matF\matLinCoarse^\transpose$. 
For a realization of $\matC$, we obtain the support of each row of $\matLinCoarse$ independently, by following a 3-step process. First, we draw a vector of i.i.d elements $\varrho\in[\dimMatCom]^\dimMatC$ from the distribution with probability vector $[\comSizes_k]_{k=1}^\dimMatCom$. The $i$-th element $\varrho_i\in[\dimMatCom]$ represent the community with which the $i$-th c-node has the most overlap.
Second, we generate a set of random vectors $\mbf{\pi}\sim \mathrm{Dir}(\dirichletPar, \cdots, \underbrace{1}_{\varrho_i\mathrm{th}}, \cdots, \dirichletPar) \in [0,1]^\dimMatCom$, where $\mathrm{Dir}(.)$ is the Dirichlet multivariate probability distribution parameterized by $0<\dirichletPar<1$ \cite{lin2016dirichlet}. $\mathrm{Dir}(\dirichletPar, \cdots, \underbrace{1}_{\varrho_i\mathrm{th}}, \cdots, \dirichletPar)$ outputs a vector of size $\dimMatCom$ whose $\varrho_i$-th component is the largest, the strength of other components is proportionate to $\dirichletPar$, and its elements sum to $1$. The Dirichlet distribution is a distribution over a probability simplex, whose output allocates probabilities to $\dimMatCom$ distinct categories (here communities).
Finally, for each community $\dimMatCom$ where $\mbf{\pi}_k>0$, $\lfloor\protoCovSize\cdot\mbf{\pi}_k\rfloor$ fine nodes are randomly chosen from $\mathcal{V}_k^{\mathrm{non-sel}}$ (i.e., non-selected fine indices in community $\dimMatCom$) and set as the support of $\vecLinCoarse_i$. The chosen indices are removed from $\mathcal{V}_k^{\mathrm{non-sel}}$. This process continues until the support of all $\vecLinCoarse_i$s are selected.
The first step  ensures that the community balancedness criterion illustrated in Section~\ref{sec: group-vs-coarse} is maintained, i.e. the coverage of each community after coarse-measuring remains (in expectation) proportionate to the fine-scale community sizes. The second step controls the level of community overlap (or synchronization); the smaller $\dirichletPar$ is chosen, the more coarse nodes get synced with communities, and vice versa (c.f., Remark~\ref{rmk: sync}).

We set number of fine nodes $\dimMatF=5000$, the overlap parameter $\dirichletPar=0.05$, and the number of communities $\dimMatCom=4$. Finally, for $\matCom=\scalingSBM\matCom_{\circ}\in \mathbb{R}^{K\times K}$, we set $[\matCom_{\circ}]_{kk}=p=0.5$, $[\matCom_{\circ}]_{kk'}=q=0.1$ (for $k\ne k'$). If not specified, the number of c-nodes $\dimMatC=100$, the density scaling parameter $\scalingSBM=0.1$, and the coverage size per c-node $\protoCovSize=10$. 		
Fig.~\ref{fig.errorVSm}-\ref{fig.errorVSoverlap} illustrate the qualitative behavior of the errors with respect to changes in the degree of (non-)synchronization in coarse nodes (i.e., $\dirichletPar$), $\dimMatC$, and $\scalingSBM$.
To fairly compare these errors, we also consider a \textit{baseline} { 
random vector $\bs{\mu} \in[1,2]^{\dimMatC}$ whose elements are generated (drawn) independently from the uniform distribution on the interval $[1,2]$, independently of $\matC$. We then define the \textit{baseline error} as 
{\small $ 
\left\|\frac{\bs{\mu}-\mbf{1}_\dimMatC}{\|\bs{\mu}-\mbf{1}_\dimMatC\|_1}-\frac{\protoCovSize\bs{\theta}_{\mathrm{group},{\matF}}-\mbf{1}_\dimMatC}{\|\protoCovSize\bs{\theta}_{\mathrm{group},{\matF}}-\mbf{1}_\dimMatC\|_1}\right\|_1$.} }
We make the following observations. First, both the learning and PROM-based errors are consistently better than the random baseline. Second, the learning-based approach has consistently smaller error than that of the PROM-based approach for large parametric regimes. Third, Fig.~\ref{fig.errorVSm} shows that all errors monotonically decrease as $m$ increases. Fourth, Fig.~\ref{fig.errorVSdensity} shows that errors decrease as $\scalingSBM$ increases. This is expected because larger values of $\scalingSBM$ result in more distant intra- and cross-community edge densities. This makes community representation extraction and controllability estimation easier. Finally, Fig.~\ref{fig.errorVSoverlap} demonstrates the higher tolerance of the Learning approach to situations wherein coarse measurements are less synchronized (larger $\dirichletPar$), in comparison to the PROM method. From Fig.~\ref{fig.errorVSoverlap} we observe that in the perfectly synchronized regime on the left of the x-axis, PROM performs very well. As the overlap increases up to the point around $\dirichletPar\simeq 0.1$, measurements become less synchronized with communities and both estimation errors goes up. After this point, the overlap becomes so high that the controllability values essentially become closer and closer to one another and result in a slight decrease in the PROM estimation error. These results are consistent with our bounds in Theorems \ref{thm: group-coarse-error2} and \ref{thm:full_est_cntrl_error}.
\begin{figure}
	\centering
	\subfloat[{w.r.t. no. coarse nodes $\dimMatC$.}]{\includegraphics[width=.48\textwidth]{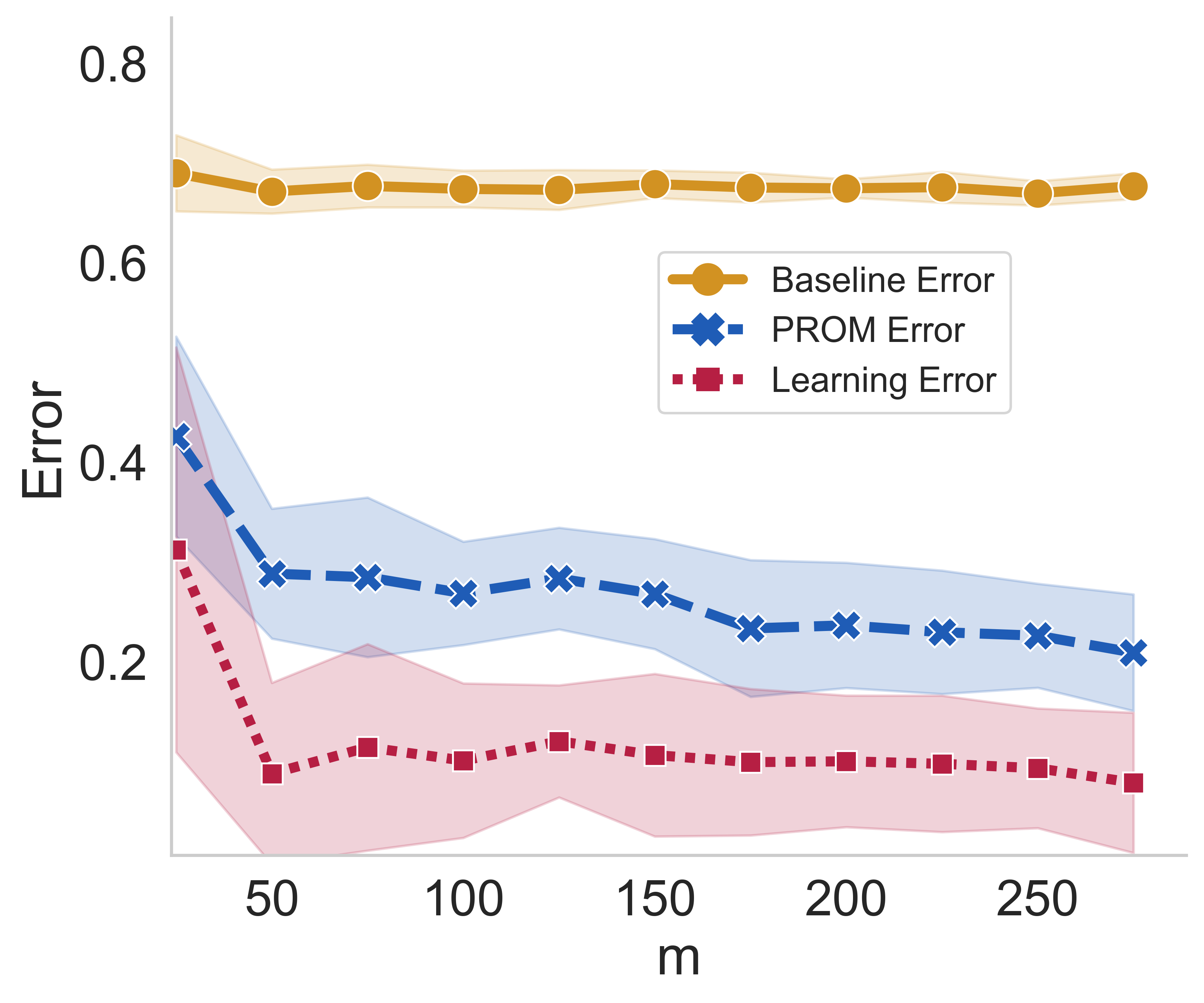}
		\label{fig.errorVSm}}
	\vfil
	\subfloat[{w.r.t. graph density scaling $\scalingSBM$.}]{\includegraphics[width=.48\textwidth]{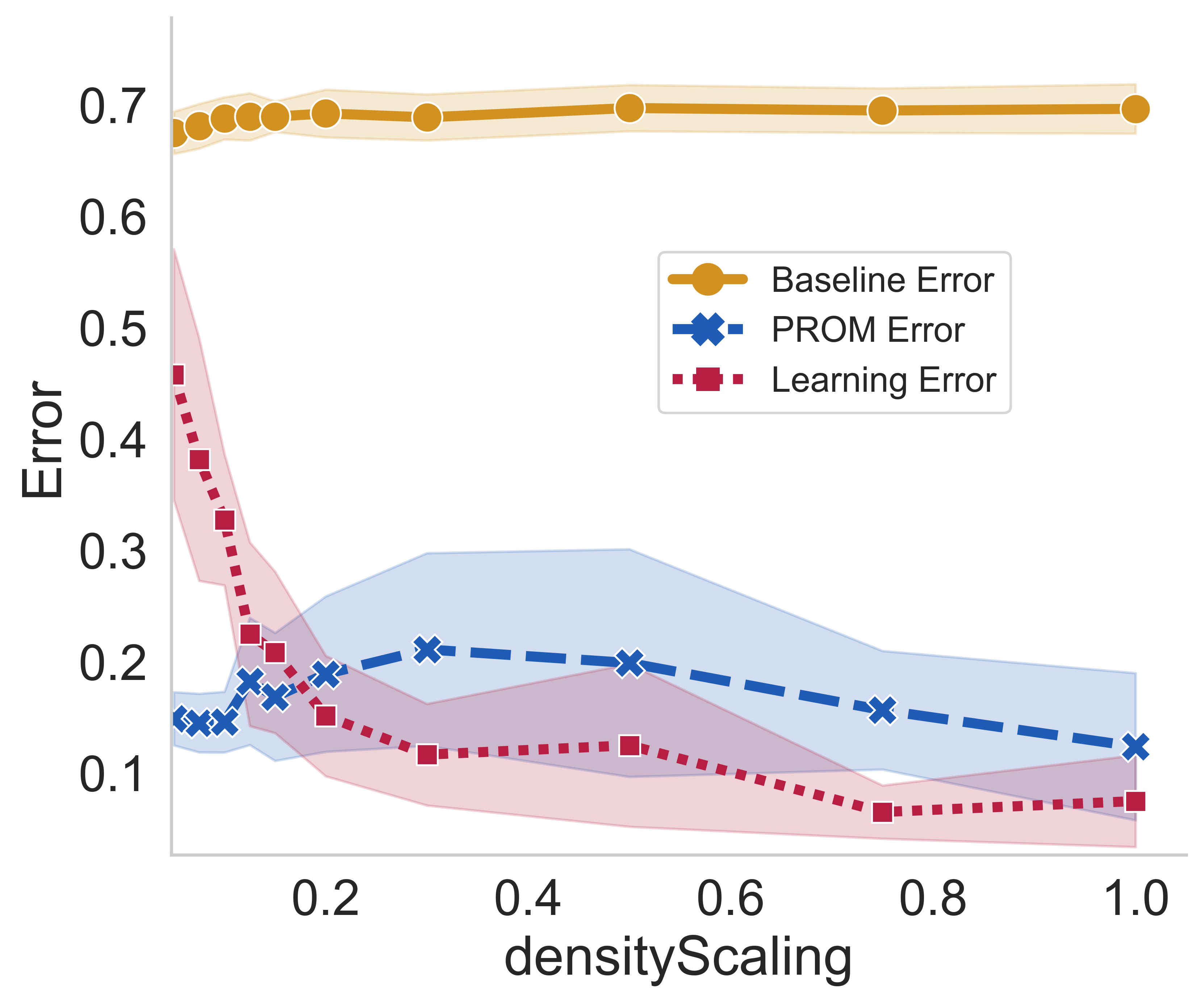}\label{fig.errorVSdensity}}
	\vfil
	\subfloat[{ w.r.t. overlap extent $\dirichletPar$.}]{\includegraphics[width=.48\textwidth]{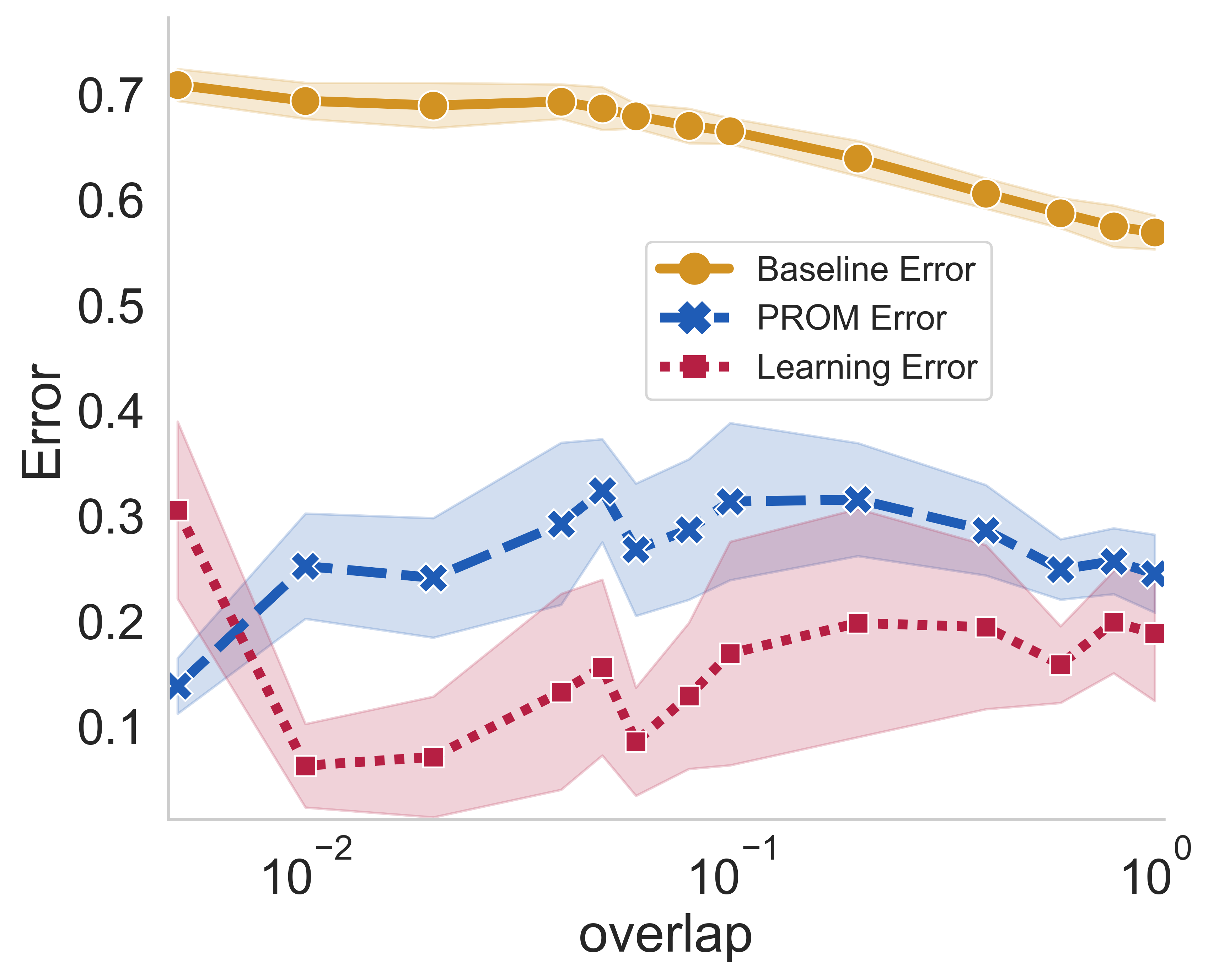}\label{fig.errorVSoverlap}}
	\caption{\small Estimation error based on PROM and Learning approaches. PROM Error=$\Delta(\matF,\matC)$ and Learning Error=$\widehat{\Delta}(\matC)$. The shaded region in the figures represent one standard deviation computed for $20$ independent realizations. }
\end{figure}

		
\section{Conclusion and Future work}\label{sec:futureWork}
{
We introduced a learning-based framework that exploits the power of community-based representation learning to infer average controllability of fine graphs from coarse summary data. We compared the performance of this approach with that of the Passively Reduced-Order Model (PROM) approach. For both these methods, we derived high probability error bounds on the deviation between the error estimate and ground truth, and validated the theory with numerical simulations. Our results highlight the role of fine- and coarse-network sizes, graph density, and measurement synchronization with communities in modulating the estimation errors. Interestingly, for the latter approach, we show that the estimation error decreases with network size albeit the synchronization bias, which is not the case with the PROM-based approach. For future, we plan to implement our theory to study the role of coarsening, community structures, and synchronization aspects on the controllability of brain networks. Extending the proposed tools and techniques to controllability metrics other than average controllability is left for future work.
}

\appendix
\section{Appendix}
{
This section contains proofs for the main results in Sections \ref{sec: group-vs-coarse} and \ref{sec:est_solution}.


	\noindent{\bf Notations}: 
	$\mbf{1}_{\dimMatCom,\dimMatCom}$ denotes an all-one matrix of dimension $\dimMatCom$ by $\dimMatCom$.

\noindent{\bf Useful Matrix Norm Equivalence \cite{liu2009trace,horn2012matrix, chapter5April26folder}}: 
For two real, square, and symmetric matrices $\genMat,\genMatOth \in \mathbb{R}^{\genDim \times \genDim}$, the following equality and inequalities hold:
\begin{enumerate}
	\item \label{item:dffInvToInv} If $\genMat,\genMatOth$ are invertible: $\genMat^{-1} - {\genMatOth}^{-1} = \genMat^{-1}({\genMatOth}-\genMat){\genMatOth}^{-1}$.
	\item \label{item:trace_UB_trace} $\trace(\genMat\genMatOth)\leq \|\genMat\|_2 \trace(\genMatOth)$.
	\item \label{item:resolvNormUB} For all norms:
    if $\| \genMat\| <1 \Rightarrow \|  (\mathbf{I}-\genMat)^{-1}\| <\frac{1}{1-\| \genMat\| }$.
    \item \label{item:maxInftyNorms} $\| \genMat\genMatOth\|_{\mathrm{max}}\!\leq\! \| \genMat\|_\infty \| \genMatOth\|_{\mathrm{max}}$. 
\item \label{item:InftyNormsqrtMax} $ \|\genMat\|_\infty \leq \genDim \| \genMat\|_{\mathrm{max}} $.
\item \label{item:twoNormMaxIneq} $\|\genMat\|_2 \leq \genDim \| \genMat\|_{\mathrm{max}}$.
\item \label{item:maxProdMax} $\|\genMat\genMatOth\|_{\mathrm{max}} \leq \genDim \|\genMat\|_{\mathrm{max}}\|\genMatOth\|_{\mathrm{max}}$
\end{enumerate}
Note that when $\genMat$ is square and symmetric, the 2-norm and the spectral radius coincide and can be used interchangeably, i.e. $\rho(\genMat)=\|\genMat\|_2$.


\begin{prop}\label{prop:hoeffdingIneq}
\textbf{(Hoeffding's Inequality) \cite{boucheron2013concentration}}: {
Let $\{\genVec_\ell\}_{\ell=1}^{|\genVec|}$ be bounded independent random variables. Then the following holds with probability at least $1-\hoeffProbConst$}, for some $\delta >0$: 
\begin{align}
     \left\lvert \sum_{\ell=1}^{\lvert\genVec\rvert} \genVec_\ell\!  -\!\mathbb{E}\!\left[\!\sum_{\ell=1}^{\lvert\genVec\rvert} \genVec_\ell\!\right]\right\rvert
     & \!\leq\! \sqrt{\frac{\lvert\genVec\rvert}{2}\log\left(\frac{1}{\hoeffProbConst}\right)}(\max(\genVec)\!-\!\min(\genVec))  \\
     &
     \leq \sqrt{\frac{\lvert\genVec\rvert}{2}\log\left(\frac{1}{\hoeffProbConst}\right)}\frac{\max(\genVec)\!-\!\min(\genVec)}{\mathbb{E}\left[\sum_{\ell=1}^{\lvert\genVec\rvert} \genVec_\ell\right]} \mathbb{E}\left[\sum_{\ell=1}^{\lvert\genVec\rvert} \genVec_\ell\right]
     \\
     &
     \leq\! \sqrt{\frac{\lvert\genVec\rvert}{2}\log(\!\frac{1}{\hoeffProbConst}\!)}\frac{\max(\genVec)\!-\!\min(\genVec)}{\min(\mathbb{E}\!\left[\sum_{\ell=1}^{\lvert\genVec\rvert} \genVec_\ell\right])} \mathbb{E}\!\left[\!\displaystyle\sum_{\ell=1}^{\lvert\genVec\rvert} \genVec_\ell\!\right]\!,
\end{align}
where $\max(\genVec)$ and $\min(\genVec)$ respectively denote the maximum and minimum possible values of $\{\genVec_\ell\}_{\ell=1}^{\lvert \genVec\rvert}$. \qed
\end{prop}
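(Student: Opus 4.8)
The plan is to read the first displayed inequality off the classical Hoeffding bound, and then to obtain the two subsequent inequalities by elementary rearrangement, the only care needed being the positivity conditions that legitimize the divisions.

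First I would invoke the standard concentration statement for bounded independent variables: if each $\genVec_\ell$ lies in the interval $[\min(\genVec),\max(\genVec)]$, then for every $t>0$,
\[
  \mathbb{P}\!\left(\left\lvert \sum_{\ell=1}^{\lvert\genVec\rvert}\genVec_\ell - \mathbb{E}\Big[\sum_{\ell=1}^{\lvert\genVec\rvert}\genVec_\ell\Big]\right\rvert \ge t\right) \le 2\exp\!\left(-\frac{2t^2}{\lvert\genVec\rvert\,(\max(\genVec)-\min(\genVec))^2}\right),
\]
since each summand varies over a range of width at most $\max(\genVec)-\min(\genVec)$. Equating the right-hand side to $\hoeffProbConst$ and solving for $t$ (equivalently, applying the one-sided version and absorbing the factor $2$) yields $t=\sqrt{\tfrac{\lvert\genVec\rvert}{2}\log(1/\hoeffProbConst)}\,(\max(\genVec)-\min(\genVec))$, which is precisely the first claimed inequality, valid with probability at least $1-\hoeffProbConst$.

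Next, the second inequality is a pure identity: I would multiply and divide the bound by $\mathbb{E}[\sum_\ell \genVec_\ell]$, which is permissible because in every application of this proposition that quantity is a sum of nonnegative Bernoulli-type means and is strictly positive. For the third inequality I would use $1/\mathbb{E}[\sum_\ell \genVec_\ell]\le 1/\min(\mathbb{E}[\sum_\ell \genVec_\ell])$, where the $\min$ ranges over the finite collection of such row- or column-sums that arise when the proposition is applied entrywise to a matrix (as in the proofs of Theorems~\ref{thm: group-coarse-error2} and \ref{thm:full_est_cntrl_error}); since the remaining prefactor $\sqrt{\tfrac{\lvert\genVec\rvert}{2}\log(1/\hoeffProbConst)}\,(\max(\genVec)-\min(\genVec))\,\mathbb{E}[\sum_\ell \genVec_\ell]$ is nonnegative, the direction of the inequality is preserved.

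There is no genuine obstacle here; the statement is essentially a textbook fact repackaged in a form convenient for later use. The only items that warrant attention are bookkeeping ones: verifying $\mathbb{E}[\sum_\ell \genVec_\ell]>0$ so that the divisions are well defined, pinning down the index set over which the final $\min$ is taken, and noting the harmless factor-of-$2$ discrepancy between the one- and two-sided forms of Hoeffding's inequality (it only affects a constant inside the logarithm and changes none of the order-of-magnitude conclusions downstream).
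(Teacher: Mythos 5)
Your proposal is correct. The paper offers no proof of this proposition at all---it is stated with only a citation to the reference---and your derivation is the standard one: the first line is classical Hoeffding for variables confined to $[\min(\genVec),\max(\genVec)]$, and the second and third lines are, respectively, a multiply-and-divide identity and the bound $1/\mathbb{E}[\sum_\ell\genVec_\ell]\le 1/\min(\mathbb{E}[\sum_\ell\genVec_\ell])$, both legitimate under the positivity conditions you note. You also correctly flag the two genuine bookkeeping issues in the statement as written: the two-sided form should carry $\log(2/\hoeffProbConst)$ rather than $\log(1/\hoeffProbConst)$ (a constant-factor slack that does not affect any downstream order-of-magnitude claim), and the $\min$ in the last line only makes sense relative to the family of entrywise applications in which the proposition is later invoked.
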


We need the following results to prove Lemma~\ref{lma: Gramian differences} and Theorem~\ref{thm:full_est_cntrl_error}. 
\begin{prop}\label{prop:matrixElemInequExtends2norms}
\textbf{(Monotone Matrix Norms \cite{tyrtyshnikov1997brief}):} Suppose the element-wise inequality $\genMat\geq\genMatOth$ holds. Then, $\|\genMat\|_p\geq \|\genMatOth\|_p$, for a positive integer $p$.\qed
\end{prop}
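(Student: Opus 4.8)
In this paper every matrix to which the Proposition is applied is entrywise nonnegative, so the plan is to read the hypothesis as $0\le\genMatOth\le\genMat$ (entrywise) and $\|\cdot\|_p$ as the induced operator $\ell_p$ norm; this is the instance actually invoked later (the $p=2$ case, in the proofs of Lemma~\ref{lma: Gramian differences} and Theorem~\ref{thm:full_est_cntrl_error}), and it is essentially the only reading for which the statement is literally true. Indeed, the claim fails without a sign constraint (take $\genMat=\mbf{0}\ge-\mbf{I}=\genMatOth$, yet $\|\genMat\|_2=0<1=\|\genMatOth\|_2$), and it also fails for the nuclear norm even among nonnegative matrices. If one instead intends the Frobenius or entrywise $p$-norm, the result drops out of the scalar inequality $0\le\genMatOth_{uv}^p\le\genMat_{uv}^p$ by summation and needs nothing more, so I will focus on the operator-norm case.

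First I would isolate the elementary bound that, for any entrywise nonnegative $\genMat$ and any vector $\mbf{x}$, $|\genMat\mbf{x}|\le\genMat\,|\mbf{x}|$ entrywise, since $|(\genMat\mbf{x})_u|=|\sum_v\genMat_{uv}x_v|\le\sum_v\genMat_{uv}|x_v|=(\genMat\,|\mbf{x}|)_u$ by the triangle inequality and $\genMat_{uv}\ge0$; because the vector $\ell_p$ norm is monotone on the nonnegative orthant, this gives $\|\genMat\mbf{x}\|_p\le\|\genMat\,|\mbf{x}|\,\|_p$. Then, letting $\mbf{x}^\star$ with $\|\mbf{x}^\star\|_p=1$ attain $\|\genMatOth\|_p$ (such a maximizer exists by compactness of the $\ell_p$ unit sphere), I would chain
\begin{align*}
\|\genMatOth\|_p=\|\genMatOth\mbf{x}^\star\|_p\le\|\genMatOth\,|\mbf{x}^\star|\,\|_p\le\|\genMat\,|\mbf{x}^\star|\,\|_p\le\|\genMat\|_p\,\bigl\|\,|\mbf{x}^\star|\,\bigr\|_p=\|\genMat\|_p,
\end{align*}
where the middle inequality uses $0\le\genMatOth\,|\mbf{x}^\star|\le\genMat\,|\mbf{x}^\star|$ (both nonnegative vectors, since $|\mbf{x}^\star|\ge0$ and $0\le\genMatOth\le\genMat$) together with monotonicity of $\|\cdot\|_p$ on nonnegative vectors, and the last step uses $\bigl\|\,|\mbf{x}^\star|\,\bigr\|_p=\|\mbf{x}^\star\|_p=1$. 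Since $\|\cdot\|_2=\rho(\cdot)$ for symmetric matrices, this also recovers the classical monotonicity $\rho(\genMatOth)\le\rho(\genMat)$ under $0\le\genMatOth\le\genMat$; alternatively, for the symmetric case one could invoke that fact directly via the Collatz--Wielandt characterization of the spectral radius, but the operator-norm argument above is more general.

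The whole thing is only a few lines, so there is no substantive obstacle. The one point that genuinely needs attention is making the hypothesis precise, i.e.\ invoking nonnegativity of $\genMatOth$ and restricting to norms that are actually entrywise monotone (such as the induced operator $\ell_p$ norms and the Frobenius norm): the chain above breaks at exactly the steps $|\genMat\mbf{x}|\le\genMat\,|\mbf{x}|$ and $\genMatOth\,|\mbf{x}^\star|\le\genMat\,|\mbf{x}^\star|$ if signs are permitted, and at the monotonicity step for non-monotone norms.
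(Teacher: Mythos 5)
The paper does not actually prove this proposition: it is stated as a black box with a citation to \cite{tyrtyshnikov1997brief} and a \qed immediately after the statement, so there is no in-paper argument to compare against. Your proof is correct and self-contained for the reading that matters, namely $0\le\genMatOth\le\genMat$ entrywise with $\|\cdot\|_p$ the induced operator norm: the chain $\|\genMatOth\mbf{x}^\star\|_p\le\|\genMatOth\,\lvert\mbf{x}^\star\rvert\|_p\le\|\genMat\,\lvert\mbf{x}^\star\rvert\|_p\le\|\genMat\|_p$ is sound, each step resting on entrywise monotonicity of the vector $\ell_p$ norm on the nonnegative orthant. Your observation that the proposition is false as literally stated (e.g.\ $\genMat=\mbf{0}\ge-\mbf{I}=\genMatOth$ gives $\|\genMat\|_2=0<1=\|\genMatOth\|_2$) is a genuine, if minor, defect in the paper's formulation; it is harmless in practice because every invocation in the paper (the spectral-radius comparisons in the proof of Lemma~\ref{lma: Gramian differences}, for instance, where $\matF^2$ is compared with nonnegative symmetric bounds) involves entrywise nonnegative symmetric matrices, for which $\|\cdot\|_2=\rho(\cdot)$ and your argument applies directly. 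The one caution I would add is that the paper also cites this proposition at \eqref{thetaHatAbardiff8} to justify $\lvert\specRadius(\matCom\diagComSize+\mbf{\Gamma}_3)-\specRadius(\matCom\diagComSize)\rvert\le\|\mbf{\Gamma}_3\|_2$, which is really a perturbation (Weyl-type) bound rather than an instance of entrywise monotonicity; your proof, correctly, does not claim to cover that use.
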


\begin{prop}\label{prop:uppBound_relDiff}
\textbf{(An upper bound on the $\ell_1$ norm of the difference between two $\ell_1$-normalized vectors):} Let $\genVec,\genVecOth \in \mathbf{R}^{m}$ be such that $\genVec\geq \mbf{1}_\dimMatC$ and $\genVecOth\geq \mbf{1}_\dimMatC$, where $\mbf{1}_m$ is the all-ones vector. Then, 
\begin{align*} 
	\left\|\frac{\genVec-\mbf{1}_\dimMatC}{\|\genVec\!-\!\mbf{1}_\dimMatC\|_1} \!-\! \frac{\genVecOth-\mbf{1}_\dimMatC}{\|\genVecOth-\mbf{1}_\dimMatC\|_1} \right\|_1 
	 \!\leq\! 2 \frac{\|\genVec\!-\!\genVecOth\|_1}{\max\left({\|\genVec\!-\!\mbf{1}_\dimMatC\|_1, \|\genVecOth\!-\!\mbf{1}_\dimMatC\|_1}\right)}.
\end{align*}\qed
\end{prop}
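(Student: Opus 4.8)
The plan is to reduce this to an elementary perturbation estimate for normalized non-negative vectors. First I would pass to the shifted vectors $\mbf{a} \triangleq \genVec - \mbf{1}_\dimMatC$ and $\mbf{b} \triangleq \genVecOth - \mbf{1}_\dimMatC$. The hypotheses $\genVec \geq \mbf{1}_\dimMatC$ and $\genVecOth \geq \mbf{1}_\dimMatC$ ensure that $\mbf{a}$ and $\mbf{b}$ are entrywise non-negative, so their $\ell_1$-norms are just the sums of their coordinates. Writing $\alpha \triangleq \|\mbf{a}\|_1$ and $\beta \triangleq \|\mbf{b}\|_1$ (assumed positive so that the normalizations are well defined), the quantity to be bounded becomes $\|\mbf{a}/\alpha - \mbf{b}/\beta\|_1$, while the target right-hand side is $2\|\mbf{a}-\mbf{b}\|_1/\max(\alpha,\beta)$, since $\mbf{a}-\mbf{b} = \genVec - \genVecOth$.

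Because the claimed inequality is symmetric in $\genVec$ and $\genVecOth$, I would assume without loss of generality that $\alpha \geq \beta$, so that $\max(\alpha,\beta)=\alpha$. The key algebraic move is a telescoping split that isolates the two independent error sources---the difference of the vectors and the difference of the normalizing constants---while deliberately keeping the \emph{larger} constant $\alpha$ in every denominator:
\begin{align*}
\frac{\mbf{a}}{\alpha} - \frac{\mbf{b}}{\beta} = \frac{\mbf{a}-\mbf{b}}{\alpha} + \mbf{b}\left(\frac{1}{\alpha} - \frac{1}{\beta}\right).
\end{align*}
Taking $\ell_1$-norms, applying the triangle inequality, and using $\|\mbf{b}\|_1 = \beta$ together with $\beta|\tfrac{1}{\alpha}-\tfrac{1}{\beta}| = |\alpha-\beta|/\alpha$ gives
\begin{align*}
\left\|\frac{\mbf{a}}{\alpha} - \frac{\mbf{b}}{\beta}\right\|_1 \leq \frac{\|\mbf{a}-\mbf{b}\|_1}{\alpha} + \frac{|\alpha-\beta|}{\alpha}.
\end{align*}

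The crux is controlling the second term, and this is where the reverse triangle inequality for the $\ell_1$-norm enters: $|\alpha-\beta| = \big|\,\|\mbf{a}\|_1-\|\mbf{b}\|_1\,\big| \leq \|\mbf{a}-\mbf{b}\|_1$. Substituting this collapses both terms onto the same numerator and yields $2\|\mbf{a}-\mbf{b}\|_1/\alpha = 2\|\mbf{a}-\mbf{b}\|_1/\max(\alpha,\beta)$, which is exactly the claim; the opposite case $\beta \geq \alpha$ is handled by the mirror-image decomposition that retains $\beta$ in the denominators. There is no genuine obstacle here---the argument is a few lines---but the one point that must be handled correctly is to pull out the larger of the two normalizing constants when performing the split, since doing so is precisely what produces a $\max$ (rather than a $\min$) in the final denominator and hence yields the stated, useful direction of the bound.
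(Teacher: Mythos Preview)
Your proof is correct and is essentially the same argument as the paper's. The paper also performs the add--subtract (telescoping) split, applies the triangle inequality, and controls the normalizing-constant term via $|\sum_j(\genVec_j-\genVecOth_j)|\le\sum_j|\genVec_j-\genVecOth_j|$ (which, after the shift by $\mbf{1}_\dimMatC$, is exactly your reverse triangle inequality $|\alpha-\beta|\le\|\mbf{a}-\mbf{b}\|_1$); the $\max$ in the denominator is then obtained by exchanging the roles of $\genVec$ and $\genVecOth$, matching your WLOG reduction.
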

\begin{proof}
Using the triangle inequality, we have:
\begin{align}\label{diffff}
	\left\| \frac{\genVec-1}{\left\| \genVec -\mbf{1}_\dimMatC\right\|_1} - \frac{\genVecOth-1}{\left\| \genVecOth -\mbf{1}_\dimMatC\right\|_1} \right\|_1
	\nonumber & = \displaystyle\sum_{i=1}^{\dimMatC}  \left\lvert\frac{\genVec_i-1-(\genVecOth_i-1)+(\genVecOth_i-1)}{\left\| \genVec -\mbf{1}_\dimMatC\right\|_1} - \frac{\genVecOth_i-1}{\left\| \genVecOth -\mbf{1}_\dimMatC\right\|_1} \right\rvert 
	\\ 
	\nonumber& = \displaystyle\sum_{i=1}^{\dimMatC}  \left\lvert\frac{\genVec_i - \genVecOth_i}{\left\| \genVec -\mbf{1}_\dimMatC\right\|_1} - [\genVecOth_i-1]\left[\frac{1}{\left\| \genVecOth -\mbf{1}_\dimMatC\right\|_1} -\frac{1}{\left\| \genVec -\mbf{1}_\dimMatC\right\|_1} \right] \right\rvert 
	\\ 
	\nonumber& = \displaystyle\sum_{i=1}^{\dimMatC}  \left\lvert\frac{\genVec_i - \genVecOth_i}{\left\| \genVec -\mbf{1}_\dimMatC\right\|_1} - [\genVecOth_i-1]\left[\frac{\displaystyle\sum_{j=1}^{\dimMatC} (\genVec_j-\genVecOth_j)}{\left\| \genVecOth -\mbf{1}_\dimMatC\right\|_1 \left\| \genVec -\mbf{1}_\dimMatC\right\|_1} \right] \right\rvert 
	\\ 
	& \leq \frac{\displaystyle\sum_{i=1}^{\dimMatC} \left\vert\genVec_i-\genVecOth_i\right\vert}{\|\genVec-\mbf{1}_\dimMatC\|_1} \left[1+ \frac{\displaystyle\sum_{i=1}^{\dimMatC} \left\vert\genVecOth_i-1\right\vert}{\|\genVecOth-\mbf{1}_\dimMatC\|_1} \right]  
	 = 2 \frac{\|\genVec-\genVecOth\|_1}{\|\genVec-\mbf{1}_\dimMatC\|_1}. 
\end{align}
By exchanging the roles of $\genVec_i$ and $\genVecOth_i$, we find a similar bound that combined to \eqref{diffff} gives the inequality in the statement of Lemma.
\end{proof}

\subsection{Lemma
and Proof: 
Relationship Between Fine- and Group- Average Controllability}

Let $\genMat$ be an $\dimMatF\times \dimMatF$ system matrix that could be either $\mbf{A}$ or
$\bar{\matF}$ (the expected quantity). 
Recall the controllability Gramian of the LTI system in  \eqref{generalLTI} has been defined in \eqref{gramianFine_def}.
We define the $\dimMatF$-dimensional \textit{fine} average controllability vector $\bs{\theta}_{\mathrm{fine},\genMat}$ similar to \eqref{eq: group avg-controllability} as 
\begin{align}\label{cntrlblty_fine_def}
\bs{\theta}_{\mathrm{fine},\genMat}^\transpose\triangleq \begin{bmatrix}
			\trace[\Cgram(\genMat_{\mathrm{nom}}, \mbf{e}_1)]& \ldots& \trace[\Cgram(\genMat_{\mathrm{nom}}, \mbf{e}_\dimMatF)]
		\end{bmatrix}, 
\end{align} 
where $\mbf{e}_i$ is the $i$-th canonical basis vector in $\mathbb{R}^n$. 
By definition in \eqref{gramianFine_def}, the following holds: 
\begin{align}\label{gramian_to_diag}
   \bs{\theta}^{(i)}_{\mathrm{fine},\genMat} & = \trace\left[\Cgram(\genMat_{\mathrm{nom}}, \mbf{e}_i)\right]= \displaystyle\sum_{\tau=0}^{\infty} \trace( \genMat_{\mathrm{nom}}^\tau  \mathbf{e}_i \mathbf{e}_{i}^\transpose \genMat_{\mathrm{nom}}^\tau) \nonumber\\
   &=\displaystyle\sum_{\tau=0}^{\infty} [\mathrm{diag}(\genMat_{\mathrm{nom}}^{2\tau} )]_i = 1 + [\mathrm{diag}(\genMat_{\mathrm{nom}}^2)]_i +\cdots \nonumber \\
   & = [\mathrm{diag}\left((\mbf{I} - \genMat_{\mathrm{nom}}^2)^{-1} \right)]_i.
\end{align}
The series converges to the final equality because $\rho(\mbf{Z})\leq 1$. 
Putting \eqref{gramian_to_diag} into a vector form gives:
\begin{align}\label{theta_fine}
\hspace{-2.5mm}\bs{\theta}_{\mathrm{fine},\genMat}\!=\! \mathrm{diag}\left(\Cgram(\genMat_\mathrm{nom}, \mbf{I}_{\dimMatF\times \dimMatF})\right)\!=\!\mathrm{diag}\left((\mbf{I} - \genMat_{\mathrm{nom}}^2)^{-1} \right).
\end{align}
The following lemma states the group average controllability vector in \eqref{eq: group avg-controllability} is a linear mapping of the fine average controllability vector defined above. 
\begin{lemma}\label{lemma:groupTheta} With the notation defined above for $\mbf{Z}$ consider the average controllability vector in \eqref{eq: group avg-controllability}. Then
\begin{align*}
	\bs{\theta}_{\mathrm{group},\genMat} = \frac{1}{\protoCovSize}\matLinCoarse \bs{\theta}_{\mathrm{fine},\genMat}. 
\end{align*}\qed
\end{lemma}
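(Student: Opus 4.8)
The plan is to unpack the definition of $\bs{\theta}_{\mathrm{group},\genMat}^{(i)}$ as $\trace[\Cgram(\genMat_\mathrm{nom},\mathbf{B}_{\mathcal{K}_i})]$ with $\mathbf{B}_{\mathcal{K}_i}=\mathrm{Diag}(\vecLinCoarse_i^\transpose)$, and relate it entry-by-entry to the fine controllability vector $\bs{\theta}_{\mathrm{fine},\genMat}$ defined in \eqref{cntrlblty_fine_def}. The key structural fact I would exploit is that $\mathbf{B}_{\mathcal{K}_i}\mathbf{B}_{\mathcal{K}_i}^\transpose = \mathrm{Diag}(\vecLinCoarse_i)\mathrm{Diag}(\vecLinCoarse_i)^\transpose$ is diagonal, so $\CtrlMatF\CtrlMatF^\transpose$ in the Gramian sum \eqref{gramianFine_def} reduces to a diagonal matrix whose $v$-th entry is $(\vecLinCoarse_i)_v^2$. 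By assumption (a) in Section~\ref{subsec: coarsening}, the nonzero entries of $\vecLinCoarse_i$ are all equal to $1/\protoCovSize$ (since there are $\protoCovSize$ of them summing to $1$), hence $(\vecLinCoarse_i)_v^2 = (1/\protoCovSize^2)\mathbf{1}[v\in\mathcal{K}_i]$, i.e. $\mathbf{B}_{\mathcal{K}_i}\mathbf{B}_{\mathcal{K}_i}^\transpose = (1/\protoCovSize^2)\sum_{v\in\mathcal{K}_i}\mbf{e}_v\mbf{e}_v^\transpose$.

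From here I would plug this into $\Cgram(\genMat_\mathrm{nom},\mathbf{B}_{\mathcal{K}_i}) = \sum_{\tau\geq 0}\genMat_\mathrm{nom}^\tau\,\mathbf{B}_{\mathcal{K}_i}\mathbf{B}_{\mathcal{K}_i}^\transpose\,\genMat_\mathrm{nom}^\tau$, use linearity of the trace, and recognize $\sum_{\tau\geq 0}\trace(\genMat_\mathrm{nom}^\tau\mbf{e}_v\mbf{e}_v^\transpose\genMat_\mathrm{nom}^\tau) = \bs{\theta}^{(v)}_{\mathrm{fine},\genMat}$ by exactly the computation already carried out in \eqref{gramian_to_diag}. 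This yields
\begin{align*}
\bs{\theta}^{(i)}_{\mathrm{group},\genMat} = \trace[\Cgram(\genMat_\mathrm{nom},\mathbf{B}_{\mathcal{K}_i})] = \frac{1}{\protoCovSize^2}\sum_{v\in\mathcal{K}_i}\bs{\theta}^{(v)}_{\mathrm{fine},\genMat}.
\end{align*}
The remaining step is to observe that $\sum_{v\in\mathcal{K}_i}\bs{\theta}^{(v)}_{\mathrm{fine},\genMat} = \protoCovSize\,(\vecLinCoarse_i^\transpose\,\bs{\theta}_{\mathrm{fine},\genMat})$, because $\vecLinCoarse_i$ has support $\mathcal{K}_i$ with each nonzero entry equal to $1/\protoCovSize$; equivalently, $\vecLinCoarse_i^\transpose = (1/\protoCovSize)\sum_{v\in\mathcal{K}_i}\mbf{e}_v^\transpose$. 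Substituting gives $\bs{\theta}^{(i)}_{\mathrm{group},\genMat} = (1/\protoCovSize)\,\vecLinCoarse_i^\transpose\bs{\theta}_{\mathrm{fine},\genMat}$, which is precisely the $i$-th row of $(1/\protoCovSize)\matLinCoarse\,\bs{\theta}_{\mathrm{fine},\genMat}$. Stacking over $i\in[\dimMatC]$ delivers the claimed identity $\bs{\theta}_{\mathrm{group},\genMat} = \frac{1}{\protoCovSize}\matLinCoarse\,\bs{\theta}_{\mathrm{fine},\genMat}$.

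I do not anticipate a serious obstacle here: the proof is essentially bookkeeping once one notices that $\mathbf{B}_{\mathcal{K}_i}\mathbf{B}_{\mathcal{K}_i}^\transpose$ decomposes as a scaled sum of rank-one projectors onto coordinates in $\mathcal{K}_i$, and that the trace turns the Gramian of a sum of such projectors into the corresponding sum of fine controllability values. The one point requiring care is tracking the two factors of $1/\protoCovSize$ — one from $\mathbf{B}_{\mathcal{K}_i}\mathbf{B}_{\mathcal{K}_i}^\transpose$ contributing $1/\protoCovSize^2$, and one recovered when re-expressing the sum over $\mathcal{K}_i$ as an inner product with $\vecLinCoarse_i$ — which combine to give the single $1/\protoCovSize$ in the statement. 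If the paper's convention allowed unequal coverage sizes $\protoCovSize_i$, the same argument would give $\bs{\theta}^{(i)}_{\mathrm{group},\genMat} = (1/\protoCovSize_i)\vecLinCoarse_i^\transpose\bs{\theta}_{\mathrm{fine},\genMat}$; the stated form is the specialization to the fixed-row-support setting noted just before the lemma.
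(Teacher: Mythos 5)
Your proposal is correct and follows essentially the same route as the paper's proof: expand the Gramian, note that $\mathbf{B}_{\mathcal{K}_i}\mathbf{B}_{\mathcal{K}_i}^\transpose=\mathrm{Diag}(\vecLinCoarse_i\odot\vecLinCoarse_i)=\frac{1}{\protoCovSize^2}\sum_{v\in\mathrm{supp}(\vecLinCoarse_i)}\mbf{e}_v\mbf{e}_v^\transpose$, apply linearity of the trace to recover $\bs{\theta}^{(v)}_{\mathrm{fine},\genMat}$, and rewrite $(\vecLinCoarse_i\odot\vecLinCoarse_i)^\transpose=\frac{1}{\protoCovSize}\vecLinCoarse_i^\transpose$ before stacking over $i$. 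Your bookkeeping of the two factors of $1/\protoCovSize$ matches the paper's step (a) exactly.
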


\begin{proof}
We begin by simplifying the group average controllability using its definition in \eqref{eq: group avg-controllability} and the definition of Gramian in \eqref{gramianFine_def}, for a general matrix $\genMat$:
\begin{align}\label{elem_c_cntrl}
	\nonumber\bs{\theta}_{\mathrm{group},\genMat}^{(i)} & = \trace\left[\Cgram(\genMat_{\mathrm{nom}}, \mathrm{diag}(\vecLinCoarse_i^\transpose))\right] \\
	 \nonumber& =\trace\left[\displaystyle\sum_{\tau=0}^{\infty} \genMat_{\mathrm{nom}}^\tau \mathrm{diag}(\vecLinCoarse_i^\transpose) \mathrm{diag}(\vecLinCoarse_i^\transpose)^{\transpose} \genMat_{\mathrm{nom}}^\tau\right] \\
	\nonumber& = \trace\left[\displaystyle\sum_{\tau=0}^{\infty} \genMat_{\mathrm{nom}}^\tau \mathrm{diag}((\vecLinCoarse_i \odot \vecLinCoarse_i)^\transpose) \genMat_{\mathrm{nom}}^\tau\right]\\
		\nonumber& \stackrel{\mathrm{(a)}}{=} \frac{1}{{\protoCovSize^2}}\trace\left[\displaystyle\sum_{\tau=0}^{\infty} \genMat_{\mathrm{nom}}^\tau (\displaystyle\sum_{v\in\mathrm{supp}(\vecLinCoarse_i)} \mathbf{e}_v \mathbf{e}_{v}^\transpose) \genMat_{\mathrm{nom}}^\tau\right]\\  
		\nonumber& = \frac{1}{{\protoCovSize^2}} \displaystyle\sum_{v\in\mathrm{supp}(\vecLinCoarse_i)}\underbrace{\trace\left[\displaystyle\sum_{\tau=0}^{\infty} \genMat_{\mathrm{nom}}^\tau  \mathbf{e}_v \mathbf{e}_{v}^\transpose \genMat_{\mathrm{nom}}^\tau\right]}_{\bs{\theta}_{\mathrm{fine},\genMat}^{(v)}} \\
		& = (\vecLinCoarse_i \odot \vecLinCoarse_i)^\transpose \bs{\theta}_{\mathrm{fine},\genMat},
\end{align}
where (a) is due to the assumption of $\mathrm{supp}(\vecLinCoarse_i)=\protoCovSize$ at the end of Section \ref{subsec:probstatement}, for all $i\in[\dimMatC]$; $\odot$ denotes the Hadamard product; and $\bs{\theta}_{\mathrm{fine},\genMat}$ has been defined in \eqref{theta_fine} in which $\matF$ is substituted with $\genMat$.
Putting \eqref{elem_c_cntrl} in vector form concludes the proof.
\end{proof}
The above relationship states that the group controllability $\bs{\theta}_{\mathrm{group},\genMat}$ {  is the mean value of the fine controllability of the groups of nodes that map to each c-node, weighted by the factor $1/\protoCovSize$ (that is rooted in the definition of the coarse-measurement matrix $\matLinCoarse$).}  


The following results are instrumental 
in proving Lemma~\ref{lemma:approximate_gAC} and Lemma~\ref{lma: Gramian differences}. 
\begin{lemma}\label{lemma:spectralAbarEquivalence}
\textbf{(spectral radius of $\bar{\matF}$):} Let $\diagComSize=\frac{1}{\dimMatF}\comFine\comFine^\transpose$ with
$\scalingSBM$ be defined as in \eqref{eq: scalingSBM}. Then the spectral radius of $\bar{\matF}=\comFine^\transpose \matCom \comFine$ in \eqref{def_coarseMembershipMat} is given as 
\begin{align*}
     \specRadius(\bar{\matF})= \dimMatF\specRadius(\matCom \diagComSize)= \dimMatF\scalingSBM\specRadius(\matCom_{\circ} \diagComSize).
\end{align*}\qed
\end{lemma}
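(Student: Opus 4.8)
The plan is to reduce the $\dimMatF\times\dimMatF$ spectral problem for $\bar{\matF}=\comFine^\transpose\matCom\comFine$ to a $\dimMatCom\times\dimMatCom$ one by exploiting the factorization $\bar{\matF}=(\comFine^\transpose)\,(\matCom\comFine)$ together with the elementary fact that, for rectangular matrices $\genMat\in\mathbb{R}^{\dimMatF\times\dimMatCom}$ and $\genMatOth\in\mathbb{R}^{\dimMatCom\times\dimMatF}$, the products $\genMat\genMatOth$ and $\genMatOth\genMat$ share the same nonzero eigenvalues (counted with algebraic multiplicity) and hence the same spectral radius.

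First I would set $\genMat=\comFine^\transpose$ and $\genMatOth=\matCom\comFine$, so that $\bar{\matF}=\genMat\genMatOth$ while $\genMatOth\genMat=\matCom\comFine\comFine^\transpose$; the cited fact then gives $\specRadius(\bar{\matF})=\specRadius(\matCom\comFine\comFine^\transpose)$. Next I would insert the identity $\comFine\comFine^\transpose=\mathrm{Diag}(\lvert\mathcal{V}_1\rvert,\ldots,\lvert\mathcal{V}_\dimMatCom\rvert)=\dimMatF\diagComSize$ recorded just after \eqref{eq: SBM}, so that $\specRadius(\bar{\matF})=\specRadius(\dimMatF\,\matCom\diagComSize)=\dimMatF\,\specRadius(\matCom\diagComSize)$ by positive homogeneity of the spectral radius. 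Finally, substituting $\matCom=\scalingSBM\matCom_{\circ}$ from Assumption~\ref{assump: graph sparsity} and pulling out the nonnegative scalar $\scalingSBM$ gives $\dimMatF\,\specRadius(\matCom\diagComSize)=\dimMatF\scalingSBM\,\specRadius(\matCom_{\circ}\diagComSize)$, which is exactly the claimed chain of equalities.

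The one point worth spelling out is the identity $\specRadius(\genMat\genMatOth)=\specRadius(\genMatOth\genMat)$; the cleanest route is the determinant identity $\det(\lambda\mbf{I}_{\dimMatF}-\genMat\genMatOth)=\lambda^{\dimMatF-\dimMatCom}\det(\lambda\mbf{I}_{\dimMatCom}-\genMatOth\genMat)$, which is valid since $\dimMatF\ge\dimMatCom$ and shows that the two characteristic polynomials agree up to the factor $\lambda^{\dimMatF-\dimMatCom}$, hence have the same nonzero roots with multiplicity; this covers all cases, including the degenerate one in which both matrices are nilpotent. Alternatively, since each community is nonempty, $\comFine$ has full row rank $\dimMatCom$, so $\comFine^\transpose$ is injective, and one can match eigenpairs directly: $\bar{\matF}\mbf{v}=\lambda\mbf{v}$ with $\lambda\ne0$ forces $\mbf{v}=\comFine^\transpose\mbf{w}$ for some $\mbf{w}$ satisfying $\matCom\comFine\comFine^\transpose\mbf{w}=\lambda\mbf{w}$, and conversely. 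Everything else is a direct substitution, so I do not anticipate any genuine obstacle in this lemma.
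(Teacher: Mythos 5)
Your proposal is correct and follows essentially the same route as the paper: both rest on the fact that $\genMat\genMatOth$ and $\genMatOth\genMat$ share the same nonzero spectrum, then substitute $\comFine\comFine^\transpose=\dimMatF\diagComSize$ and factor out $\scalingSBM$. If anything, you are more careful than the paper, which invokes the eigenvalue-coincidence fact "for square matrices" even though $\comFine^\transpose$ and $\matCom\comFine$ are rectangular; your determinant identity $\det(\lambda\mbf{I}_{\dimMatF}-\genMat\genMatOth)=\lambda^{\dimMatF-\dimMatCom}\det(\lambda\mbf{I}_{\dimMatCom}-\genMatOth\genMat)$ closes that small gap cleanly.
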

\begin{proof}
Recall that the eigenvalues of $\mbf{A}\mbf{B}$ and $\mbf{B}\mbf{A}$ coincide for any two square matrices $\mbf{A}$ and $\mbf{B}$. Thus, 
\begin{align*} 
    \specRadius(\bar{\matF})= \specRadius(\matCom \comFine\comFine^\transpose ) = \dimMatF\specRadius(\matCom \diagComSize)= \dimMatF\scalingSBM\specRadius(\matCom_{\circ} \diagComSize), 
\end{align*}
where for the last equality we use the relation in \eqref{eq: scalingSBM}.
\end{proof} 

The result below gives an equivalent expression for the spectral radius of the expected coarse-scale adjacency matrix $\bar{\matC}$. 
\begin{lemma}\label{lemma:spectralAtildebarEquivalence}
\textbf{(spectral radius of $\bar{\matC}$):} The spectral radius of $\bar{\matC}$ can be expanded as
\begin{align}
    \specRadius(\bar{\matC}) = \specRadius(\syncComAs \matCom \syncComAs^\transpose)= \dimMatC \specRadius\left(\matCom \frac{\syncComAs^\transpose \syncComAs}{\dimMatC}\right).
\end{align}\qed
\end{lemma}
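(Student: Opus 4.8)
The plan is to mirror the argument used for Lemma~\ref{lemma:spectralAbarEquivalence}, exploiting the fact that the two orderings of a rectangular matrix product share the same nonzero spectrum. First I would recall from \eqref{def_coarseMembershipMat} that $\bar{\matC} = \syncComAs \matCom \syncComAs^\transpose$, where $\syncComAs \in \mathbb{R}^{\dimMatC \times \dimMatCom}$ and $\matCom \in \mathbb{R}^{\dimMatCom \times \dimMatCom}$, and then view $\bar{\matC}$ as the product $\syncComAs \cdot (\matCom \syncComAs^\transpose)$ of the $\dimMatC \times \dimMatCom$ matrix $\syncComAs$ with the $\dimMatCom \times \dimMatC$ matrix $\matCom \syncComAs^\transpose$.

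Second, I would invoke the standard identity that for $\mbf{M} \in \mathbb{R}^{a \times b}$ and $\mbf{N} \in \mathbb{R}^{b \times a}$ the matrices $\mbf{M}\mbf{N}$ and $\mbf{N}\mbf{M}$ have identical nonzero eigenvalues (counting multiplicity), so that their spectral radii coincide: the only discrepancy between the two spectra consists of surplus zero eigenvalues on whichever side has the larger dimension, and these do not affect the largest modulus. Applying this with $\mbf{M} = \syncComAs$ and $\mbf{N} = \matCom \syncComAs^\transpose$ gives $\specRadius(\bar{\matC}) = \specRadius(\syncComAs \matCom \syncComAs^\transpose) = \specRadius(\matCom \syncComAs^\transpose \syncComAs)$.

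Finally, I would factor out the scalar $1/\dimMatC$: writing $\syncComAs^\transpose \syncComAs = \dimMatC \cdot \tfrac{\syncComAs^\transpose \syncComAs}{\dimMatC}$ and using the homogeneity of the spectral radius, namely $\specRadius(c\,\mbf{X}) = |c|\,\specRadius(\mbf{X})$, yields $\specRadius(\matCom \syncComAs^\transpose \syncComAs) = \dimMatC\,\specRadius\!\big(\matCom \tfrac{\syncComAs^\transpose \syncComAs}{\dimMatC}\big)$, which is exactly the claimed identity. There is essentially no obstacle here; the only point deserving a word of care—because $\syncComAs$ is rectangular rather than square, unlike the square-matrix version cited in the proof of Lemma~\ref{lemma:spectralAbarEquivalence}—is the passage from ``same nonzero eigenvalues'' to ``same spectral radius,'' which is immediate since the spectral radius is always nonnegative and the extra eigenvalues forced by the dimension mismatch are all zero.
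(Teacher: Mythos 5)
Your proposal is correct and follows essentially the same route as the paper, which simply notes the proof is ``similar to Lemma~\ref{lemma:spectralAbarEquivalence}'' (i.e., apply the fact that $\mbf{M}\mbf{N}$ and $\mbf{N}\mbf{M}$ share nonzero eigenvalues, then pull out the scalar $\dimMatC$). Your extra remark on handling the rectangular case of $\syncComAs$ is a welcome precision, since the paper's cited identity is stated only for square matrices, but it does not change the substance of the argument.
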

\begin{proof}
Similar to Lemma~\ref{lemma:spectralAbarEquivalence}. Details are omitted. 
\end{proof}

\vspace{-1em}
\subsection{Lemma and proof: Error between the Gramians of random and expected LTI systems}

Let $\overline{\mathcal{S}}_\mathrm{fine}$ and $\overline{\mathcal{S}}_\mathrm{coarse}$ denote the expected dynamics of LTI \eqref{fine_LTI} when $\matF$ and $\matC$ are replaced with the expected quantities $\bar{\matF}$ and $\bar{\matC}$. The following result provides an error bound on the respective differences of the fine- and coarse- average controllability vectors, which is equivalent to bound the difference between the Gramians of ${\mathcal{S}}_\mathrm{fine}$ and $\overline{\mathcal{S}}_\mathrm{fine}$ and that of  ${\mathcal{S}}_\mathrm{coarse}$ and $\overline{\mathcal{S}}_\mathrm{coarse}$. 

\begin{lemma}\label{lma: Gramian differences}(\textbf{Error between the Gramians of random and expected LTI systems}): Let $\bs{\theta}_{\mathrm{fine},\matF}, \bs{\theta}_{\mathrm{fine},\bar{\matF}}$ be defined as in \eqref{theta_fine}, and $\bs{\theta}_{\mathrm{coarse},\matC}, \bs{\theta}_{\mathrm{coarse},\bar{\matC}}$ as in \eqref{eq: coarse avg-controllability}. Under Assumption~\ref{assump: graph sparsity}, the following holds with probability at least $1-\hoeffProbConst$: 
\begin{align}\label{eq: Gramian differences alpha}
     \alpha_\dimMatF & \triangleq \left\| \bs{\theta}_{\mathrm{fine},\matF}-\bs{\theta}_{\mathrm{fine},\bar{\matF}} \right\|_1 
    \\  &
    \leq
    \frac{ \hoeffCoeffConst\left[\frac{1/\scalingSBM\!+\!{c}_1}{ {c}_\circ^2}+ (\frac{1}{2}\!+\!2\hoeffCoeffConst)\frac{\left[{c}_1 \!+\! \frac{1}{\scalingSBM}\right]}{{c}_\circ^2}
      \left( 1 \!+ \!\sqrt{1+\hoeffCoeffConst} \sqrt{1\!+\! \frac{ 1/\scalingSBM}{\dimMatF  {c}_\circ^2}} \!+\! \frac{2\constFine/\scalingSBM}{\dimMatF  {c}_\circ} \right)\right]}{\left[1\!-\!\left(\frac{{c}_\circ}{\constFine/\scalingSBM\!+\!{c}_\circ}\right)^2\right] \left[1\!-\!\left(\frac{\specRadius(\matF)}{\constFine+\specRadius(\matF)}\right)^2\right]}
      \\ & 
      = \Oscale(\frac{\hoeffCoeffConst}{\scalingSBM}), 
\end{align} 
and with probability at least $1-\tilde{\hoeffProbConst}$: 
\begin{align}\label{eq: Gramian differences alphatilde}
 \tilde{\alpha}_\dimMatC & \triangleq  \left\| \bs{\theta}_{\mathrm{coarse},\matC}-\bs{\theta}_{\mathrm{coarse},\bar{\matC}} \right\|_1 
\\ 
	& \leq \frac{ \tilde{\hoeffCoeffConst}\left[\frac{1/\scalingSBM+\tilde{c}_1}{\tilde{c}_\circ^2}+ (\frac{1}{2}+2\tilde{\hoeffCoeffConst})\frac{\left[\tilde{c}_1 + \frac{1}{\scalingSBM}\right]}{\tilde{c}_\circ^2}
      \left( 1 \!+ \!\sqrt{1+\tilde{\hoeffCoeffConst}} \sqrt{1+ \frac{ 1/\scalingSBM}{\dimMatC  \tilde{c}_\circ^2}} + \frac{2\constCoarse/\scalingSBM}{\dimMatC  \tilde{c}_\circ} \right)\right]}{\left[1-\left(\frac{\tilde{c}_\circ}{\constCoarse/\scalingSBM+\tilde{c}_\circ}\right)^2\right] \left[1-\left(\frac{\specRadius(\matC)}{\constCoarse+\specRadius(\matC)}\right)^2\right]}
      \\ & 
      = \Oscale(\frac{\tilde{\hoeffCoeffConst}}{\scalingSBM}),
\end{align}
where 
\begin{align}\label{hoeffCoeffConst_def}
        \nonumber \hoeffCoeffConst = \frac{\sqrt{{\log(\frac{\dimMatF^2}{\hoeffProbConst})}/{(2\dimMatF)}}}{\left\|\matCom \diagComSize \matCom\right\|_{\mathrm{min}}}, & \quad 
    \tilde{\hoeffCoeffConst} = \frac{\sqrt{{\log(\frac{\dimMatC^2}{\tilde{\hoeffProbConst}})}/{(2\dimMatC)}}}{ \| \matCom \frac{\syncComAs^\transpose \syncComAs}{\dimMatC} \matCom\|_{\mathrm{min}}},  \\
        \nonumber {c}_\circ=\specRadius(\matCom_{\circ} \diagComSize), & \quad
         {c}_1=\trace\left((\diagComSize\matCom_{\circ})^2\right)
         \\
         \tilde{c}_\circ= \specRadius\left(\matCom_{\circ} \frac{\syncComAs^\transpose \syncComAs}{\dimMatC}\right), & \quad
         \tilde{c}_1=\trace\left((\frac{\syncComAs^\transpose \syncComAs}{\dimMatC}\matCom_{\circ})^2\right). 
\end{align}\qed
\end{lemma}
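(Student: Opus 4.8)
The plan is to express both controllability vectors through the resolvent formula~\eqref{theta_fine}, reduce the $\ell_1$-difference to the diagonal of a triple matrix product by the resolvent identity, and then bound the three factors of that product separately. Concretely, by \eqref{theta_fine} we have $\bs{\theta}_{\mathrm{fine},\matF}=\mathrm{diag}\big((\mbf{I}-\matFnom^2)^{-1}\big)$ and $\bs{\theta}_{\mathrm{fine},\bar\matF}=\mathrm{diag}\big((\mbf{I}-\matFBarNom^2)^{-1}\big)$, both well defined because $\specRadius(\matFnom),\specRadius(\matFBarNom)<1$. Writing $G=(\mbf{I}-\matFnom^2)^{-1}$ and $\bar G=(\mbf{I}-\matFBarNom^2)^{-1}$ and applying the resolvent identity (item~\ref{item:dffInvToInv} of the matrix-norm list) to $\mbf{I}-\matFnom^2$ and $\mbf{I}-\matFBarNom^2$, one obtains $G-\bar G = G\,(\matFnom^2-\matFBarNom^2)\,\bar G$, so that $\alpha_\dimMatF=\norm{\mathrm{diag}(G-\bar G)}_1$. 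Bounding the $\ell_1$-norm of the diagonal of a matrix product by the max-norm of its factors (items~\ref{item:twoNormMaxIneq}, \ref{item:maxProdMax}, \ref{item:maxInftyNorms}) reduces the task to estimating $\norm{G}_2$, $\norm{\bar G}_2$, and $\norm{\matFnom^2-\matFBarNom^2}_{\mathrm{max}}$, up to an explicit polynomial-in-$\dimMatF$ prefactor that is cancelled by the $(\constFine+\specRadius(\matF))^{-2}$ produced in the next step.

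For the two resolvent factors I would use item~\ref{item:resolvNormUB} (together with $\norm{\matFnom^2}_2=\specRadius(\matFnom)^2$ by symmetry): $\norm{G}_2\le(1-\specRadius(\matFnom)^2)^{-1}$ and $\norm{\bar G}_2\le(1-\specRadius(\matFBarNom)^2)^{-1}$, and then substitute $\specRadius(\matFnom)=\specRadius(\matF)/(\constFine+\specRadius(\matF))$ together with Lemma~\ref{lemma:spectralAbarEquivalence}, i.e.\ $\specRadius(\bar\matF)=\dimMatF\scalingSBM\,\specRadius(\matCom_{\circ}\diagComSize)=\dimMatF\scalingSBM\,c_\circ$; this produces exactly the two denominator factors of \eqref{eq: Gramian differences alpha}. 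For the middle factor I would split $\matFnom^2-\matFBarNom^2$ into three pieces: a statistical fluctuation $\big(\matF^2-\mathbb{E}[\matF^2]\big)/(\constFine+\specRadius(\matF))^2$, a bias $\big(\mathbb{E}[\matF^2]-\bar\matF^2\big)/(\constFine+\specRadius(\matF))^2$, and a normalization mismatch $\bar\matF^2\big((\constFine+\specRadius(\matF))^{-2}-(\constFine+\specRadius(\bar\matF))^{-2}\big)$. The fluctuation is handled by Hoeffding (Proposition~\ref{prop:hoeffdingIneq}): each entry $[\matF^2]_{uv}=\sum_w \matF_{uw}\matF_{wv}$ is a sum of $\dimMatF$ independent $\{0,1\}$-valued terms (distinct edges of the SBM are independent), so a union bound over the $\dimMatF^2$ entries (at level $\hoeffProbConst/\dimMatF^2$ each) gives $\norm{\matF^2-\mathbb{E}[\matF^2]}_{\mathrm{max}}\le\sqrt{(\dimMatF/2)\log(\dimMatF^2/\hoeffProbConst)}$ with probability $\ge 1-\hoeffProbConst$; since $\mathbb{E}[[\matF^2]_{uv}]=[\bar\matF^2]_{uv}=\dimMatF[\matCom\diagComSize\matCom]_{k(u)k(v)}$ off the diagonal, dividing by $(\constFine+\specRadius(\matF))^2$ turns this into the relative error $\hoeffCoeffConst$ of \eqref{hoeffCoeffConst_def}. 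The same event also bounds $\norm{\matF-\bar\matF}_2\le\dimMatF\norm{\matF-\bar\matF}_{\mathrm{max}}$ (item~\ref{item:twoNormMaxIneq}) and hence, by $1$-Lipschitzness of the spectral radius for symmetric matrices, $|\specRadius(\matF)-\specRadius(\bar\matF)|$, which controls the normalization-mismatch piece after writing $a^{-2}-b^{-2}=(b-a)(a+b)/(a^2b^2)$ and multiplying by $\norm{\bar\matF^2}_{\mathrm{max}}$. The bias piece is a diagonal matrix---off the diagonal $\mathbb{E}[[\matF^2]_{uv}]=[\bar\matF^2]_{uv}$ by edge-independence, whereas on the diagonal $\mathbb{E}[\matF_{uw}^2]=\bar\matF_{uw}\neq\bar\matF_{uw}^2$---with entries $\sum_w\bar\matF_{uw}(1-\bar\matF_{uw})$; a crude bound on these, divided by $(\constFine+\specRadius(\matF))^2$, is what introduces the $1/\scalingSBM$ factor and the constant $c_1=\trace((\diagComSize\matCom_{\circ})^2)$ in \eqref{eq: Gramian differences alpha}.

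Assembling these estimates---multiplying the resolvent-norm bounds by the three-piece bound on the middle factor, collecting the single Hoeffding failure event, and simplifying with $\matCom=\scalingSBM\matCom_{\circ}$ and $\specRadius(\bar\matF)=\dimMatF\scalingSBM c_\circ$---yields the explicit inequality \eqref{eq: Gramian differences alpha}, and absorbing the lower-order terms gives the rate $\Oscale(\hoeffCoeffConst/\scalingSBM)$. The coarse bound \eqref{eq: Gramian differences alphatilde} then follows by repeating the same argument with the substitutions $\matF\mapsto\matC=\matLinCoarse\matF\matLinCoarse^\transpose$, $\dimMatF\mapsto\dimMatC$, $\comFine\mapsto\syncComAs=\matLinCoarse\comFine^\transpose$, $\diagComSize\mapsto\tfrac1{\dimMatC}\syncComAs^\transpose\syncComAs$, $\bar\matF\mapsto\bar\matC=\syncComAs\matCom\syncComAs^\transpose$ (see \eqref{def_coarseMembershipMat}), and using Lemma~\ref{lemma:spectralAtildebarEquivalence} in place of Lemma~\ref{lemma:spectralAbarEquivalence}. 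Only two points need extra, minor care there: the entries $\matC_{ij}=\sum_{u\in\mathcal{K}_i,v\in\mathcal{K}_j}\matF_{uv}/(\protoCovSize_i\protoCovSize_j)$ are weighted averages of independent $[0,1]$-bounded Bernoullis (the groups $\mathcal{K}_i$ are disjoint by the coarsening assumption of Section~\ref{subsec: coarsening}), so Hoeffding still applies entrywise with the appropriate $(\max-\min)$ range; and $\mathbb{E}[\matC^2]$ again differs from $\bar\matC^2$ only by a diagonal correction, treated exactly as in the bias step.

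The main obstacle is the perturbation-of-the-squared-matrix step: one has to keep the three error sources---the statistical fluctuation of $\matF^2$, the $\mathrm{Diag}$ bias coming from $\matF_{uw}^2=\matF_{uw}$, and the mismatch between $\specRadius(\matF)$ and $\specRadius(\bar\matF)$---carefully separated, bound each one \emph{after} the $(\constFine+\specRadius(\matF))^{-2}$ rescaling, and verify that the bias and the normalization mismatch are of the same $\scalingSBM$-order as (or dominated by) the fluctuation, so that the final bound collapses to the single clean rate $\Oscale(\hoeffCoeffConst/\scalingSBM)$ rather than a sum of incomparable terms. Everything else---the resolvent identity, the spectral-radius identities of Lemmas~\ref{lemma:spectralAbarEquivalence}--\ref{lemma:spectralAtildebarEquivalence}, the norm inequalities, and the union bound---is routine bookkeeping.
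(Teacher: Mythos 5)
Your overall architecture coincides with the paper's: both proofs express the controllability vectors through the resolvent $(\mbf{I}-\genMat_{\mathrm{nom}}^2)^{-1}$, apply the resolvent identity to reduce everything to $\matFnom^2-\matFBarNom^2$, split that middle factor into a statistical fluctuation, a diagonal bias (coming from $\matF_{uw}^2=\matF_{uw}$), and a normalization mismatch, control the fluctuation by entrywise Hoeffding on $\matF^2$ with a union bound over the $\dimMatF^2$ entries, and repeat the whole argument for $\matC$ under the substitutions you list. (One bookkeeping difference: the paper extracts the two resolvents with $\trace(\genMat\genMatOth)\le\|\genMat\|_2\trace(\genMatOth)$, keeping the trace of the middle factor, rather than your max-norm route; your ``polynomial-in-$\dimMatF$ prefactor'' would need to be tracked carefully, since a naive $\dimMatF^2$ prefactor on $\|\matFnom^2-\matFBarNom^2\|_{\mathrm{max}}$ loses a factor of $\dimMatF$ relative to the trace of the diagonal.)

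The genuine gap is in your treatment of the normalization mismatch. You propose to control $\lvert\specRadius(\matF)-\specRadius(\bar\matF)\rvert$ via $\|\matF-\bar\matF\|_2\le\dimMatF\|\matF-\bar\matF\|_{\mathrm{max}}$ and assert that ``the same event'' (the Hoeffding event for $\matF^2$) bounds it. It does not: the entries of $\matF-\bar\matF$ are individual centered Bernoullis, so $\|\matF-\bar\matF\|_{\mathrm{max}}$ is of constant order (it is at least $1-\scalingSBM\|\matCom_{\circ}\|_{\mathrm{max}}$ as soon as one edge is realized), and the resulting bound $\lvert\specRadius(\matF)-\specRadius(\bar\matF)\rvert\le\dimMatF\cdot\Oscale(1)$, compared with $\specRadius(\bar\matF)=\dimMatF\scalingSBM {c}_\circ$, gives a relative spectral-radius error of order $1/\scalingSBM$, which does not vanish and would swamp the target rate $\Oscale(\hoeffCoeffConst/\scalingSBM)$. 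The paper never touches $\matF-\bar\matF$: it deduces two-sided bounds on $\specRadius(\matF)^2=\specRadius(\matF^2)$ directly from the elementwise concentration $(1-\hoeffCoeffConst)(\cdot)\le\matF^2\le(1+\hoeffCoeffConst)(\cdot)$ via monotonicity of matrix norms (Proposition~\ref{prop:matrixElemInequExtends2norms}), obtaining $\sqrt{1-\hoeffCoeffConst}\le\specRadius(\matF)/\specRadius(\bar\matF)\le\sqrt{1+\hoeffCoeffConst}\,\sqrt{1+o(1)}$, which is exactly what makes the mismatch term of order $\hoeffCoeffConst$. Since you already have the concentration of $\matF^2$ in hand, the fix is to route the spectral-radius comparison through $\matF^2$ as the paper does (or to invoke a genuinely sharp spectral bound on $\|\matF-\bar\matF\|_2$, of order $\sqrt{\dimMatF\scalingSBM}$, which the trivial max-norm inequality cannot deliver); with that repair, and the same adjustment in the coarse case, your argument goes through.
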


Note that $\Cgram(\genMat, \mbf{I})=(\mbf{I}-\genMat^2)^{-1}$, where $\genMat$ can take $\matFnom$, $\matFBarNom$, $\matCnom$, or  $\matCBarNom$. Lemma \ref{lma: Gramian differences} is effectively bounding the difference of resolvents $(z\mbf{I}-\genMat^2)^{-1}$ evaluated at $z=1$. For $\matF=\matC$, we have $\dimMatF=\dimMatC$ and $\frac{\syncComAs^\transpose \syncComAs}{\dimMatC}=\diagComSize$ and hence, both $\alpha_\dimMatF$ and $\wt{\alpha}_n$ coincide. 
Lemma \ref{lma: Gramian differences} is basically a concentration result for the Gramians of $\mathcal{S}_\mathrm{fine}$ (or $\mathcal{S}_\mathrm{coarse}$) and $\overline{\mathcal{S}}_\mathrm{fine}$ (or $\overline{\mathcal{S}}_\mathrm{coarse}$). However the rate at which the difference goes to zero is different for the fine- and coarse systems.

\begin{proof}
We start by simplifying $\alpha_\dimMatF$ in \eqref{eq: Gramian differences alpha} (explanations for each step succeed the equations):
\begin{align}\label{alphanSteps0}
	 \nonumber\alpha_\dimMatF &= \left\| \bs{\theta}_{\mathrm{fine},\matF}-\bs{\theta}_{\mathrm{fine},\bar{\matF}} \right\|_1
	\\ 
	\nonumber&  \stackrel{(a)}{=} \left\|\mathrm{diag}\left[\Cgram(\matFnom, \mbf{I}_{\dimMatF\times \dimMatF})-\Cgram(\matFBarNom, \mbf{I}_{\dimMatF\times \dimMatF})\right]\right\|_1
	\\ 
	 \nonumber&  \stackrel{(b)}{=} \trace\left(\left\lvert\Cgram(\matFnom, \mbf{I}_{\dimMatF\times \dimMatF})-\Cgram(\matFBarNom, \mbf{I}_{\dimMatF\times \dimMatF})\right\rvert\right) 
	\\ 
	\nonumber&  \stackrel{(c)}{=} \trace\left(\left\lvert(\mbf{I}-\matFnom^2)^{-1}-(\mbf{I}-\matFBarNom^2)^{-1}\right\rvert\right)
	 \\ 
	 \nonumber& \stackrel{(d)}{=} \trace\left((\mbf{I}\!-\!\matFnom^2)^{-1}\lvert\matFnom^2\!-\!\matFBarNom^2\lvert(\mbf{I}\!-\!\matFBarNom^2)^{-1}\right)
	\\ 
	 \nonumber&  \stackrel{(e)}{\leq} \left\|(\mbf{I}-\matFnom^2)^{-1}\|_2 \trace\left(\left\lvert\matFnom^2-\matFBarNom^2\right\rvert\right) \|(\mbf{I}-\matFBarNom^2)^{-1}\right\|_2
	  \\
	&  \stackrel{(f)}{\leq} \frac{1}{1-\|\matFnom^2\|_2} \frac{1}{1-\|\matFBarNom^2\|_2}  \trace\left(\left\lvert\matFnom^2-\matFBarNom^2\right\rvert\right),
\end{align} 
where step (a) is due to the equivalent derivation of the fine controllability vector in \eqref{theta_fine}; (b) is due to the equivalence of the of trace of the absolute value of a matrix and the $\ell_1$ norm of its diagonal entries; (d)-(f) follow respectively from items~ \ref{item:dffInvToInv}-\ref{item:resolvNormUB} in the matrix equivalence list at the beginning of the appendix; finally, (c) follows from the definition of the Gramian matrix in \eqref{gramianFine_def} and the Neumann series formula such that 
 the limit below exists
\begin{align}\label{def:matFGramian}
    \nonumber\Cgram(\matFnom, \mbf{I}_{\dimMatF\times \dimMatF})&\triangleq\lim\limits_{T\to \infty}\Cgram_T(\matFnom, \mbf{I}_{\dimMatF\times \dimMatF}) 
    \\
    \nonumber& =
    \lim_{T\to \infty}\sum_{t=0}^{T-1} (\matFnom)^t(\matFnom^\transpose)^t \\
    & =\lim\limits_{T\to \infty}\sum_{t=0}^{T-1} (\matFnom)^{2t} = (\mbf{I}-\matFnom^2)^{-1},
    \end{align}
    because $\matF$ is a symmetric matrix and $\| \matFnom\|_2 =  \frac{\specRadius(\matF)}{\constFine+\specRadius(\matF)}<1$. Similarly, we have $\| \matFBarNom\|_2 =  \frac{\specRadius(\bar{\matF})}{\constFine+\specRadius(\bar{\matF})}<1$, and hence
    \begin{align*}
    \Cgram(\matFBarNom, \mbf{I}_{\dimMatF\times \dimMatF})\triangleq\lim\limits_{T\to \infty}\Cgram_T(\matFBarNom, \mbf{I}_{\dimMatF\times \dimMatF}) 
    =(\mbf{I}-\matFBarNom^2)^{-1}.
\end{align*}

In \eqref{alphanSteps0}, the two terms $\|\matFnom^2\|_2<1, \|\matFBarNom^2\|_2<1$ by definition.
In the following, we will find a bound on the remaining term $\trace(\lvert\matFnom^2-\matFBarNom^2\rvert)$.

Using the Hoeffding's inequality in Proposition~\ref{prop:hoeffdingIneq}, the following holds for $i,j\in[\dimMatF]$ and $i\neq j$, with probability at least $1-\hoeffProbConst$:
\begin{align}\label{probA2NonDiag}
        \nonumber \left\lvert [\matF^2]_{ij} - \mathbb{E}[{\matF}^2]_{ij}\right\rvert  
        & = \left\lvert \displaystyle\sum_{\ell \in [\dimMatF]} {\matF_{i\ell}\matF_{j\ell}} - {\mathbb{E}[\displaystyle\sum\matF_{i\ell}\matF_{j\ell}]}\right\rvert  \\
         \nonumber & \leq\! \sqrt{\!\frac{\dimMatF\log(\!\frac{1}{\hoeffProbConst}\!)}{2}}\frac{\max(\matF_{i\ell})\shortminus\min(\matF_{i\ell})}{\min([\bar{\matF}^2]_{ij})} [\bar{\matF}^2]_{ij}
         \\
         & \stackrel{(a)}{\leq} \sqrt{\frac{\log\left(\frac{1}{\hoeffProbConst}\right)}{2\dimMatF}}\frac{1}{ \left\|\matCom \diagComSize \matCom\right\|_{\mathrm{min}}} [\bar{\matF}^2]_{ij},
\end{align} 
where (a) follows by definitions of $\diagComSize,\comFine$ in Section \ref{subsec:SBMmodel} coupled with the fact that
\begin{align*}
         \bar{\matF}^2 = \comFine \matCom \comFine^\transpose \comFine \matCom \comFine^\transpose  = \dimMatF \comFine \matCom \diagComSize  \matCom \comFine^\transpose.
\end{align*}
Similarly, for the diagonal elements, the following inequality holds with probability at least $1-\hoeffProbConst$:
\begin{align}\label{probA2Diag1}
        \nonumber\left\lvert [\matF^2]_{ii} - \mathbb{E}[{\matF}^2]_{ii}\right\rvert 
        & = \!\left\lvert \displaystyle\sum_{\ell \in [\dimMatF]} {\matF_{i\ell}^2} \shortminus \mathbb{E}[{\matF_{i\ell}^2}]\right\rvert 
         \!=\!  \left\lvert \displaystyle\sum_{\ell \in [\dimMatF]} {\matF_{i\ell}} \shortminus \mathbb{E}[{\matF_{i\ell}}]\right\rvert
        \\ 
        \nonumber & \leq \! \sqrt{\!\frac{\dimMatF\log(\!\frac{1}{\hoeffProbConst}\!)}{2}}\frac{\max(\matF_{i\ell})\shortminus\min(\matF_{i\ell})}{\min({\mathbb{E}\left[\matF^2\right]_{ii}})} \mathbb{E}[{\matF}^2]_{ii}
        \\
        & \stackrel{(a)}{\leq} \sqrt{\frac{\log\left(\frac{1}{\hoeffProbConst}\right)}{2\dimMatF}}\frac{1}{ \|\matCom \comSizes\|_{\mathrm{min}}} \mathbb{E}[{\matF}^2]_{ii},
\end{align}
where (a) is due to the identity $\mathbb{E}\left[\matF^2\right]_{ii}=\comFine_i \matCom \comFine \mbf{1}_\dimMatF=\comFine_i \matCom \comSizes$.
By merging \eqref{probA2NonDiag}, \eqref{probA2Diag1} and putting them in a matrix form, and by defining $\hoeffCoeffConst$ as
\begin{align*}
    \frac{1}{\sqrt{2\dimMatF}}\!\max\!\left(\! \frac{\sqrt{\log(\frac{\dimMatF^2\!-\!\dimMatF}{\hoeffProbConst})}}{\|\matCom  \diagComSize \matCom \|_{\mathrm{min}}},
    \frac{\scalingSBM\sqrt{\log(\frac{\dimMatF}{\hoeffProbConst})}}{\|\matCom \! \comSizes\|_{\mathrm{min}}}\!\right) 
    \!\leq \!\sqrt{\!\frac{\log\left(\!\frac{\dimMatF^2}{\hoeffProbConst}\!\right)}{2\dimMatF}} \frac{1}{\|\matCom \!\diagComSize \matCom\|_{\mathrm{min}}} \!\triangleq \!\hoeffCoeffConst,
\end{align*}
the following element-wise inequality holds with probability at least $1-\hoeffProbConst$:
\begin{align}\label{AsquareConcentration_UB}
     \nonumber\matF^2  & \leq (1+\hoeffCoeffConst) \mathbb{E}[{\matF}^2]\\
     \nonumber & \leq (1+\hoeffCoeffConst) (\bar{\matF}^2 - \mathrm{Diag}[{\dimMatF \comFine^\transpose \matCom \diagComSize  \matCom\comFine}] + \dimMatF{\mathrm{Diag}[\comFine^\transpose \matCom \comSizes]}
     \\
     &  = (1\!+\!\hoeffCoeffConst) \dimMatF \left( \comFine^\transpose\matCom \diagComSize  \matCom\comFine \!+\!  \mathrm{Diag}[\comFine^\transpose(\matCom \odot(\mathbf{1}_{\dimMatCom,\dimMatCom}\!-\!\matCom) \comSizes)]\right).
\end{align}
Similarly, for the lower bound we get:
\begin{align}\label{AsquareConcentration_LB}
     \matF^2  \geq (1\!-\!\hoeffCoeffConst) \dimMatF\!\left(\! \comFine^\transpose\matCom \diagComSize  \matCom\comFine \!+ \! \mathrm{Diag}\left[\!\comFine^\transpose\!(\matCom \!\odot\!(\mathbf{1}_{\dimMatCom,\dimMatCom}\!-\!\matCom) \comSizes)\right]\right).
\end{align} 
From \eqref{AsquareConcentration_UB}, \eqref{AsquareConcentration_LB}, and Proposition~\ref{prop:matrixElemInequExtends2norms}, we obtain several matrix norm inequalities which we use later:
\begin{align}\label{specRadiusAsquared_UB}
        \nonumber \specRadius^2(\matF) &  = \specRadius(\matF^2) \\
         \nonumber &  \geq \! (1\shortminus\hoeffCoeffConst) \left(\specRadius(\bar{\matF})^2 \!+\! \dimMatF \specRadius(\mathrm{Diag}[\comFine^\transpose(\matCom \odot(\mathbf{1}_{\dimMatCom,\dimMatCom}\shortminus\matCom) \comSizes)])\right)\\
         & =  (1-\hoeffCoeffConst) \left(\dimMatF^2  \specRadius^2(\matCom \diagComSize) + \dimMatF  \| \matCom \odot(\mathbf{1}_{\dimMatCom,\dimMatCom}-\matCom) \comSizes\|_{\mathrm{max}}\right),
\end{align}
and:
\begin{align}\label{specRadiusAsquared_LB}
        \nonumber\specRadius^2(\matF)  & \leq (1+\!\hoeffCoeffConst) \dimMatF \left(\dimMatF \specRadius^2(\!\matCom \diagComSize)\! + \!\| \matCom \!\odot\!(\mathbf{1}_{\dimMatCom,\dimMatCom}\!-\!\matCom) \comSizes\|_{\mathrm{max}}\!\right)
        \\
        & \leq (1+\hoeffCoeffConst) \dimMatF^2 \scalingSBM^2\left( \specRadius^2(\matCom_{\circ} \diagComSize) + \frac{1}{\dimMatF \scalingSBM}\right),
\end{align}
by employing the triangle inequality on matrix norm.
In addition, \eqref{def_coarseMembershipMat} and the invariance of trace under cyclic permutation yield the following inequality
\begin{align}\label{trAbarSquare}
    \trace\left(\bar{\matF}^2\right) = \dimMatF^2\trace\left((\diagComSize\matCom)^2\right).
\end{align}
Combining \eqref{trAbarSquare} and \eqref{AsquareConcentration_UB}, gives
\begin{align}\label{trASquare}
        \nonumber  \trace\left(\matF^2\right) 
        & \leq  \trace\left[(1\!+\!\hoeffCoeffConst) \dimMatF \left( \comFine^\transpose\matCom \diagComSize  \matCom\comFine \!+\!  \mathrm{Diag}[\comFine^\transpose(\matCom \!\odot\!(\mathbf{1}_{\dimMatCom,\dimMatCom}\!\shortminus\!\matCom) \comSizes)]\right)\right]  \\
          & \leq (1+\hoeffCoeffConst)\dimMatF^2\left[\trace\left((\diagComSize\matCom)^2\right) + \| \matCom\comSizes\|_{\mathrm{max}}\right]
\end{align}
Moreover, Lemma~\ref{lemma:spectralAbarEquivalence} and \eqref{specRadiusAsquared_UB} give
\begin{align}\label{spectRadiusAdivAbar_UB}
    \nonumber \frac{\specRadius(\matF)}{\specRadius(\bar{\matF})} & \leq 
      \frac{\sqrt{(1+\hoeffCoeffConst) \dimMatF^2 \scalingSBM^2\left( \specRadius^2(\matCom_{\circ} \diagComSize) + \frac{1}{\dimMatF \scalingSBM}\right)}}{\dimMatF \scalingSBM \specRadius(\matCom_{\circ} \diagComSize)}  \\
       & \leq  \sqrt{1+\hoeffCoeffConst} \sqrt{1+ \frac{1}{\dimMatF \scalingSBM \specRadius^2(\matCom_{\circ} \diagComSize)}}.
\end{align}
Similarly, by combining Lemma~\ref{lemma:spectralAbarEquivalence} and \eqref{specRadiusAsquared_LB}, we have
\begin{align}\label{spectRadiusAdivAbar_LB}
    \frac{\specRadius(\matF)}{\specRadius(\bar{\matF})} \geq 
       \sqrt{(1-\hoeffCoeffConst) + (1-\hoeffCoeffConst) \frac{ \| \matCom_{\circ} \odot(\mathbf{1}-\matCom) \comSizes\|_{\mathrm{max}}}{\dimMatF \scalingSBM \specRadius^2(\matCom_{\circ} \diagComSize)}}
       \geq  \sqrt{1-\hoeffCoeffConst}.
\end{align}
Combining \eqref{AsquareConcentration_UB}-\eqref{spectRadiusAdivAbar_LB}, gives the following upper bound for the $\trace\left(\left\lvert\matFnom^2-\matFBarNom^2\right\rvert\right)$ term:
\begin{align}\label{diffAsquareNom}
     \nonumber\trace\left(\left\lvert\matFnom^2-\matFBarNom^2\right\rvert\right)
      & \stackrel{(a)}{=} \trace\left(\left\lvert\frac{\matF^2-\bar{\matF}^2}{(\constFine+\specRadius(\bar{\matF}))^2} + (\frac{{\matF}^2}{\constFine+\specRadius({\matF})}-\frac{{\matF}^2}{\constFine+\specRadius(\bar{\matF})})\right\rvert\right) 
     \\   
      \nonumber& \stackrel{(b)}{\leq} \! \trace\left(\left\lvert\frac{\hoeffCoeffConst\bar{\matF}^2+\hoeffCoeffConst \dimMatF \mathrm{Diag}[\comFine^\transpose\matCom \odot(\mathbf{1}-\matCom) \comSizes]}{(\constFine+\specRadius(\bar{\matF}))^2} \!+\! (\frac{{\matF}^2}{(\constFine+\specRadius(\matF))^2}\!-\!\frac{{\matF}^2}{(\constFine+\specRadius(\bar{\matF}))^2})\right\rvert\right) \\
      \nonumber& \stackrel{(c)}{\leq}  \hoeffCoeffConst \frac{\trace\left(\bar{\matF}^2\right)}{(\constFine+\specRadius(\bar{\matF}))^2} + \hoeffCoeffConst \dimMatF\frac{ \trace\left(\mathrm{Diag}[\comFine^\transpose\matCom \odot(\mathbf{1}-\matCom) \comSizes]\right)}{(\constFine+\specRadius(\bar{\matF}))^2} 
      + \left\lvert \frac{1}{(\constFine+\specRadius(\matF))^2}-\frac{1}{(\constFine+\specRadius(\bar{\matF}))^2}\right\rvert \trace\left({\matF}^2\right) \\
       \nonumber& \stackrel{(d)}{\leq}  \hoeffCoeffConst \frac{\dimMatF^2\trace\left((\diagComSize\matCom)^2\right)}{\left(\constFine+\dimMatF \specRadius(\matCom \diagComSize)\right)^2} + \hoeffCoeffConst \frac{\dimMatF^2\|\matCom \odot(\mathbf{1}-\matCom) \comSizes \|_{\mathrm{max}}}{(\constFine+\dimMatF \specRadius(\matCom \diagComSize))^2}
      \\  \nonumber& \qquad  
      + \left\lvert 1 - \frac{\specRadius(\matF)}{\specRadius(\bar{\matF})} \right\rvert \left( 1 + \frac{\specRadius(\matF)+2\constFine}{\specRadius(\bar{\matF})} \right)
      \frac{(1+\hoeffCoeffConst)\dimMatF^2\left[\trace\left((\diagComSize\matCom)^2\right) + \| \matCom\comSizes\|_{\mathrm{max}}\right]}{(1-\hoeffCoeffConst) \left(\dimMatF^2  \specRadius^2(\matCom \diagComSize) + \dimMatF  \| \matCom \odot(\mathbf{1}-\matCom) \comSizes\|_{\mathrm{max}}\right)}\\
       \nonumber& \stackrel{(e)}{\leq}  \hoeffCoeffConst \left[\frac{1/\scalingSBM+\trace\left((\diagComSize\matCom_{\circ})^2\right)}{ \specRadius^2(\matCom_{\circ} \diagComSize)}+\frac{1}{2} (1+4\hoeffCoeffConst)\frac{\left[\trace\left((\diagComSize\matCom_{\circ})^2\right) + \frac{1}{\scalingSBM}\right]}{  \specRadius^2(\matCom_{\circ} \diagComSize) }\right.
      \\  & 
      \qquad\qquad\qquad\qquad\qquad\qquad
      \quad 
      \times\! \!\left.\left( \!1 \!+ \!\sqrt{1\!+\!\hoeffCoeffConst} \sqrt{1+ \frac{ 1}{\dimMatF\scalingSBM  \specRadius^2(\matCom_{\circ} \diagComSize)}} \!+\! \frac{2\constFine}{\dimMatF\scalingSBM  \specRadius(\matCom_{\circ} \diagComSize)} \! \right)\!\right]\!,
\end{align}
where step (a) follows from the normalization factors defined in \eqref{generalLTI} and an additive-subtractive term; (b) is by replacing from \eqref{AsquareConcentration_UB}; 
(c) is due to triangle inequality and the linearity of trace; (d) is because of \eqref{trAbarSquare}, \eqref{trASquare}, \eqref{specRadiusAsquared_LB}, the definition of $\comFine$ in \eqref{def_scaled_comMemberMat}, and the following inequality
\begin{align}
        \nonumber\left\lvert \frac{1}{(\constFine\!+\!\specRadius(\matF))^2}\shortminus\frac{1}{(\constFine\!+\!\specRadius(\bar{\matF}))^2}\right\rvert  &  
        \!=\! \left\lvert \frac{(\constFine\!+\!\specRadius(\bar{\matF}))^2\shortminus(\constFine\!+\!\specRadius(\matF))^2}{(\constFine\!+\!\specRadius(\matF))^2(\constFine\!+\!\specRadius(\bar{\matF}))^2} \right\rvert  
        \\ \nonumber& = \left\lvert \frac{(\specRadius(\bar{\matF})+\specRadius(\matF)+2\constFine)(\specRadius(\bar{\matF})-\specRadius(\matF))}{(\constFine+\specRadius(\matF))^2(\constFine+\specRadius(\bar{\matF}))^2} \right\rvert 
        \\ & \leq \frac{\left[1+\frac{\specRadius(\matF)+2\constFine}{\specRadius(\bar{\matF})}\right]\left\lvert 1-\frac{\specRadius(\matF)}{\specRadius(\bar{\matF})}\right\rvert}{(\constFine+\specRadius(\matF))^2};
\end{align}
finally, (e) follows from \eqref{spectRadiusAdivAbar_LB}, \eqref{spectRadiusAdivAbar_UB}, and two easy-to-validate inequalities $\sqrt{1-\hoeffCoeffConst}\geq 1-\hoeffCoeffConst/2$ and  $\frac{1+\hoeffCoeffConst}{1-\hoeffCoeffConst}\leq 1+4\hoeffCoeffConst, \forall\hoeffCoeffConst<1/2$.
From \eqref{diffAsquareNom}, and using the normalization factor definition when $\matF$ and its expectation are the dynamics of the LTI system in \eqref{fine_LTI}, we can further simplify \eqref{alphanSteps0} as
\begin{align}\label{alphanSteps}
	 \alpha_\dimMatF
	  & \leq \frac{ \hoeffCoeffConst\left[\frac{1/\scalingSBM+{c}_1}{ {c}_\circ^2}+ (\frac{1}{2}+2\hoeffCoeffConst)\frac{\left[{c}_1 + \frac{1}{\scalingSBM}\right]}{{c}_\circ^2}
      \left( 1 \!+ \!\sqrt{1+\hoeffCoeffConst} \sqrt{1+ \frac{ 1/\scalingSBM}{\dimMatF  {c}_\circ^2}} + \frac{2\constFine/\scalingSBM}{\dimMatF  {c}_\circ} \right)\right]}{\left[1-\left(\frac{{c}_\circ}{\constFine/\scalingSBM+{c}_\circ}\right)^2\right] \left[1-\left(\frac{\specRadius(\matF)}{\constFine+\specRadius(\matF)}\right)^2\right]}
      \\
      &  \stackrel{(a)}{=}  \Oscale(\frac{\hoeffCoeffConst}{\scalingSBM}).
\end{align}
In the above ${c}_\circ= \specRadius(\matCom_{\circ} \diagComSize), {c}_1=\trace\left((\diagComSize\matCom_{\circ})^2\right)$, and step (a) holds in the asymptotic scenario when $\diagComSize,\matCom$ are constant and $\sqrt{\dimMatF}\scalingSBM$ is very large and hence, $\hoeffCoeffConst$ becomes very small.


elaborated in the following.

Following similar lines as those in \eqref{alphanSteps0}, the upper bound on $\tilde{\alpha}_\dimMatC$ \eqref{eq: Gramian differences alphatilde} can be simplified to
\begin{align}\label{alphaTildem0}
	 \tilde{\alpha}_\dimMatC 
	  \leq \frac{1}{1-\|\matCnom^2\|_2} \frac{1}{1-\|\matCBarNom^2\|_2}  \trace(\lvert\matCnom^2-\matCBarNom^2\rvert).
\end{align} 
To get \eqref{alphaTildem0}, one should note that \eqref{gramianFine_def} and the Neumann series yield
\begin{align}\label{def:matCGramian}
	\nonumber\Cgram(\matCnom, \mbf{I}_{m\times m}) & \triangleq\lim\limits_{T\to \infty}\Cgram_T(\matCnom, \mbf{I}_{m\times m}) 
	\\
	\nonumber&=
	\lim_{T\to \infty}\sum_{t=0}^{T-1} \matCnom^t\left(\matCnom^\transpose\right)^t \\
	&=\lim\limits_{T\to \infty}\sum_{t=0}^{T-1} \matCnom^{2t}=(\mbf{I}-\matCnom^2)^{-1},
\end{align}
since $\| \matCnom\|_2 = \frac{\specRadius(\matC)}{\constCoarse+\specRadius(\matC)} <1$. 
Next, we simplify the term $\trace(\lvert\matCnom^2-\matCBarNom^2\rvert)$ in \eqref{alphaTildem0} using similar concentration bounds as those in \eqref{diffAsquareNom}.
From Proposition~\ref{prop:hoeffdingIneq}, Lemma~\ref{lemma:spectralAtildebarEquivalence}, and the definition of $\bar{\matC}$ in \eqref{def_coarseMembershipMat}, the following holds for $i,j\in[\dimMatC]$ and $i\neq j$ with probability at least $1-\tilde{\hoeffProbConst}$:
\begin{align}\label{probAcoarse2NonDiag}
        \nonumber\lvert [\matC^2]_{ij} - \mathbb{E}[{\matC}^2]_{ij}\rvert 
        & = \lvert \displaystyle\sum_{\ell \in [\dimMatC]} {\matC_{i\ell}\matC_{j\ell}} - \underbrace{\mathbb{E}[\displaystyle\sum\matC_{i\ell}\matC_{j\ell}]}_{[\bar{\matC}^2]_{ij}}\rvert 
        \\
        \nonumber&  \leq \sqrt{\frac{\dimMatC\log\left(\frac{1}{\tilde{\hoeffProbConst}}\right)}{2}}\frac{\max(\matC_{i\ell}\matC_{j\ell})-\min(\matC_{i\ell}\matC_{j\ell})}{\min(\syncComAs^\transpose_i \matCom \syncComAs^\transpose\syncComAs  \matCom \syncComAs_j)} \mathbb{E}[{\matC}^2]_{ij}
        \\
        & {\leq} \sqrt{\frac{\log\left(\frac{1}{\tilde{\hoeffProbConst}}\right)}{2\dimMatC}}\frac{1}{\left\|\matCom \frac{\syncComAs^\transpose\syncComAs}{\dimMatC} \matCom\right\|_{\mathrm{min}}} \mathbb{E}[{\matC}^2]_{ij}.
\end{align}
Similarly for the diagonal elements, we have:
\begin{align}\label{probAcoarse2Diag1}
       \nonumber \left\lvert [\matC^2]_{ii} -  \mathbb{E}[{\matC}^2_{ii}]\right\rvert 
          &= \left\lvert \displaystyle\sum_{\ell \in [\dimMatC]} {\matC_{i\ell}^2} - \mathbb{E}[{\matC_{i\ell}^2}]\right\rvert 
        \\ 
        & \leq \sqrt{\frac{\dimMatC\log\left(\frac{1}{\tilde{\hoeffProbConst}}\right)}{2}}\frac{\max(\matC_{i\ell}^2)-\min(\matC_{i\ell}^2)}{\min(\mathbb{E}[{\displaystyle\sum_{\ell \in [\dimMatC]}\matC_{i\ell}^2}])} \mathbb{E}[\displaystyle\sum_{\ell \in [\dimMatC]} {\matC_{i\ell}^2}].
\end{align}
Next, we simplify the term $\mathbb{E}[\matC_{i\ell}^2]$ in \eqref{probAcoarse2Diag1}. 
From Section \ref{subsec:SBMmodel}, we can verify that
\begin{align}\label{poissonBinomialDist0}
\protoCovSize^2\matC_{ij} \sim  \text{Poisson-Binomial}(\lbrace\underbrace{\matCom_{kk'}\cdots \matCom_{kk'}}_{\protoCovSize^2\syncComAs_{ik} \syncComAs_{jk'}}: k,k'\in[\dimMatCom]\rbrace),
\end{align}
since $\protoCovSize^2\matC_{ij}$ is a sum of independent (but not necessarily identically) Bernoulli random variables, parameterized by $\matCom_{kk'}$s for $\protoCovSize^2\syncComAs_{ik} \syncComAs_{jk'}$ of the variables contributing to the sum. We can denote \eqref{poissonBinomialDist0} concisely by:
\begin{align}\label{poissonBinomialDist}
     \protoCovSize^2\matC_{ij} \sim \text{Poisson-Binomial}(\lbrace \mathrm{vec}(\matCom) \rbrace^{\odot \mathrm{vec}(\protoCovSize^2\syncComAs_i \syncComAs_j^\transpose)}),
\end{align}
where $\mathrm{vec}(\genMat)$ outputs a vector containing all the elements in its input matrix $\genMat$. We use $\lbrace \mathrm{vec}(\genMat)\rbrace^{\odot \mathrm{vec}(\genMatOth)}$ to denote a vector that contains all the elements of $\genMat$, where $\genMat_{ij}$ has $\genMatOth_{ij}$ occurrences. 
From \eqref{poissonBinomialDist}, and using the variance definition of Poisson-Binomial, the term $\mathbb{E}[\matC_{i\ell}^2]$ in \eqref{probAcoarse2Diag1} can be represented as
\begin{align}\label{expAtildeElemSquared}
        \nonumber\mathbb{E}[\matC_{i\ell}^2] & =  \mathbb{E}[\matC_{i\ell}]^2 + \mathrm{var}(\matC_{i\ell}) \\
         & = (\syncComAs^\transpose_i \matCom \syncComAs_{\ell})^2 
         + \frac{1}{\protoCovSize^4} (\mbf{1}\!-\!\lbrace \mathrm{vec}(\matCom) \rbrace^{\odot \mathrm{vec}(\protoCovSize^2\syncComAs_i \syncComAs_\ell^\transpose)})^\transpose \lbrace \mathrm{vec}(\matCom) \rbrace^{\odot \mathrm{vec}(\protoCovSize^2\syncComAs_i \syncComAs_\ell^\transpose)} \\
         \nonumber& = \!(\syncComAs^\transpose_i \!\matCom\! \syncComAs_{\ell})^2 
        \!+ \!\frac{1}{\protoCovSize^4} 
         \mathrm{vec}(\protoCovSize^2\syncComAs_i \syncComAs_\ell^\transpose)^\transpose [\mathrm{vec}(\matCom)\odot (\mbf{1}_{\dimMatCom^2}\!-\!\mathrm{vec}(\matCom))]\\
         & = \scalingSBM^2 (\syncComAs^\transpose_i \matCom_{\circ} \syncComAs_{\ell})^2 \!+\! \frac{\scalingSBM}{\protoCovSize^2} 
         \mathrm{vec}(\syncComAs_i \syncComAs_\ell^\transpose)^\transpose \mathrm{vec}(\matCom_{\circ} \odot (\mbf{1}_{\dimMatCom,\dimMatCom}\!-\!\matCom)).
\end{align}
We then sum over the $\ell$ index of \eqref{expAtildeElemSquared} to obtain the $\mathbb{E}[{\matC}^2_{ii}]$ term in \eqref{probAcoarse2Diag1}:  
\begin{align}\label{diagEqLB}
         \mathbb{E}[{\matC}^2_{ii}] & = \mathbb{E}[\displaystyle\sum_{\ell \in [\dimMatC]}\matC_{i\ell}^2] \\
         \nonumber&= \!\syncComAs_i^\transpose \! \matCom \! \syncComAs^\transpose\! \syncComAs \! \matCom \! \syncComAs_i \!+\! 
        \frac{1}{\protoCovSize^2} \!
        \mathrm{vec}(\!\syncComAs_i\overbrace{\displaystyle\sum_{\ell \in [\dimMatC]}\syncComAs_\ell^\transpose}^{\mbf{1}_\dimMatC^\transpose\syncComAs}\!)^\transpose \mathrm{vec}(\!\matCom  \odot (\mbf{1}_{\dimMatCom,\dimMatCom}\!-\!\matCom)\!) \\
         \nonumber& = \syncComAs_i^\transpose  \matCom  \syncComAs^\transpose \syncComAs  \!\matCom \! \syncComAs_i \!+\!
        \frac{1}{\protoCovSize^2} 
        \mathrm{vec}(\syncComAs_i  \mbf{1}_\dimMatC^\transpose\syncComAs)^\transpose \mathrm{vec}(\matCom  \odot (\mbf{1}_{\dimMatCom,\dimMatCom}\!-\!\matCom)) \\
         & {\geq} 
         \dimMatC\|\matCom \frac{\syncComAs^\transpose \syncComAs}{\dimMatC} \matCom\|_{\mathrm{min}}.
\end{align}
Similarly an upper bound for \eqref{diagEqLB} is obtained:
\begin{align}\label{diagEqUB}
        \nonumber\mathbb{E}[{\matC}^2_{ii}] & = \mathbb{E}[\displaystyle\sum_{\ell \in [\dimMatC]}\matC_{i\ell}^2]
        \\ \nonumber & 
        = \dimMatC\scalingSBM^2 \syncComAs_i^\transpose  \matCom_{\circ}  \frac{\syncComAs^\transpose \syncComAs}{\dimMatC} \matCom_{\circ} \syncComAs_i + 
        \frac{\scalingSBM}{4\protoCovSize^2} 
        \mathrm{vec}(\syncComAs_i  \mbf{1}^\transpose\syncComAs)^\transpose \mbf{1}\\
        & \leq  \dimMatC\scalingSBM^2 \|\matCom_{\circ}  \frac{\syncComAs^\transpose \syncComAs}{\dimMatC} \matCom_{\circ}\|_{\mathrm{max}} + 
        \frac{\scalingSBM}{4\protoCovSize^2} \underbrace{\sum_{\ell\in[\dimMatC],k,k'\in[\dimMatCom]}\syncComAs_{ik}\syncComAs_{\ell k'}}_{=\dimMatC}
\end{align}
From \eqref{diagEqLB}, and the fact $0\leq \matC_{i,\ell}\leq 1$ by definition in \eqref{linear_coarsening_model}, the upper bound on  \eqref{probAcoarse2Diag1} can be further simplified as
\begin{align}\label{probAcoarse2Diag}
        \lvert [\matC^2]_{ii} - \mathbb{E}[{\matC}^2_{ii}]\rvert 
        \leq \sqrt{\frac{\log\left(\frac{1}{\tilde{\hoeffProbConst}}\right)}{2\dimMatC}}\frac{1}{ \| \matCom  \frac{\syncComAs^\transpose \syncComAs}{\dimMatC} \matCom\|_{\mathrm{min}}}  \mathbb{E}[{\displaystyle\sum_{\ell \in [\dimMatC]}\matC_{i\ell}^2}].
\end{align}
By merging \eqref{probAcoarse2NonDiag} and \eqref{probAcoarse2Diag} and putting $[\matC^2]_{ij}$'s in a matrix form, we get
\begin{align}\label{AtildesquareConcentration_UB}
    \nonumber \matC^2 &  \leq (1+\tilde{\hoeffCoeffConst}) \mathbb{E}[{\matC}^2]\\
     \nonumber& \leq (1+\tilde{\hoeffCoeffConst})  \left(\bar{\matC}^2+  \frac{\dimMatC\scalingSBM}{4\protoCovSize^2} \mbf{I} \right)   \\
     \nonumber&  = (1+\tilde{\hoeffCoeffConst})   \left( (\syncComAs  \matCom \syncComAs^\transpose)^2 +  \frac{\dimMatC\scalingSBM}{4\protoCovSize^2} \mbf{I} \right) \\
     &  = (1+\tilde{\hoeffCoeffConst})  \dimMatC \left(\syncComAs  \matCom \frac{\syncComAs^\transpose \syncComAs}{\dimMatC} \matCom \syncComAs^\transpose +  \frac{\scalingSBM}{4\protoCovSize^2} \mbf{I}\right),
\end{align}
and
\begin{align}\label{AtildesquareConcentration_LB}
     \matC^2  \geq (1-\tilde{\hoeffCoeffConst}) \dimMatC  \syncComAs  \matCom \frac{\syncComAs^\transpose \syncComAs}{\dimMatC} \matCom \syncComAs^\transpose,
\end{align}
which both hold with probability at least $1-\tilde{\hoeffProbConst}$ for $\tilde{\hoeffCoeffConst}$  in \ref{hoeffCoeffConst_def}.
Similar to the $\alpha_\dimMatF$ derivations, further norm inequalities are obtained from \eqref{AtildesquareConcentration_UB} and \eqref{AtildesquareConcentration_LB}, which will be then used to simplify the upper bound  on $\tilde{\alpha}_\dimMatC$ in \eqref{alphaTildem0}. Using Proposition~\ref{prop:matrixElemInequExtends2norms}), $\specRadius(\matC)^2$ can be bounded from below by
\begin{align}\label{spectRadiusAcoarseSquared_UB}
        \nonumber\specRadius(\matC)^2 &  = \specRadius(\matC^2) \\
         & \geq (1-\tilde{\hoeffCoeffConst}) \dimMatC^2 \scalingSBM^2 \specRadius^2\left(\matCom_{\circ} \frac{\syncComAs^\transpose \syncComAs}{\dimMatC}\right),
\end{align}
and bounded from above by
\begin{align}\label{spectRadiusAcoarseSquared_LB}
    \specRadius(\matC)^2 
         \leq (1+\tilde{\hoeffCoeffConst}) \dimMatC[  \dimMatC \scalingSBM^2 \specRadius^2\left(\matCom_{\circ} \frac{\syncComAs^\transpose \syncComAs}{\dimMatC}\right) + \frac{\scalingSBM}{4\protoCovSize^2} ].
\end{align}
Equation \eqref{spectRadiusAcoarseSquared_UB}, coupled with Lemma~\ref{lemma:spectralAtildebarEquivalence}, gives
\begin{align}\label{spectralTildeFractionUB}
        \nonumber\frac{\specRadius(\matC)}{\specRadius(\bar{\matC})} & \leq   \frac{\sqrt{(1+\tilde{\hoeffCoeffConst}) \dimMatC[  \dimMatC \scalingSBM^2 \specRadius^2\left(\matCom_{\circ} \frac{\syncComAs^\transpose \syncComAs}{\dimMatC}\right) + \frac{\scalingSBM}{4\protoCovSize^2} ]}}{ \dimMatC  \specRadius\left(\matCom \frac{\syncComAs^\transpose \syncComAs}{\dimMatC}\right)} \\
        & \leq {\sqrt{1+\tilde{\hoeffCoeffConst}}}\sqrt{1 + \frac{1}{4\protoCovSize^2 \dimMatC \scalingSBM  \specRadius^2\left(\matCom_{\circ} \frac{\syncComAs^\transpose \syncComAs}{\dimMatC}\right)} }.
\end{align}
 Similarly, with the help of \eqref{spectRadiusAcoarseSquared_LB}, we obtain
 \begin{align}
        \nonumber\frac{\specRadius(\matC)}{\specRadius(\bar{\matC})} & \geq \sqrt{1-\tilde{\hoeffCoeffConst}}\sqrt{1 + \frac{1}{4\protoCovSize^2 \dimMatC \scalingSBM  \specRadius^2\left(\matCom_{\circ} \frac{\syncComAs^\transpose \syncComAs}{\dimMatC}\right)} }
        \\
        & \geq \sqrt{1-\tilde{\hoeffCoeffConst}}.
\end{align}
In addition, \eqref{def_coarseMembershipMat} and the invariance of trace under cyclic permutation, yield the following inequality 
\begin{align}\label{trAtildebarSquare}
    \trace\left(\bar{\matC}^2\right) = \dimMatC^2\trace\left((\frac{\syncComAs^\transpose \syncComAs}{\dimMatC}\matCom)^2\right).
\end{align}
Combining \eqref{trAtildebarSquare} and \eqref{AtildesquareConcentration_UB}, gives
\begin{align}\label{trAtildeSquare}
        \nonumber\trace\left(\matC^2\right) 
        & \leq  \trace\left[(1+\tilde{\hoeffCoeffConst})  \dimMatC \left(\syncComAs  \matCom \frac{\syncComAs^\transpose \syncComAs}{\dimMatC} \matCom \syncComAs^\transpose +  \frac{\scalingSBM}{4\protoCovSize^2} \mbf{I}\right)\right]  \\
         & \leq (1+\tilde{\hoeffCoeffConst})\dimMatC^2\left[\trace\left((\frac{\syncComAs^\transpose \syncComAs}{\dimMatC}\matCom)^2\right)+\frac{\scalingSBM}{4\protoCovSize^2}\right].
\end{align}
Similar to \eqref{alphanSteps}, we bound the term $\trace(\lvert\matCnom^2-\matCBarNom^2\rvert)$ to finalize the upper bound simplification in \eqref{alphaTildem0} (inner steps are removed due to redundancy):
\begin{align}\label{ATildenomSquareFrobNormUB}
        \tilde{\alpha}_\dimMatC
        & \leq  \frac{\tilde{\hoeffCoeffConst}\left[\frac{1/\scalingSBM+\tilde{c}_1}{ \tilde{c}_\circ^2}+ (\frac{1}{2}+2\tilde{\hoeffCoeffConst})\frac{\left[\tilde{c}_1 + \frac{1}{\scalingSBM}\right]}{\tilde{c}_\circ^2}
      \left( 1 \!+ \!\sqrt{1+\tilde{\hoeffCoeffConst}} \sqrt{1+ \frac{ 1/\scalingSBM}{\dimMatC  \tilde{c}_\circ^2}} + \frac{2\constFine/\scalingSBM}{\dimMatC  \tilde{c}_\circ} \right)\right]}
      {\left[1-\left(\frac{\tilde{c}_\circ}{\constCoarse/\scalingSBM+\tilde{c}_\circ}\right)^2\right] \left[1-\left(\frac{\specRadius(\matC)}{\constCoarse+\specRadius(\matC)}\right)^2\right]}
      \\
      & = \Oscale(\frac{\tilde{\hoeffCoeffConst}}{\scalingSBM}),
\end{align}
where $\tilde{c}_\circ= \specRadius\left(\matCom_{\circ} \frac{\syncComAs^\transpose \syncComAs}{\dimMatC}\right), \tilde{c}_1=\trace\left((\frac{\syncComAs^\transpose \syncComAs}{\dimMatC}\matCom_{\circ})^2\right)$.
The last scaling function is similarly found as that of \eqref{alphanSteps}, for the asymptotic scenario when $\sqrt{\dimMatC}\scalingSBM$ is very large and hence $\tilde{\hoeffCoeffConst}$ becomes very small.
The proof is now complete. 
\end{proof}

\vspace{-1em}
\subsection{Proof of Lemma \ref{lemma:approximate_gAC}}\label{subsec:proofLemmaApproximategAC}
We aim to find the fine controllability $\overline{\mathcal{S}}_\mathrm{fine}$, i.e. when \eqref{fine_LTI} has expected dynamics. To do so, we write down the definition of  $\bs{\theta}_{\mathrm{fine},\bar{\matF}}$ by substituting $\genMat$  in \eqref{cntrlblty_fine_def} with $\bar{\matF}$ (explanation for each step succeeds the equations): 
\begin{align}\label{thetaFineAbar}
	 \nonumber\bs{\theta}_{\mathrm{fine},\bar{\matF}}   
	 &= \mathrm{diag}\left(\displaystyle\sum_{\tau=0}^{\infty} \matFBarNom^{2\tau} \right)
	 \stackrel{(a)}{=} \mathrm{diag}\left(\displaystyle\sum_{\tau=0}^{\infty} (\frac{\overalConstTemp}{\dimMatF}\comFine^\transpose \matCom_{\circ} \comFine)^{2\tau}\right) 
	\\
	\nonumber& = \mathrm{diag}\left(I + (\frac{\overalConstTemp}{\dimMatF})^2\comFine^\transpose \matCom_{\circ} \comFine\comFine^\transpose \matCom_{\circ} \comFine \right.
	\\ \nonumber& \qquad \qquad
	\left.+ (\frac{\overalConstTemp}{\dimMatF})^4\comFine^\transpose \matCom_{\circ} \comFine \comFine^\transpose \matCom_{\circ} \comFine \comFine^\transpose \matCom_{\circ} \comFine \comFine^\transpose \matCom_{\circ} \comFine  + \cdots\right)\\
	\nonumber& = \mbf{1}_{\dimMatF} + \frac{1}{\dimMatF}  \mathrm{diag}\left(\comFine^\transpose {\diagComSize^{-\frac{1}{2}} \left[(\overalConstTemp\diagComSize^{\frac{1}{2}} \matCom_{\circ}\diagComSize^{\frac{1}{2}})^2 + \cdots\right]} \diagComSize^{-\frac{1}{2}} \comFine \right)\\
	& = \mbf{1}_{\dimMatF} + \frac{1}{\dimMatF}  \mathrm{diag}\left(\comFine^\transpose \inCtrlMat\comFine \right)
	\stackrel{\mathrm{(b)}}{=} \mbf{1}_{\dimMatF} + \frac{1}{\dimMatF}\comFine^\transpose{\mathrm{diag}(\inCtrlMat)},
\end{align}
where as defined in \eqref{upsilon_def}
\begin{align}\label{upsilon_def2}
        \nonumber\inCtrlMat & =   \diagComSize^{-\frac{1}{2}} \left(\left[\mbf{I}-(\overalConstTemp\diagComSize^{\frac{1}{2}} \matCom_{\circ}\diagComSize^{\frac{1}{2}})^2\right]^{-1} -\mbf{I}\right) \diagComSize^{-\frac{1}{2}}
        \\
        & =\overalConstTemp^2{\matCom_{\circ} \diagComSize \matCom_{\circ}} (\mbf{I}  -  ( \overalConstTemp{\matCom_{\circ} \diagComSize \matCom_{\circ}})^2)^{-1},
\end{align}
and $\overalConstTemp\triangleq \frac{1}{\frac{\constFine}{\scalingSBM\dimMatF} +\specRadius(\matCom_{\circ} \diagComSize)}=\Oscale(1) $ since $\frac{\constFine}{\dimMatF\scalingSBM}\xrightarrow{n\rightarrow \infty}0$. Step (a) follows from the definition of $\bar{\matF}$ in \eqref{def_coarseMembershipMat} and Lemma~\ref{lemma:spectralAbarEquivalence}; (b) is due to the special structure of $\comFine$ in \eqref{def_scaled_comMemberMat}, where for an arbitrary matrix $\genMat$ of appropriate size the following equality holds:
\begin{align*}
\nonumber [\mathrm{diag}( \comFine^\transpose\genMat \comFine)]_i &=\! \displaystyle\sum_{k,k'\in[\dimMatCom]} \comFine_{ki} \genMat_{k,k'} \comFine_{k'i} 
\!=\! \displaystyle\sum_{k\in[\dimMatCom]} \comFine_{ki} \genMat_{k,k} \comFine_{ki} \\
& = \displaystyle\sum_{k\in[\dimMatCom]} \comFine_{ki}^2 \genMat_{k,k}  \stackrel{\mathrm{(c)}}{=} \comFine_{i} \mathrm{diag}(\genMat).
\end{align*}
Step (c) is true because $\comFine$ is binary.
Combining Lemma~\ref{lemma:groupTheta} and \eqref{thetaFineAbar} with the definition of $\syncComAs$ in \eqref{def_coarseMembershipMat}, yields:
\begin{align}\label{groupAbar}
	\nonumber\bs{\theta}_{\mathrm{group},\bar{\matF}} & = \frac{1}{\protoCovSize }\matLinCoarse\left(\mbf{1}_{\dimMatF} + \frac{1}{\dimMatF}\comFine^\transpose {\mathrm{diag}(\inCtrlMat)} )\right)
	 \\ & 
	 = \frac{1}{\protoCovSize}\left(\mbf{1}_{\dimMatC} +  \frac{1}{\dimMatF}\syncComAs {\mathrm{diag}(\inCtrlMat)} \right).
\end{align}\qed

\vspace{-1em}
\subsection{Theorem
and Proof: 
$\ell_1$ error bound between the group controllability of the true and expected $\matF$}
Define the error ${\Delta(\matF,\bar{\matF})}$ similar to \eqref{eq: error metric MOR}:
\begin{align}\label{eq: error metric AAbar}
	{\Delta(\matF,\bar{\matF})}\triangleq \left\|\frac{\protoCovSize\bs{\theta}_{\mathrm{group},{\matF}}-\mbf{1}_\dimMatC}{\|\protoCovSize\bs{\theta}_{\mathrm{group},{\matF}}-\mbf{1}_\dimMatC]\|_1}-\frac{\protoCovSize\bs{\theta}_{\mathrm{group},\bar{\matF}}-\mbf{1}_\dimMatC}{\|\protoCovSize\bs{\theta}_{\mathrm{group},\bar{\matF}}-\mbf{1}_\dimMatC\|_1}\right\|_1.
\end{align}
The following theorem bounds the difference between the group controllability of the true and expected $\matF$ in \eqref{eq: error metric AAbar}. As hinted in Section~\ref{sec:est_solution}, this difference, when small, will allow the introduction of the candidate estimator in \eqref{eq: candidate estimator} only as a function of $\syncComAs,\matCom$, and $\diagComSize$, which are retrievable from the coarse-scale matrix $\matC$.
\begin{thm}\label{thm:est_cntrl_error}
(\textbf{$\ell_1$ error bound between} $\bs{\theta}_{\mathrm{group},\mathbf{A}}$ and $\bs{\theta}_{\mathrm{group},\bar{\matF}}$): 
 Let ${\Delta(\matF,{\bar{\matF}})}$ be defined as above for $\matF\in \lbrace0,1\rbrace^{\dimMatF\times\dimMatF}$ and $\matC\in \mathbb{R}^{\dimMatC\times\dimMatC}$, and $0<\probUnifSampl,\hoeffProbConst<1$ be two constants. Then, under Assumption~\ref{assump: graph sparsity},  $\exists \dimMatF_0\in \mathbb{N}$ such that for $\dimMatF>\dimMatF_0$
\begin{align}\label{eq:fullUppBoundAAbar}
\nonumber{\Delta(\matF,{\bar{\matF}})} \!&\!\leq \!
 {2\frac{\left(\frac{\constFine}{\scalingSBM\dimMatF} \!+\! \specRadius(\matCom_{\circ} \diagComSize)\right)^2}{\| {\matCom_{\circ} \diagComSize \matCom_{\circ}}\|_{\mathrm{min}}} \!\left[1\!+\!\sqrt{\frac{\dimMatF}{\dimMatC\protoCovSize}}\sqrt{\frac{\log\left(\frac{1}{\probUnifSampl}\right)}{2}}\right]\!\alpha_\dimMatF}
 \\
 & \!=\!{\Oscale\left(\frac{\hoeffCoeffConst}{\scalingSBM}\!+\!\sqrt{\frac{\dimMatF}{\dimMatC\protoCovSize}}\sqrt{\frac{\log\left(\frac{1}{\probUnifSampl}\right)}{2}}\right)},
\end{align}
holds with probability at least $(1-\probUnifSampl)(1-\hoeffProbConst)$, where $\hoeffCoeffConst = \frac{\sqrt{{\log(\frac{\dimMatF^2}{\hoeffProbConst})}/{(2\dimMatF)}}}{\left\|\matCom \diagComSize \matCom\right\|_{\mathrm{min}}}$; $\dimMatF$ and $\dimMatC$ are the dimensions of $\matF$ and $\matC$; $\matCom$ is defined in Def.~\ref{def:generalSBM}; and $\diagComSize=(1/\dimMatF)\comFine\comFine^\transpose$.	 \qed
\end{thm}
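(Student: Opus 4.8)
The plan is to control the scale‑invariant error $\Delta(\matF,\bar{\matF})$ of \eqref{eq: error metric AAbar} by the generic estimate of Proposition~\ref{prop:uppBound_relDiff}, then bound the numerator of that ratio from above (a concentration argument) and the denominator from below (the closed form of Lemma~\ref{lemma:approximate_gAC}). First I would apply Proposition~\ref{prop:uppBound_relDiff} with $\genVec=\protoCovSize\bs{\theta}_{\mathrm{group},\matF}$ and $\genVecOth=\protoCovSize\bs{\theta}_{\mathrm{group},\bar{\matF}}$; the required hypothesis $\genVec,\genVecOth\ge\mbf{1}_\dimMatC$ is immediate from Lemma~\ref{lemma:groupTheta}, since $\protoCovSize\bs{\theta}_{\mathrm{group},\genMat}=\matLinCoarse\,\bs{\theta}_{\mathrm{fine},\genMat}$, every row of $\matLinCoarse$ is a probability vector, and every entry of $\bs{\theta}_{\mathrm{fine},\genMat}$ is at least $1$ by \eqref{gramian_to_diag}. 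This yields
\[
\Delta(\matF,\bar{\matF})\ \le\ 2\,\frac{\norm{\matLinCoarse\big(\bs{\theta}_{\mathrm{fine},\matF}-\bs{\theta}_{\mathrm{fine},\bar{\matF}}\big)}_1}{\norm{\protoCovSize\bs{\theta}_{\mathrm{group},\bar{\matF}}-\mbf{1}_\dimMatC}_1},
\]
where I keep the deterministic $\bar{\matF}$-term in the $\max$ of Proposition~\ref{prop:uppBound_relDiff} precisely because it has a closed form.

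\textbf{Lower-bounding the denominator.} By \eqref{true_groupCntrl}, $\protoCovSize\bs{\theta}_{\mathrm{group},\bar{\matF}}-\mbf{1}_\dimMatC=\tfrac{1}{\dimMatF}\syncComAs\,\mathrm{diag}(\inCtrlMat)$, which is entrywise nonnegative, so its $\ell_1$ norm is $\tfrac1\dimMatF\sum_{i,k}\syncComAs_{ik}[\inCtrlMat]_{kk}$. Since rows of $\matLinCoarse$ and columns of $\comFine$ sum to $1$, $\syncComAs$ is row-stochastic and $\sum_{i,k}\syncComAs_{ik}=\dimMatC$. From the Neumann expansion in \eqref{upsilon_def2} one has $\inCtrlMat\succeq\overalConstTemp^{2}\matCom_{\circ}\diagComSize\matCom_{\circ}$ (each further term is a positive-semidefinite odd power of the PSD matrix $\matCom_{\circ}\diagComSize\matCom_{\circ}=(\diagComSize^{1/2}\matCom_{\circ})^{\transpose}(\diagComSize^{1/2}\matCom_{\circ})$, which also has nonnegative entries), so $[\inCtrlMat]_{kk}\ge\overalConstTemp^{2}[\matCom_{\circ}\diagComSize\matCom_{\circ}]_{kk}\ge\overalConstTemp^{2}\norm{\matCom_{\circ}\diagComSize\matCom_{\circ}}_{\mathrm{min}}$. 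Hence $\norm{\protoCovSize\bs{\theta}_{\mathrm{group},\bar{\matF}}-\mbf{1}_\dimMatC}_1\ge\tfrac{\dimMatC}{\dimMatF}\,\overalConstTemp^{2}\norm{\matCom_{\circ}\diagComSize\matCom_{\circ}}_{\mathrm{min}}$, and recalling $\overalConstTemp=1/(\tfrac{\constFine}{\scalingSBM\dimMatF}+\specRadius(\matCom_{\circ}\diagComSize))$ this is exactly the reciprocal of the leading factor in \eqref{eq:fullUppBoundAAbar}.

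\textbf{Upper-bounding the numerator and combining.} Because the supports $\mathcal{K}_i=\mathrm{supp}(\vecLinCoarse_i)$ are disjoint and each $\vecLinCoarse_i$ has $\protoCovSize$ entries equal to $1/\protoCovSize$, for any $\genVec\in\mathbb{R}^{\dimMatF}$ we get $\norm{\matLinCoarse\genVec}_1\le\tfrac1\protoCovSize\sum_{j\in\mathcal{S}}|\genVec_j|$ with $\mathcal{S}:=\bigcup_i\mathcal{K}_i$, $|\mathcal{S}|=\dimMatC\protoCovSize$. Taking $\genVec=\bs{\theta}_{\mathrm{fine},\matF}-\bs{\theta}_{\mathrm{fine},\bar{\matF}}$, I would invoke the random (uniform) coarse-coverage model together with Hoeffding's inequality (Proposition~\ref{prop:hoeffdingIneq}) to show that $\sum_{j\in\mathcal{S}}|\genVec_j|$ concentrates around $\tfrac{\dimMatC\protoCovSize}{\dimMatF}\norm{\genVec}_1$ up to a deviation of order $\sqrt{\dimMatC\protoCovSize/\dimMatF}\,\sqrt{\log(1/\probUnifSampl)/2}\,\norm{\genVec}_1$ with probability at least $1-\probUnifSampl$; the range of the per-node summands is $O(\norm{\genVec}_1/\dimMatF)$ (they are intra-community-comparable deviations governed by the entrywise $\matF^{2}$ bounds \eqref{AsquareConcentration_UB}--\eqref{AsquareConcentration_LB}), which keeps Hoeffding's range term sufficiently small. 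Finally Lemma~\ref{lma: Gramian differences} controls $\norm{\bs{\theta}_{\mathrm{fine},\matF}-\bs{\theta}_{\mathrm{fine},\bar{\matF}}}_1=\alpha_\dimMatF=\Oscale(\hoeffCoeffConst/\scalingSBM)$ on an event of probability at least $1-\hoeffProbConst$. Dividing the numerator bound by the Step-2 bound and intersecting the two events (probability at least $(1-\probUnifSampl)(1-\hoeffProbConst)$) yields \eqref{eq:fullUppBoundAAbar}; the $\Oscale$ form follows since $\hoeffCoeffConst\to 0$ while $\diagComSize,\matCom_{\circ},\overalConstTemp$ stay $\Oscale(1)$ under Assumption~\ref{assump: graph sparsity}.

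\textbf{Expected main obstacle.} Two steps are delicate. The denominator lower bound is what makes the scale-invariant metric meaningful: it rests on $\norm{\matCom_{\circ}\diagComSize\matCom_{\circ}}_{\mathrm{min}}>0$ (community structure not erased by coarsening) and on the positive-semidefiniteness of $\inCtrlMat$ used to drop the higher-order resolvent terms on the diagonal. More delicate is obtaining the $\sqrt{\dimMatF/(\dimMatC\protoCovSize)}$ (rather than the crude $\dimMatF/(\dimMatC\protoCovSize)$) dependence in the numerator: this needs both the uniform-coverage model and the fact that no single fine node carries more than an $O(1/\dimMatF)$ share of $\norm{\bs{\theta}_{\mathrm{fine},\matF}-\bs{\theta}_{\mathrm{fine},\bar{\matF}}}_1$; a naive $\norm{\cdot}_\infty\le\norm{\cdot}_1$ bound on the Hoeffding range would cost a spurious factor $\sqrt{\dimMatF}$.
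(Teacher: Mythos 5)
Your proposal follows essentially the same route as the paper's proof: reduce $\Delta(\matF,\bar{\matF})$ to a ratio via Proposition~\ref{prop:uppBound_relDiff}, lower-bound the deterministic denominator $\|\protoCovSize\bs{\theta}_{\mathrm{group},\bar{\matF}}-\mbf{1}_\dimMatC\|_1$ through the closed form $\tfrac{1}{\dimMatF}\syncComAs\,\mathrm{diag}(\inCtrlMat)$ and $\|\inCtrlMat\|_{\mathrm{min}}\ge\overalConstTemp^{2}\|\matCom_{\circ}\diagComSize\matCom_{\circ}\|_{\mathrm{min}}$, upper-bound the numerator by a sampling concentration bound over the covered index set $\cup_i\mathcal{K}_i$ combined with Lemma~\ref{lma: Gramian differences} for $\alpha_\dimMatF$, and intersect the two probability events. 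The only cosmetic difference is that you control the Hoeffding range term by an $O(\|\genVec\|_1/\dimMatF)$ per-node bound, whereas the paper keeps $\|\bs{\theta}_{\mathrm{fine},\matF}-\bs{\theta}_{\mathrm{fine},\bar{\matF}}\|_{\mathrm{max}}=\Oscale(1)$ as a separate lemma-level estimate; the argument is otherwise the same.
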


\begin{proof}
We start by providing an upper bound on $\|\bs{\theta}_{\mathrm{fine},\matF}-\bs{\theta}_{\mathrm{fine},\bar{\matF}} \|_{\mathrm{max}}$ which will be shortly used to bound \eqref{eq:fullUppBoundAAbar}. Using the triangle inequality on matrix norm, we have:
\begin{align}\label{maxDiffCtrl_UB1}
        \nonumber \|\bs{\theta}_{\mathrm{fine},\matF}\!-\!\bs{\theta}_{\mathrm{fine},\bar{\matF}} \|_{\mathrm{max}} 
          & \leq \|\bs{\theta}_{\mathrm{fine},\matF} \!-\!\mbf{1}_\dimMatC\|_{\mathrm{max}} \!+\!\|\bs{\theta}_{\mathrm{fine},\bar{\matF}} \!-\!\mbf{1}_\dimMatC\|_{\mathrm{max}}  \\
          \nonumber& \stackrel{(a)}{\leq}  \|(\mbf{I}-\matFnom^2)^{-1}-\mbf{I}\|_{\mathrm{max}} 
          +
          \|\frac{1}{\dimMatF} \comFine^\transpose  {\mathrm{diag}(\inCtrlMat)} \|_{\mathrm{max}}  \\
          \nonumber& \stackrel{(b)}{\leq}  \displaystyle\sum_{\ell=1}^\infty \frac{\|(\matF^2)^{\ell}\|_{\mathrm{max}}}{(\constFine+\specRadius(\matF))^{2\ell}}  +
          \frac{1}{\dimMatF} {\| {\mathrm{diag}(\inCtrlMat)} \|_{\mathrm{max}}} \\
          & \stackrel{(c)}{\leq}  \displaystyle\sum_{\ell=1}^\infty (1+\hoeffCoeffConst)^{\ell} \dimMatF^{\ell} \frac{\maxFunctionInn(\ell)}{\specRadius^{2\ell}(\matF)} 
          + \frac{1}{\dimMatF} {\|\inCtrlMat\|_{\mathrm{max}}},
 \end{align}
where (a) follows from \eqref{theta_fine} and \eqref{thetaFineAbar}; (b) is because $\comFine$ is binary; and finally (c) is due to \eqref{AsquareConcentration_UB} and defining $\maxFunctionInn(\ell)\triangleq\|\left( \comFine^\transpose\matCom \diagComSize  \matCom\comFine +  \mathrm{Diag}[\comFine^\transpose(\matCom \odot(\mathbf{1}_{\dimMatCom,\dimMatCom}\!-\!\matCom) \comSizes)]\right)^{\ell}\|_{\mathrm{max}}$.
We, now, have $\maxFunctionInn(1) = \|\matCom \diagComSize  \matCom \|_{\mathrm{max}} + \|\matCom\comSizes\|_{\mathrm{max}}$, and for all $\ell>1$
\begin{align}\label{maxfunctionInnEll}
         \nonumber \maxFunctionInn(\ell) 
          &\stackrel{(a)}{\leq} \|\comFine^\transpose\matCom \diagComSize  \matCom\comFine +  \mathrm{Diag}[\comFine^\transpose(\matCom \odot(\mathbf{1}_{\dimMatCom,\dimMatCom}\!-\!\matCom) \comSizes)]\|_\infty 
          \|\!\left( \comFine^\transpose\matCom \diagComSize  \matCom\comFine \!+\!  \mathrm{Diag}[\comFine^\transpose(\matCom \odot(\mathbf{1}_{\dimMatCom,\dimMatCom}\!-\!\matCom) \comSizes)]\right)^{\ell\!-\!1}\! \|_{\mathrm{max}} \\
          \nonumber&\stackrel{(b)}{\leq} \!(\| \comFine^\transpose\matCom \diagComSize  \matCom\comFine \|_{\infty}\! +\! \|\matCom \odot(\mathbf{1}_{\dimMatCom,\dimMatCom}\!-\!\matCom) \comSizes)\|_{\mathrm{max}}) \maxFunctionInn(\ell\!-\!1)\\
          &\stackrel{(c)}{\leq} ( {\dimMatF} \|\matCom \diagComSize  \matCom \|_{\mathrm{max}} \!+\! \|\matCom\comSizes\|_{\mathrm{max}}) \maxFunctionInn(\ell-1),
\end{align}
where (a) is due to item~\ref{item:maxInftyNorms} in the matrix norm equivalence list at the beginning of the appendix; (b) follows from the triangle inequality on matrix norms matrix, $\comFine$ being binary, and the definition of $\maxFunctionInn(\ell)$ prior to \eqref{maxfunctionInnEll}; and finally (c) is because of item~\ref{item:InftyNormsqrtMax} of the list at the start of the appendix.
Using \eqref{maxfunctionInnEll}, we upper bound the summation term in \eqref{maxDiffCtrl_UB1}:
\begin{align}\label{summaxfunctionInnEll}
     \nonumber\displaystyle\sum_{\ell=1}^\infty (1+\hoeffCoeffConst)^{\ell} \dimMatF^{\ell} \frac{\maxFunctionInn(\ell)}{\specRadius^{2\ell}(\matF)} 
          &\stackrel{(a)}{\leq} \! \displaystyle\sum_{\ell=1}^\infty (\frac{1\!+\!\hoeffCoeffConst}{1\!-\!\hoeffCoeffConst})^{\ell} \dimMatF^{\ell} \frac{( {\dimMatF} \|\matCom \diagComSize  \matCom \|_{\mathrm{max}} \!+\! \|\matCom\comSizes\|_{\mathrm{max}})^{\ell}}{\!\left(\!\dimMatF^2 \!\specRadius^2\!(\!\matCom \! \diagComSize) \!+\! \dimMatF \!\|\matCom  \odot\!(\!\mathbf{1}_{\dimMatCom,\dimMatCom}\!-\!\matCom) \comSizes\!\|_{\mathrm{max}}\right)^\ell\!}  
          \\
          & =\! \displaystyle\sum_{\ell=1}^\infty \!\left(\! \frac{1\!+\!\hoeffCoeffConst}{1\!-\!\hoeffCoeffConst} \frac{\|\matCom_{\circ} \diagComSize  \matCom_{\circ} \|_{\mathrm{max}} + \frac{1}{{\dimMatF}\scalingSBM} \|\matCom_{\circ}\comSizes\|_{\mathrm{max}}}{ \specRadius^2(\matCom_{\circ} \diagComSize) \!+\! \frac{1}{\!\dimMatF \!\scalingSBM\!} \!\|\!\matCom_{\circ} \!\odot\!(\!\mathbf{1}_{\dimMatCom,\dimMatCom}\!-\!\matCom\!) \comSizes\|_{\mathrm{max}}}\!\right)^\ell,
 \end{align}
Using \eqref{summaxfunctionInnEll}, we can further simplify \eqref{maxDiffCtrl_UB1} as
\begin{align}\label{maxDiffCtrl_UB}
         \nonumber\|\bs{\theta}_{\mathrm{fine},\matF}-\bs{\theta}_{\mathrm{fine},\bar{\matF}} \|_{\mathrm{max}}  
        & \leq
          \frac{\frac{1+\hoeffCoeffConst}{1-\hoeffCoeffConst} \frac{\|\matCom_{\circ} \diagComSize  \matCom_{\circ} \|_{\mathrm{max}} + \frac{1}{{\dimMatF}\scalingSBM} \|\matCom_{\circ}\comSizes\|_{\mathrm{max}}}{ \specRadius^2(\matCom_{\circ} \diagComSize) + \frac{1}{\dimMatF \scalingSBM} \|\matCom_{\circ} \odot(\mathbf{1}_{\dimMatCom,\dimMatCom}\!-\!\matCom) \comSizes\|_{\mathrm{max}}}}{1-\frac{1+\hoeffCoeffConst}{1-\hoeffCoeffConst} \frac{\|\matCom_{\circ} \diagComSize  \matCom_{\circ} \|_{\mathrm{max}} + \frac{1}{{\dimMatF}\scalingSBM} \|\matCom_{\circ}\comSizes\|_{\mathrm{max}}}{ \specRadius^2(\matCom_{\circ} \diagComSize) + \frac{1}{\dimMatF \scalingSBM} \|\matCom_{\circ} \odot(\mathbf{1}_{\dimMatCom,\dimMatCom}\!-\!\matCom) \comSizes\|_{\mathrm{max}}}}
        +
          \frac{1}{\dimMatF} {\|\inCtrlMat\|_{\mathrm{max}}}\\
          & \stackrel{(a)}{=} \Oscale(1).
 \end{align}
(a) in \eqref{maxDiffCtrl_UB} is because $\hoeffCoeffConst$ is very small when $\dimMatF$ is very large, and $\| \inCtrlMat\|_{\mathrm{max}}$ is a constant (c.f. \eqref{upsilon_def2}):
\begin{align}
\nonumber\left\| \inCtrlMat\right\|_{\mathrm{max}} & =  
\left\|\diagComSize^{-\frac{1}{2}} \left(\left[\mbf{I}-(\overalConstTemp\diagComSize^{\frac{1}{2}} \matCom_{\circ}\diagComSize^{\frac{1}{2}})^2\right]^{-1} -\mbf{I}\right) \diagComSize^{-\frac{1}{2}} \right\|_{\mathrm{max}}
\\&
 = \Oscale(1).
\end{align}
We, now, return to bounding $\Delta(\matF,{\bar{\matF}})$. Using \eqref{eq: error metric AAbar} and Lemma~\ref{lemma:groupTheta}, we obtain:
\begin{align}\label{relDiffAAbar}
	\nonumber{\Delta(\matF,{\bar{\matF}})} 
	& = \displaystyle\sum_{i=1}^\dimMatC\vert \frac{\protoCovSize\bs{\theta}_{\mathrm{group},{\matF}}^{(i)}-1}{\|\protoCovSize\bs{\theta}_{\mathrm{group},{\matF}}-\mbf{1}_\dimMatC\|_1}-\frac{\protoCovSize\bs{\theta}_{\mathrm{group},\bar{\matF}}^{(i)}-1}{\|\protoCovSize\bs{\theta}_{\mathrm{group},\bar{\matF}}-\mbf{1}_\dimMatC\|_1} \vert  
	\\
	 \nonumber& = \displaystyle\sum_{i=1}^\dimMatC\left\lvert \frac{\vecLinCoarse_i \bs{\theta}_{\mathrm{fine},\matF}-1}{\|\matLinCoarse \bs{\theta}_{\mathrm{fine},\matF}-\mbf{1}_\dimMatC\|}-\frac{\vecLinCoarse_i \bs{\theta}_{\mathrm{fine},\bar{\matF}}-1}{\|\matLinCoarse \bs{\theta}_{\mathrm{fine},\bar{\matF}}-\mbf{1}_\dimMatC\|_1}\right\rvert 
    \\
	\nonumber& \stackrel{(a)}{\leq}  2 \frac{\|\matLinCoarse \bs{\theta}_{\mathrm{fine},\matF}-\matLinCoarse\bs{\theta}_{\mathrm{fine},\bar{\matF}} \|_1}{\max\left({\|\matLinCoarse \bs{\theta}_{\mathrm{fine},\matF}-1\|_1, \|\matLinCoarse\bs{\theta}_{\mathrm{fine},\bar{\matF}}-1\|_1}\right)}
	\\
	\nonumber& \stackrel{(b)}{\leq}  2 \frac{\frac{1}{\protoCovSize}\| \bs{\theta}_{\mathrm{fine},\matF}-\bs{\theta}_{\mathrm{fine},\bar{\matF}} \|_{1, (\protoCovSize\dimMatC)}}{\max\left({\|\matLinCoarse \bs{\theta}_{\mathrm{fine},\matF}-1\|_{1}, \|\matLinCoarse\bs{\theta}_{\mathrm{fine},\bar{\matF}}-1\|_1}\right)}
	\\ 
	\nonumber& \stackrel{(c)}{\leq}  2 \frac{\frac{1}{\protoCovSize} [\frac{\dimMatC\protoCovSize}{\dimMatF}\| \bs{\theta}_{\mathrm{fine},\matF}-\bs{\theta}_{\mathrm{fine},\bar{\matF}} \|_1 \!+\! \sqrt{\frac{\dimMatC\protoCovSize\log\left(\frac{1}{\probUnifSampl}\right)}{2}}{\| \bs{\theta}_{\mathrm{fine},\matF}\!-\!\bs{\theta}_{\mathrm{fine},\bar{\matF}} \|_{\mathrm{max}}}]}{\max\left({\|\matLinCoarse \bs{\theta}_{\mathrm{fine},\matF}\!-\!1\|_1, \|\matLinCoarse\bs{\theta}_{\mathrm{fine},\bar{\matF}}-1\|_1}\right)}
	\\
	\nonumber& =  2 \frac{ [\frac{\dimMatC}{\dimMatF}\alpha_\dimMatF+\sqrt{\frac{\dimMatC\log\left(\frac{1}{\probUnifSampl}\right)}{2\protoCovSize}} {\| \bs{\theta}_{\mathrm{fine},\matF}-\bs{\theta}_{\mathrm{fine},\bar{\matF}} \|_{\mathrm{max}}}] }{\max\left({\|\matLinCoarse \bs{\theta}_{\mathrm{fine},\matF}-1\|_1, \|\matLinCoarse\bs{\theta}_{\mathrm{fine},\bar{\matF}}-1\|_1}\right)}
	\\
	\nonumber& \stackrel{(d)}{\leq}  
	 2 \frac{ [\frac{\dimMatC}{\dimMatF}\alpha_\dimMatF+\sqrt{\frac{\dimMatC\log\left(\frac{1}{\probUnifSampl}\right)}{2\protoCovSize}} {\| \bs{\theta}_{\mathrm{fine},\matF}-\bs{\theta}_{\mathrm{fine},\bar{\matF}} \|_{\mathrm{max}}}] }{\frac{\dimMatC}{\dimMatF} \frac{1}{\frac{\constFine}{\scalingSBM\dimMatF} + \specRadius(\matCom_{\circ} \diagComSize)} {\|\mathrm{diag}(\inCtrlMat)\|_{\textrm{min}}}}
	\\
	\nonumber&  \stackrel{(e)}{\leq}   
	 2 \frac{(\frac{\constFine}{\scalingSBM\dimMatF} + \specRadius(\matCom_{\circ} \diagComSize))^2}{\| {\matCom_{\circ} \diagComSize \matCom_{\circ}}\|_{\mathrm{min}}} 
	 [\alpha_\dimMatF+\frac{1+\hoeffCoeffConst}{1-\hoeffCoeffConst}\sqrt{\frac{\dimMatF}{\dimMatC\protoCovSize}}\sqrt{\frac{\log\left(\frac{1}{\probUnifSampl}\right)}{2}}{\| \bs{\theta}_{\mathrm{fine},\matF}-\bs{\theta}_{\mathrm{fine},\bar{\matF}} \|_{\mathrm{max}}}] 
	\\
    & \stackrel{(f)}{=}  \Oscale\left(\frac{\hoeffCoeffConst}{\scalingSBM}+\sqrt{\frac{\dimMatF}{\dimMatC\protoCovSize}}\sqrt{\frac{\log\left(\frac{1}{\probUnifSampl}\right)}{2}}\right)
\end{align} 
that holds with probability at least $(1-\hoeffProbConst)(1-\probUnifSampl) \geq  1-\probUnifSampl-\hoeffProbConst$ (c.f. probability of joint events), for sufficiently large $\dimMatF$, i.e. $\exists\dimMatF_0: \dimMatF\geq \dimMatF_0$.
 (a) follows from Proposition~\ref{prop:uppBound_relDiff} where $\genVec$ is replaced with $\matLinCoarse \bs{\theta}_{\mathrm{fine},\matF}$ and $\genVecOth$ with $\matLinCoarse\bs{\theta}_{\mathrm{fine},\bar{\matF}}$; (c) uses \eqref{maxDiffCtrl_UB}; (d) is from \eqref{groupAbar} which proves:
\begin{align}\label{LB_thetaGroupMinus1}
        \nonumber\| \protoCovSize\bs{\theta}_{\mathrm{group},\bar{\matF}}^{(i)}-1 \|_1 & = \left\|\frac{1}{\dimMatF}  \syncComAs {\mathrm{diag}(\inCtrlMat)} \right\|_1
        \\
         \nonumber & \geq  \frac{1}{\dimMatF}  \left\| {\syncComAs \mbf{1}_\dimMatCom} \right\|_1 {\|\inCtrlMat\|_{\mathrm{min}}}
         \geq  \frac{\dimMatC}{\dimMatF}  \|\inCtrlMat\|_{\mathrm{min}} 
        \\ &
         \stackrel{(g)}{=} \Omega\left( \frac{\dimMatC}{\dimMatF} \right),
\end{align} 
where (g) in \eqref{LB_thetaGroupMinus1} follows from 
\begin{align}\label{upsilonnmin}
\nonumber\| \inCtrlMat\|_{\mathrm{min}} & =  \left\| \diagComSize^{-\frac{1}{2}} \left(\left[\mbf{I}\!-\!(\overalConstTemp\diagComSize^{\frac{1}{2}} \matCom_{\circ}\diagComSize^{\frac{1}{2}})^2\right]^{-1} \!-\!\mbf{I}\right) \diagComSize^{-\frac{1}{2}}\right\|_{\mathrm{min}} 
\\&
 = \Omega(1).
\end{align} 
(e) uses the definition of $\inCtrlMat$ in \eqref{upsilon_def}; (f) due to the definition of $\alpha_\dimMatF$ in \eqref{eq: Gramian differences alpha} and an easy-to-verify inequality $\frac{1+\hoeffCoeffConst}{1-\hoeffCoeffConst}\leq 1+4\hoeffCoeffConst$ for $\hoeffCoeffConst<1/2$ that is satisfied for sufficiently large $\dimMatF$; and finally, (b) is the result of the properties of the coarse-measurement matrix in \eqref{linear_coarsening_model}, the following notation definition 
 \begin{align*}
 \left\|\bs{\theta}_{\mathrm{fine},\matF}-\bs{\theta}_{\mathrm{fine},\bar{\matF}} \right\|_{1, (\protoCovSize\dimMatC)} \triangleq \displaystyle\sum_{i \in \displaystyle\cup_{j\in[\dimMatC]}\mathcal{K}_j} \left\lvert\bs{\theta}^{(i)}_{\mathrm{fine},\matF}-\bs{\theta}^{(i)}_{\mathrm{fine},\bar{\matF}}  \right\rvert  
 \end{align*}
for $\mathcal{K}_i=\mathrm{supp}(\vecLinCoarse_i)$ defined in Section~\ref{subsec:probstatement}, and invoking \eqref{subsetRandomSampleSum} which is explained below.

\noindent \textbf{Sum of Random Samples from a set of Data Points:} Using a Hoeffding's bound equivalence for the \textit{sum of random samples of a set of data points} $\lbrace\genVec_i\rbrace_{i\in \indexRandomSet}$ indexed by $\indexRandomSet$, we have \cite[Proposition 1.2]{bardenet2015concentration}
\begin{align}
      \nonumber \mathbb{P}\lbrace \lvert \frac{1}{ \lvert \indexRandomSet \rvert} \displaystyle\sum_{i\in \indexRandomSet} \lvert\genVec_i\rvert- \frac{1}{\lvert \genVec \rvert } \displaystyle\sum_{i\in [\lvert \genVec \rvert]} \lvert\genVec_i\rvert \rvert \leq \constUnifSampl\rbrace 
        & \geq 1 - \exp\left(-\frac{2\lvert \indexRandomSet \rvert \constUnifSampl^2}{(\max(\lvert\genVec\rvert)-\min(\lvert \genVec\rvert))^2}\right) \\
        & \geq 1 - \exp\left(-\frac{2\lvert\indexRandomSet\rvert\constUnifSampl^2 }{\max^2\lvert\genVec\rvert}\right) 
\end{align}
which means with probability at least $1-\probUnifSampl$
\begin{align}\label{subsetRandomSampleSum}
         \nonumber \|\genVec \|_{1,(\protoCovSize\dimMatC)} &\leq \frac{ \lvert \indexRandomSet \rvert}{ \lvert \genVec \rvert}  \|\genVec \|_1 + \sqrt{\frac{\lvert\indexRandomSet\rvert\log\left(\frac{1}{\probUnifSampl}\right)}{2}}{\max\lvert\genVec\rvert}   \\
        & \leq [\frac{ \lvert \indexRandomSet \rvert}{ \lvert \genVec \rvert}  + \sqrt{\frac{\lvert\indexRandomSet\rvert\log\left(\frac{1}{\probUnifSampl}\right)}{2}} \frac{\|\genVec \|_{\mathrm{max}}}{\|\genVec \|_1}]  \|\genVec \|_1.
\end{align}
To simplify the notations, we set $\probUnifSampl=\hoeffProbConst$. This concludes the proof.
\end{proof}

\vspace{-1em}
\subsection{Proof of Theorem \ref{thm: group-coarse-error2}}
The proof of the theorem makes use of Lemma \ref{lma: Gramian differences}. Let $i \in [\dimMatC]$, and recall that $\mbf{B}_i=\mathrm{diag}(\vecLinCoarse_i^\transpose)$ and $\wt{\mbf{B}}_i=\protoCovSize\matLinCoarse\mathrm{diag}(\vecLinCoarse_i^\transpose)$. Consider the following bound using the triangle inequality for matrix norms:
\begin{align}\label{eq: Theorem 1 proof delta bound}
\nonumber\Delta(\matF,\matC)  & \triangleq \left\|\frac{\protoCovSize\bs{\theta}_{\mathrm{group},{\matF}}\!-\!\mbf{1}_{\dimMatC}}{\|\protoCovSize\bs{\theta}_{\mathrm{group},{\matF}}\!-\!\mbf{1}_{\dimMatC}\|_1}\!-\!\frac{\bs{\theta}_{\mathrm{coarse},\matC}\!-\!\mbf{1}_{\dimMatC}}{\|\bs{\theta}_{\mathrm{coarse},\matC}-\mbf{1}_{\dimMatC}\|_1}\right\|_1 \\
\nonumber&  \leq \underbrace{\left\|\frac{\protoCovSize\bs{\theta}_{\mathrm{group},{\matF}}\!-\!\mbf{1}_{\dimMatC}}{\|\protoCovSize\bs{\theta}_{\mathrm{group},{\matF}}-\mbf{1}_{\dimMatC}\|_1}\!-\!\frac{\protoCovSize\bs{\theta}_{\mathrm{group},\bar{\matF}}\!-\!\mbf{1}_{\dimMatC}}{\|\protoCovSize\bs{\theta}_{\mathrm{group},\bar{\matF}}-\mbf{1}_{\dimMatC}\|_1}\right\|_1}_{=\Delta(\matF,\bar{\matF})}
\\
\nonumber & \quad + \underbrace{\left\| \frac{\bs{\theta}_{\mathrm{coarse},\matC}-\mbf{1}_{\dimMatC}}{\|\bs{\theta}_{\mathrm{coarse},\matC}\!-\!\mbf{1}_{\dimMatC}\|_1}\! -\! \frac{\bs{\theta}_{\mathrm{coarse},\bar{\matC}}-\mbf{1}_{\dimMatC}}{\|\bs{\theta}_{\mathrm{coarse},\bar{\matC}}\!-\!\mbf{1}_{\dimMatC}\|_1} \right\|_1}_{=\Delta(\matC,\bar{\matC})} \\
 & \qquad 
+  \underbrace{\left\| \frac{\protoCovSize\bs{\theta}_{\mathrm{group},{\bar{\matF}}}-\mbf{1}_{\dimMatC}}{\|\protoCovSize\bs{\theta}_{\mathrm{group},\bar{\matF}}\!-\!\mbf{1}_{\dimMatC}\|_1}\!-\!\frac{\bs{\theta}_{\mathrm{coarse},\bar{\matC}}-\mbf{1}_{\dimMatC}}{\|\bs{\theta}_{\mathrm{coarse},\bar{\matC}}\!-\!\mbf{1}_{\dimMatC}\|_1} \right\|_1}_{=\Delta({\bar{\matF}},\bar{\matC})}
 \!.
\end{align}
where $\bs{\theta}_{\mathrm{group},\matF}$ and $\bs{\theta}_{\mathrm{coarse},\matC}$ are defined respectively in \eqref{eq: group avg-controllability} and \eqref{eq: coarse avg-controllability}. Similarly, $\bs{\theta}_{\mathrm{group},\bar{\matF}}$ and $\bs{\theta}_{\mathrm{coarse},\bar{\matC}}$ are defined by replacing $\matF$ with $\bar{\matF}$ in \eqref{eq: group avg-controllability}, and $\matC$ with $\bar{\matC}$ in \eqref{eq: coarse avg-controllability}.
An upper bound on $\Delta(\matF,\bar{\matF})$ has been provided in Theorem~\ref{thm:est_cntrl_error}.
We now derive an upper bound on $\Delta(\matC,\bar{\matC})$ with the help of Proposition~\ref{prop:uppBound_relDiff}. 
Replacing $\genMat$ with $\bar{\matC}$ in \eqref{theta_fine} yields $
\bs{\theta}_{\mathrm{coarse},\bar{\matC}} = \mathrm{diag}\left((\mbf{I} - \matCBarNom^2)^{-1} \right)
$. Hence
\begin{align}\label{coarseThetaAtildeBarSum}
\left\| \bs{\theta}_{\mathrm{coarse},\bar{\matC}}-\mbf{1} \right\|_1
\nonumber& = \trace\left[(\mbf{I} - \matCBarNom^2)^{-1}-\mbf{I}\right] \\
\nonumber& = \trace\left[\matCBarNom^2(\mbf{I} - \matCBarNom^2)^{-1}\right] \\
\nonumber& \geq \trace\left[\matCBarNom^2\right] 
 \geq \frac{\trace\left[\left(\syncComAs \matCom \syncComAs^\transpose\right)^2\right]}{(\constCoarse+\specRadius(\syncComAs \matCom \syncComAs^\transpose))^2}\\
& \geq \frac{ \trace\left[(\frac{\syncComAs^\transpose \syncComAs}{\dimMatC}\matCom_{\circ})^2\right] }{\left(\frac{\constCoarse}{\dimMatC\scalingSBM}+   \specRadius\left(\matCom_{\circ} \frac{\syncComAs^\transpose \syncComAs}{\dimMatC}\right)\right)^2},
\end{align}
for the proof of which \eqref{def_coarseMembershipMat} and the normalization factor in \eqref{eq: coarse system} are used.
Replacing $\genVec$ and $\genVecOth$ in Proposition~\ref{prop:uppBound_relDiff} with $\bs{\theta}_{\mathrm{coarse},\matC}$ and  $\bs{\theta}_{\mathrm{coarse},\bar{\matC}}$ yields
\begin{align}\label{tautildefinal}
\nonumber\Delta(\matC,\bar{\matC}) & = \!  
\left\| \frac{\bs{\theta}_{\mathrm{coarse},{\matC}}\!-\!\mbf{1}_{\dimMatC}}{\|\bs{\theta}_{\mathrm{coarse},\matC}\!-\!\mbf{1}_{\dimMatC}\|_1}\!-\!\frac{\bs{\theta}_{\mathrm{coarse},\bar{\matC}}\!-\!\mbf{1}_{\dimMatC}}{\|\bs{\theta}_{\mathrm{coarse},\bar{\matC}}\!-\!\mbf{1}_{\dimMatC}\|_1} \right\|_1  
\\
\nonumber& \leq  2 \frac{\|\bs{\theta}_{\mathrm{coarse},{\matC}}-\bs{\theta}_{\mathrm{coarse},\bar{\matC}} \|_1}{\max\left({\|\bs{\theta}_{\mathrm{coarse},{\matC}}-\mbf{1}_{\dimMatC}\|_1, \|\bs{\theta}_{\mathrm{coarse},\bar{\matC}}-\mbf{1}_{\dimMatC}\|_1}\right)}
\\
\nonumber& \stackrel{(a)}{\leq} 2\frac{\tilde{\alpha}_\dimMatC}{\| \bs{\theta}_{\mathrm{coarse},\bar{\matC}}-\mbf{1}_\dimMatC \|_1}
\\
& \stackrel{(b)}{\leq} \!\underbrace{2 \frac{\!\left(\!\frac{\constCoarse}{\dimMatC\scalingSBM}\!+\!\specRadius\!\left(\! \matCom_{\circ} \frac{\syncComAs^\transpose \syncComAs}{\dimMatC}\!\right)\!\right)\!^2}{  \trace((\frac{\syncComAs^\transpose \syncComAs}{\dimMatC}\matCom_{\circ})^2) }}_{\Oscale(1)} \tilde{\alpha}_\dimMatC
 \!\stackrel{(c)}{=}\! \Oscale\!\left(\!\frac{\tilde{\hoeffCoeffConst}}{\scalingSBM}\!\right),
\end{align}
with probability at least $1-\tilde{\hoeffProbConst}$. Steps (a) and (b) are by substitution respectively from \eqref{eq: Gramian differences alphatilde} and \eqref{coarseThetaAtildeBarSum}; and (c) follows from \eqref{eq: Gramian differences alpha} and the assumption that $\dimMatF,\dimMatC$ are sufficiently large.

Similar to \eqref{eq:fullUppBoundAAbar} and \eqref{tautildefinal}, we  find an upper bound for the remaining error term $\Delta({\bar{\matF}},\bar{\matC})$ in \eqref{eq: Theorem 1 proof delta bound}.
We first derive a closed-form formula for $\bs{\theta}_{\mathrm{coarse},\bar{\matC}}$ similar to that of $\bs{\theta}_{\mathrm{group},{\bar{\matF}}}$ obtained in Lemma~\ref{lemma:groupTheta} (middle steps are removed due to redundancy): 
\begin{align}\label{thetaCoarseAtildeBar}
 \nonumber\bs{\theta}_{\mathrm{coarse},\bar{\matC}}  
	 & \!=\! \mathrm{diag}\left(\displaystyle\sum_{\tau=0}^{\infty} \matCBarNom^{2\tau} \right)
	 \!=\! \mathrm{diag}\left(\displaystyle\sum_{\tau=0}^{\infty} (\frac{\overalConstTempCoarse}{\dimMatC} \syncComAs \matCom_{\circ} \syncComAs^\transpose)^{2\tau}\right) \\
	& \!=\! \mbf{1}_{\dimMatC} + \frac{1}{\dimMatC}\mathrm{diag}\left(\syncComAs \tilde{\inCtrlMat}\syncComAs^\transpose\right),
\end{align}
where 
\begin{align*}
         \tilde{\inCtrlMat} \!\triangleq \!\overalConstTempCoarse^2 \matCom_{\circ} \frac{\syncComAs^\transpose \syncComAs}{\dimMatC} \matCom_{\circ} ( \mbf{I} \! - \! ( \overalConstTempCoarse \frac{\syncComAs^\transpose \syncComAs}{\dimMatC} \matCom_{\circ} )^2 )^{-1},
\end{align*}
and $\overalConstTempCoarse\triangleq 1/\!\left(\frac{\constCoarse}{\scalingSBM\dimMatC} \!+\!\specRadius\!\left(\! \matCom \frac{\syncComAs^\transpose \syncComAs}{\dimMatC}\!\right)\!\right)$.
By substituting \eqref{groupAbar} and \eqref{thetaCoarseAtildeBar} into $\Delta({\bar{\matF}},\bar{\matC})$ in \eqref{eq: Theorem 1 proof delta bound}, we get
\begin{align}\label{biasUB}
        \nonumber\Delta(\bar{\matF},\bar{\matC}) 
        & =\left\|\frac{\protoCovSize\bs{\theta}_{\mathrm{group},{\bar{\matF}}}-\mbf{1}}{\|\protoCovSize\bs{\theta}_{\mathrm{group},{\bar{\matF}}}-\mbf{1}_\dimMatC\|_1}-\frac{\bs{\theta}_{\mathrm{coarse},\bar{\matC}}-\mbf{1}}{\|\bs{\theta}_{\mathrm{coarse},\bar{\matC}}-\mbf{1}_\dimMatC\|_1} \right\|_1 \\
        \nonumber& = \left\|\frac{\frac{1}{\dimMatF}\syncComAs {\mathrm{diag}(\inCtrlMat)}}{\|\frac{1}{\dimMatF}\syncComAs {\mathrm{diag}(\inCtrlMat)}\|_1}-\frac{\frac{1}{\dimMatC} \mathrm{diag}(\syncComAs \tilde{\inCtrlMat}\syncComAs^\transpose)}{\|\frac{1}{\dimMatC} \mathrm{diag}(\syncComAs \tilde{\inCtrlMat}\syncComAs^\transpose)\|_1} \right\|_1  \\
         \nonumber& \stackrel{(a)}{\leq} 2 \frac{\|\syncComAs {\mathrm{diag}(\inCtrlMat)}-\mathrm{diag}(\syncComAs \tilde{\inCtrlMat}\syncComAs^\transpose)\|_1}{\max\left({\|\syncComAs {\mathrm{diag}(\inCtrlMat)}\|_1, \|\mathrm{diag}(\syncComAs \tilde{\inCtrlMat}\syncComAs^\transpose)\|_1}\right)} \\
         \nonumber& \leq 2 \frac{\|\syncComAs {\mathrm{diag}(\inCtrlMat)}-\mathrm{diag}(\syncComAs (\tilde{\inCtrlMat}-\inCtrlMat+\inCtrlMat)\syncComAs^\transpose)\|_1}{\dimMatC\|\inCtrlMat\|_{\mathrm{min}}}\\
         \nonumber& \leq 2 \frac{\|\syncComAs {\mathrm{diag}(\inCtrlMat)}-\mathrm{diag}(\syncComAs\inCtrlMat\syncComAs^\transpose)\|_1+\dimMatC\|\tilde{\inCtrlMat}-\inCtrlMat\|_{\mathrm{max}}}{\dimMatC\|\inCtrlMat\|_{\mathrm{min}}}
         \\
         & \stackrel{(b)}{=} \!\Oscale\!\left(\!\frac{\|\syncComAs {\mathrm{diag}(\inCtrlMat)}\!-\!\mathrm{diag}(\syncComAs\inCtrlMat\syncComAs^\transpose)\|_1}{\dimMatC}\!+\!\|\diagComSize\!-\!\frac{\syncComAs^\transpose\syncComAs}{\dimMatC}\|_{\mathrm{max}}\!\right)\!,
\end{align}
where step (a) follows from similar logic as in Proposition~\ref{prop:uppBound_relDiff}; (b) is proved similar to a later elaborate derivation in Section~\ref{subsec:proof_thm_full_est_cntrl_error} which results in \eqref{Ebarmax}, except that $\hat{\matCom}$ in \eqref{Ebarmax} is replaced with $\matCom$ (i.e. $\errorMat_{\matCom}=0$) and $\hat{\diagComSize}$ in \eqref{Ebarmax} is substituted by $\frac{\syncComAs^\transpose\syncComAs}{\dimMatC}$ (i.e. $\errorMat_{\diagComSize}=\diagComSize\!-\!\frac{\syncComAs^\transpose\syncComAs}{\dimMatC}$).
The statement of the theorem follows with probability at least $(1-\probUnifSampl)(1-\hoeffProbConst)(1-\tilde{\hoeffProbConst})$ (the probability of joint events) that satisfies
\begin{align}
    (1-\probUnifSampl)(1-\hoeffProbConst)(1-\tilde{\hoeffProbConst})\geq 1-\probUnifSampl-\hoeffProbConst-\tilde{\hoeffProbConst},
\end{align}
and by substituting \eqref{eq:fullUppBoundAAbar}, \eqref{tautildefinal}, and \eqref{biasUB} into \eqref{eq: Theorem 1 proof delta bound}. To simplify the notations, we set $\probUnifSampl=\tilde{\hoeffProbConst}=\hoeffProbConst$.
The proof is now complete.\qed


\vspace{-1em}
\subsection{Proof of Theorem \ref{thm:full_est_cntrl_error}}\label{subsec:proof_thm_full_est_cntrl_error}
We start by using the triangle inequality for matrix norms in order to find an upper bound for $\widehat{\Delta}(\matC)$:
\begin{align}\label{full_est_Err}
	\nonumber\widehat{\Delta}(\matC) &= \left\|\frac{\hat{\bs{\theta}}_{\mathrm{group}}-\mbf{1}_{\dimMatC}}{\|\hat{\bs{\theta}}_{\mathrm{group}}-\mbf{1}_{\dimMatC}\|_1}-\frac{\protoCovSize\bs{\theta}_{\mathrm{group},{\matF}}-\mbf{1}_{\dimMatC}}{\|\protoCovSize\bs{\theta}_{\mathrm{group},{\matF}}-\mbf{1}_{\dimMatC}\|_1}\right\|_1 \\
	\nonumber& \leq \underbrace{\left\|\frac{\protoCovSize\bs{\theta}_{\mathrm{group},\bar{\matF}}-\mbf{1}_{\dimMatC}}{\|\protoCovSize\bs{\theta}_{\mathrm{group},\bar{\matF}}-\mbf{1}_{\dimMatC}\|_1}-\frac{\protoCovSize\bs{\theta}_{\mathrm{group},{\matF}}-\mbf{1}_{\dimMatC}}{\|\protoCovSize\bs{\theta}_{\mathrm{group},{\matF}}-\mbf{1}_{\dimMatC}\|_1}\right\|_1}_{={\Delta(\matF,{\bar{\matF}})}} 
	\\ & \qquad  
	+ \underbrace{\left\|\frac{\hat{\bs{\theta}}_{\mathrm{group}}-\mbf{1}_{\dimMatC}}{\|\hat{\bs{\theta}}_{\mathrm{group}}-\mbf{1}_{\dimMatC}\|_1}-\frac{\protoCovSize\bs{\theta}_{\mathrm{group},\bar{\matF}}-\mbf{1}_{\dimMatC}}{\|\protoCovSize\bs{\theta}_{\mathrm{group},\bar{\matF}}-\mbf{1}_{\dimMatC}\|_1}\right\|_1}_{=\widehat{\Delta}(\tilde{\matF}, \bar{\matF})}.
\end{align}
The first term on the RHS of \eqref{full_est_Err} has already been bounded in \eqref{eq:fullUppBoundAAbar}. In the following, we bound the second term in \eqref{full_est_Err} and then combine the two bounds. 
We substitute $\hat{\bs{\theta}}_{\mathrm{group}}^{(i)}$ from the output of Algorithm ~\ref{alg:modular_coarse_ctrl}, and $\bs{\theta}_{\mathrm{group},\bar{\matF}}^{(i)}$ from \eqref{groupAbar}, and $\inCtrlMat$ defined in \eqref{upsilon_def}:
\begin{align}\label{thetaHatAbardiff1}
	\widehat{\Delta}(\matC, \bar{\matF}) 
	= \left\|
\frac{\hat{\syncComAs} {\mathrm{diag}(\hat{\inCtrlMat})}}{\|\hat{\syncComAs} {\mathrm{diag}(\hat{\inCtrlMat})}\|_1 }  - \frac{\syncComAs {\mathrm{diag}(\inCtrlMat)} }{\|\syncComAs {\mathrm{diag}(\inCtrlMat)} \|_1}	
	\right\|_1 .
\end{align}
We set $\hat{\syncComAs}= \syncComAs + \errorMat_{\syncComAs}$, $\hat{\matCom}= \matCom+ \errorMat_{\matCom}$, and  $\hat{\diagComSize}=\diagComSize+\errorMat_{\diagComSize}$ as the error matrices of appropriate sizes. Substitution of these error matrices into \eqref{thetaHatAbardiff1}, successive application of the triangle inequality, and the use of Proposition~\ref{prop:uppBound_relDiff} gives
\begin{align}\label{thetaHatAbardiff2}
	\nonumber\widehat{\Delta}(\matC, \bar{\matF}) &
	 \leq 2 \frac{\|\vert \hat{\syncComAs} {\mathrm{diag}(\hat{\inCtrlMat})}-\syncComAs\mathrm{diag}(\inCtrlMat)\|_1}{\max\left({\|\hat{\syncComAs} {\mathrm{diag}(\hat{\inCtrlMat})}\|_1, \|\syncComAs\mathrm{diag}(\inCtrlMat)\|_1}\right)}
	 \\
	 \nonumber&\leq 2 \frac{\|\vert (\syncComAs + \errorMat_{\syncComAs}) {\mathrm{diag}(\hat{\inCtrlMat})}-\syncComAs\mathrm{diag}(\inCtrlMat)\|_1}{ \|\syncComAs\mathrm{diag}(\inCtrlMat)\|_1}
	 \\
	 \nonumber&\leq 2 \frac{ \| \syncComAs\mathrm{diag}(\hat{\inCtrlMat}-\inCtrlMat)\|_1 + \|\vert \errorMat_{\syncComAs} {\mathrm{diag}(\hat{\inCtrlMat})}\|_1}{ \dimMatC \|\mathrm{diag}(\inCtrlMat)\|_{\mathrm{min}}}
	 \\
	 & \leq 2 \frac{  \|\hat{\inCtrlMat}-\inCtrlMat\|_{\mathrm{max}} + \|\hat{\inCtrlMat}\|_{\mathrm{max}} \|\errorMat_{\syncComAs}\|_{1,1}/\dimMatC }{ \|\inCtrlMat\|_{\mathrm{min}}}
	.
\end{align}
To continue the simplification of $\|\hat{\inCtrlMat}-\inCtrlMat\|_{\mathrm{max}}$ in \eqref{thetaHatAbardiff2}, we bring the two error matrices $\errorMat_{\matCom}$ and $\errorMat_{\diagComSize}$ introduced in the statement of the theorem into play:
\begin{align}\label{thetaHatAbardiff6}
\nonumber(\matCom_{\circ}+ \errorMat_{\matCom})(\diagComSize+\errorMat_{\diagComSize})(\matCom_{\circ} + \errorMat_{\matCom}) 
& 
= \matCom_{\circ} \diagComSize\matCom_{\circ}
+ \mbf{\Gamma}_1, \\
\nonumber(\diagComSize+\errorMat_{\diagComSize}) (\matCom_{\circ}+ \errorMat_{\matCom}) & = \diagComSize\matCom_{\circ} +\mbf{\Gamma}_2,
\\  
\nonumber (\diagComSize^{\frac{1}{2}}+\errorMat_{\diagComSize}^{\frac{1}{2}})(\matCom+ \errorMat_{\matCom})  (\diagComSize^{\frac{1}{2}}+\errorMat_{\diagComSize}^{\frac{1}{2}}) 
    & = \diagComSize^{\frac{1}{2}}\matCom \diagComSize^{\frac{1}{2}}+{\mbf{\Gamma}_3}, \\
    \left( \diagComSize\matCom_{\circ} + \mbf{\Gamma}_2\right)^{2\ell}  & = (\diagComSize\matCom_{\circ})^{2\ell} +  {\mbf{\Gamma}_4},
\end{align} 
where we define
\begin{align*}
\nonumber\mbf{\Gamma}_1 & \triangleq \matCom_{\circ}\diagComSize\errorMat_{\matCom}
+ \matCom_{\circ}\errorMat_{\diagComSize}\matCom_{\circ}
+ \matCom_{\circ}\errorMat_{\diagComSize} \errorMat_{\matCom} 
+\errorMat_{\matCom} \diagComSize\matCom_{\circ}
\\
\nonumber& \qquad +\errorMat_{\matCom} \diagComSize\errorMat_{\matCom}
+\errorMat_{\matCom} \errorMat_{\diagComSize}\matCom_{\circ}
+\errorMat_{\matCom} \errorMat_{\diagComSize} \errorMat_{\matCom}, \\
\mbf{\Gamma}_2 & \triangleq \diagComSize\errorMat_{\matCom} + \errorMat_{\diagComSize}\matCom_{\circ} + \errorMat_{\diagComSize}\errorMat_{\matCom},
\end{align*}
The exact definitions of $\mbf{\Gamma}_3$ and $\mbf{\Gamma}_4$ can be  retrieved by simple expansion, where $\mbf{\Gamma}_3=\Oscale(\errorMat_{\matCom}+\errorMat_{\diagComSize}), \mbf{\Gamma}_4=\Oscale(\mbf{\Gamma}_2)$ when $\errorMat_{\diagComSize},\errorMat_{\matCom}$ are small.
From \eqref{thetaHatAbardiff6}, we have
\begin{align}\label{thetaHatAbardiff8}
    \lvert\specRadius(\matCom \diagComSize+\mbf{\Gamma}_3)-\specRadius(\matCom \diagComSize) \rvert \stackrel{(a)}{\leq}   \|\mbf{\Gamma}_3\|_2 \stackrel{(b)}{\leq} \dimMatCom \|\mbf{\Gamma}_3\|_{\mathrm{max}},
\end{align}
where (a) is due to Proposition~\ref{prop:matrixElemInequExtends2norms}, and (b) uses item~\ref{item:twoNormMaxIneq} of the list in beginning of the appendix.
Substituting \eqref{thetaHatAbardiff6}, \eqref{thetaHatAbardiff8}, \eqref{upsilon_def2}, and the equivalence of \eqref{hat_defs} (similar to \eqref{upsilon_def2}) into \eqref{thetaHatAbardiff2} yields:
\begin{align}\label{Ebarmax2}
\nonumber\| \hat{\inCtrlMat}-\inCtrlMat\|_{\mathrm{max}} 
 & \!=\! \| \hat{\overalConstTemp}^2{\hat{\matCom}\hat{\diagComSize}\hat{\matCom}} (\mbf{I} \! - \!  ( \hat{\overalConstTemp}{\hat{\diagComSize} \hat{\matCom}})^2\!)^{-1} \!\! - \!\overalConstTemp^2{\matCom_{\circ} \diagComSize \matCom_{\circ}} (1\!-\!(\overalConstTemp\!{\diagComSize \!\matCom_{\circ}}\!)^2)^{-1} \|_{\mathrm{max}}\\
 \nonumber &\!=\! \hat{\overalConstTemp}^2\| (\matCom_{\circ} \diagComSize\matCom_{\circ}
+ \mbf{\Gamma}_1) (\mbf{I}  -  ( \hat{\overalConstTemp}{\hat{\diagComSize} \hat{\matCom}})^2)^{-1}
- (\frac{\overalConstTemp}{\hat{\overalConstTemp}})^2 {\matCom_{\circ} \diagComSize \matCom_{\circ}} (1-(\overalConstTemp{\diagComSize \matCom_{\circ}})^2)^{-1} \|_{\mathrm{max}}
\\
\nonumber & = \hat{\overalConstTemp}^2 \| (\matCom_{\circ} \diagComSize\matCom_{\circ}
) {[(\mbf{I}  -  ( \hat{\overalConstTemp}{\hat{\diagComSize} \hat{\matCom}})^2)^{-1} -(\mbf{I}-(\overalConstTemp{\diagComSize \matCom_{\circ}})^2)^{-1}]}
\\ \nonumber & \qquad
 + \mbf{\Gamma}_1 (\!\mbf{I} 
\!-\!  ( \hat{\overalConstTemp}{\hat{\diagComSize} \hat{\matCom}})^2\!)^{-1}
\!-\! \frac{\overalConstTemp^2\!-\!\hat{\overalConstTemp}^2}{\hat{\overalConstTemp}^2} 
{\matCom_{\circ} \diagComSize \matCom_{\circ}} (\mbf{I}\!-\!(\overalConstTemp{\diagComSize \matCom_{\circ}})^2)^{-1} \!\|_{\mathrm{max}}
\\
\nonumber & \stackrel{(c)}{\leq}   \!\left[\!\| \matCom_{\circ} \diagComSize \matCom_{\circ}\!\|_{\mathrm{max}} \| \mbf{\Gamma}_5\!\|_{\mathrm{max}} \!+\! \| \mbf{\Gamma}_1\!\|_{\mathrm{max}} \| (\!1\!-\!(\overalConstTemp{\diagComSize \matCom_{\circ}}\!)^2)^{-1}\|_{\mathrm{max}} \! \right.
\\\nonumber & 
\qquad \quad  
+\left(2\specRadius(\matCom \diagComSize)+ \specRadius(\mbf{\Gamma}_3) +\frac{2\constFine}{\dimMatF}\right) \frac{\|\mbf{\Gamma}_3\|_2}{\left(\frac{\constFine}{\dimMatF} +\specRadius(\matCom \diagComSize)\right)^2}
\left. \| \matCom_{\circ} \diagComSize \matCom_{\circ}\|_{\mathrm{max}} \| (\mbf{I}  -  (\overalConstTemp{\diagComSize \matCom_{\circ}})^2)^{-1}\|_{\mathrm{max}}\right] \dimMatCom \hat{\overalConstTemp}^2\\
 & \stackrel{(d)}{=} \Oscale(\| \mbf{\Gamma}_5\|_{\mathrm{max}} + \| \mbf{\Gamma}_1\|_{\mathrm{max}} +\|\mbf{\Gamma}_3\|_{\mathrm{max}}),
\end{align}
where 
    $\mbf{\Gamma}_5 \triangleq (\mbf{I}  -  ( \hat{\overalConstTemp}{\hat{\diagComSize} \hat{\matCom}})^2)^{-1} -(\mbf{I}-(\overalConstTemp{\diagComSize \matCom_{\circ}})^2)^{-1}$;
(c) follows from the triangle inequality of matrix norms, multiple use of item~\ref{item:maxProdMax} in the list at the beginning of the appendix, \eqref{upsilon_def}, \eqref{hat_defs}, and inequality (a) of \eqref{thetaHatAbardiff8}; and (d) is due to inequality (b) of \eqref{thetaHatAbardiff8}, $\matCom_{\circ},\diagComSize$ being constant values, and for when $\mbf{\Gamma}_1, \mbf{\Gamma}_3,$ and $\mbf{\Gamma}_5$ are small. Next, we further simplify \eqref{Ebarmax2} by finding the max norms of $\mbf{\Gamma}_1, \mbf{\Gamma}_3,$ and $\mbf{\Gamma}_5$. First, we upper bound $\|\mbf{\Gamma}_5\|_{\mathrm{max}}$ in \eqref{Ebarmax2} with the help of Neumann series, \eqref{thetaHatAbardiff6}, and triangle inequality on matrix norms:
\begin{align}\label{ep1max}
\|\mbf{\Gamma}_5\|_{\mathrm{max}}\!\!\!\! \nonumber&=\left\|\displaystyle\sum_{\ell\geq 1} [( \hat{\overalConstTemp}{\hat{\diagComSize} \hat{\matCom}}
)^{2\ell} - (\overalConstTemp{\diagComSize \matCom_{\circ}})^{2\ell}]\right\|_{\mathrm{max}} \\
\nonumber& =\left\|\displaystyle\sum_{\ell\geq 1} \hat{\overalConstTemp}^{2\ell} \mbf{\Gamma}_4- (\overalConstTemp^{2\ell}-\hat{\overalConstTemp}^{2\ell})({\diagComSize \matCom_{\circ}})^{2\ell}\right\|_{\mathrm{max}} \\
\nonumber& \leq \left\|\hat{\overalConstTemp}^2(1-\hat{\overalConstTemp}^2)^{-1} \mbf{\Gamma}_4 \right\|_{\mathrm{max}} 
\!+\! \left\|({\diagComSize \matCom_{\circ}})^2 [1\!-\!({\diagComSize \matCom_{\circ}})^2]^{-1}\right\|_{\mathrm{max}} \displaystyle\max_{\ell\geq 1}\lvert \overalConstTemp^{2\ell}\!-\!\hat{\overalConstTemp}^{2\ell} \rvert \\
& 
\stackrel{(a)}{=} \!\Oscale\left(\| \mbf{\Gamma}_4\|_{\mathrm{max}}\!+\!\left\lvert \overalConstTemp\!\shortminus\!\hat{\overalConstTemp}\right\rvert\right)
\!\stackrel{(b)}{=}\! \Oscale(\| \mbf{\Gamma}_2\|_{\mathrm{max}}\!+\!\|\mbf{\Gamma}_3\|_{\mathrm{max}}\!),
\end{align}
where (a) holds when $\mbf{\Gamma}_2,\mbf{\Gamma}_3$ are small, and because $\overalConstTemp, \hat{\overalConstTemp}$ are constant.
Similar to \eqref{ep1max}, and by using triangle inequality on matrix norms as well as item~\ref{item:maxProdMax} in the list at the beginning of the appendix, $\| \mbf{\Gamma}_1\|_{\mathrm{max}}, \| \mbf{\Gamma}_2\|_{\mathrm{max}}$ are bounded from above by
\begin{align}\label{gammaMax}
 \| \mbf{\Gamma}_1\|_{\mathrm{max}}  
\nonumber &  \leq\!
\dimMatCom^2\left(\| \matCom_{\circ}\diagComSize\|_{\mathrm{max}}\| \errorMat_{\matCom}\|_{\mathrm{max}} \!+\!  \| \matCom_{\circ}\|_{\mathrm{max}}\| \errorMat_{\diagComSize}\|_{\mathrm{max}}\| \matCom\|_{\mathrm{max}} \right.
\\  \nonumber& \qquad \quad
\!+\! \| \matCom_{\circ}\|_{\mathrm{max}} \| \errorMat_{\diagComSize}\|_{\mathrm{max}} \| \errorMat_{\matCom} \|_{\mathrm{max}}\!+\!\| \errorMat_{\matCom}\|_{\mathrm{max}} \| \diagComSize\matCom_{\circ}\|_{\mathrm{max}}
\\  \nonumber&  \qquad \qquad
 \!+\!\| \errorMat_{\matCom}\!\|_{\mathrm{max}} \|  \diagComSize\|_{\mathrm{max}}\| \errorMat_{\matCom}\!\|_{\mathrm{max}}
 \!+\!\| \errorMat_{\matCom}\!\!\|_{\mathrm{max}} \| \errorMat_{\diagComSize}\!\!\|_{\mathrm{max}} \| \matCom_{\circ}\!\|_{\mathrm{max}}   
 +\left.\| \errorMat_{\matCom}\|_{\mathrm{max}} \| \errorMat_{\diagComSize}\|_{\mathrm{max}} \| \errorMat_{\matCom}\|_{\mathrm{max}}\right)\\
 \nonumber& \leq
\!\dimMatCom^2\!\left(\! \| \errorMat_{\matCom}\!\|_{\mathrm{max}}\!+\!  \| \errorMat_{\diagComSize}\!\|_{\mathrm{max}}\! +\! \| \errorMat_{\diagComSize}\!\|_{\mathrm{max}} \| \errorMat_{\matCom}\! \|_{\mathrm{max}}\!\!+\!\| \errorMat_{\matCom}\!\|_{\mathrm{max}} \right.
\\ \nonumber& \qquad \qquad
\left.+\| \errorMat_{\matCom}\!\|_{\mathrm{max}}^2
+  \| \errorMat_{\matCom}\!\|_{\mathrm{max}} \| \errorMat_{\diagComSize}\!\|_{\mathrm{max}}+\| \errorMat_{\matCom}\!\|_{\mathrm{max}}^2 \| \errorMat_{\diagComSize}\!\|_{\mathrm{max}}\right) \\
 & = \Oscale(\| \errorMat_{\matCom}\|_{\mathrm{max}} + \| \errorMat_{\diagComSize}\|_{\mathrm{max}}),
\end{align} 
and
\begin{align}\label{frackEmax}
\nonumber\| \mbf{\Gamma}_2\|_{\mathrm{max}} & = \|  \diagComSize\errorMat_{\matCom} + \errorMat_{\diagComSize}\matCom_{\circ} + \errorMat_{\diagComSize}\errorMat_{\matCom}\|_{\mathrm{max}}\\
\nonumber& \leq \dimMatCom( \|  \diagComSize\|_{\mathrm{max}}\| \errorMat_{\matCom}\|_{\mathrm{max}} + \| \errorMat_{\diagComSize}\|_{\mathrm{max}}\| \matCom_{\circ}\|_{\mathrm{max}} 
+ \| \errorMat_{\diagComSize}\|_{\mathrm{max}}\| \errorMat_{\matCom}\|_{\mathrm{max}})\\
\nonumber& \stackrel{(b)}{\leq} \!\dimMatCom \!( \| \errorMat_{\matCom}\!\|_{\mathrm{max}} \!+ \!\| \errorMat_{\diagComSize}\!\|_{\mathrm{max}} \!+\! \| \errorMat_{\diagComSize}\!\|_{\mathrm{max}}\| \errorMat_{\matCom}\|_{\mathrm{max}}\!) \\
& = \Oscale(\| \errorMat_{\matCom}\|_{\mathrm{max}} + \| \errorMat_{\diagComSize}\|_{\mathrm{max}}),
\end{align}
where (b) in \eqref{frackEmax} is because of the definitions of $\diagComSize$ and $\matCom_{\circ}$ respectively in Assumption~\ref{assump: graph sparsity} and the paragraph following \eqref{eq: SBM}; and the scaling behaviour both in \eqref{gammaMax} and \eqref{frackEmax} is for when $\errorMat_{\matCom}, \errorMat_{\diagComSize}$ are small.

By substituting \eqref{ep1max}, \eqref{gammaMax}, and \eqref{frackEmax} into \eqref{Ebarmax2},
we get:
\begin{align}\label{Ebarmax}
\| \hat{\inCtrlMat}-\inCtrlMat\|_{\mathrm{max}} = \Oscale(\| \errorMat_{\matCom}\|_{\mathrm{max}} + \| \errorMat_{\diagComSize}\|_{\mathrm{max}}),
\end{align}
Substituting \eqref{Ebarmax} into the original error in \eqref{thetaHatAbardiff2}, yields:
\begin{align}\label{finalErrorHatAtildeAbar}
	\widehat{\Delta}(\matC, \bar{\matF})
	= \Oscale\left(\frac{\|\errorMat_{\syncComAs}\|_{1,1}}{\dimMatC}+\| \errorMat_{\matCom}\|_{\mathrm{max}} + \| \errorMat_{\diagComSize}\|_{\mathrm{max}}\right),
\end{align}
because $\|\inCtrlMat\|_{\mathrm{min}}=\Omega(1)$ and $\|\hat{\inCtrlMat}\|_{\mathrm{max}}=\Oscale(1)$, per their definitions in \eqref{upsilon_def} and \eqref{hat_defs}.
The proof concludes by merging \eqref{finalErrorHatAtildeAbar} and Theorem~\ref{thm:est_cntrl_error} and substituting into \eqref{full_est_Err}.\qed
}

{\small
	\bibliography{main}
}
	

\end{document}